\documentclass[a4paper,11pt]{article}

\usepackage{graphicx}
\usepackage{amsmath}
\usepackage{amssymb}
\usepackage{paralist}
\usepackage{mathtools,mathrsfs}
\usepackage{csquotes}
\usepackage{thm-restate}
\usepackage{tikz}
\usepackage{microtype}
\usepackage[margin=1in]{geometry}

\usepackage{hyperref}
\usepackage[nameinlink]{cleveref}
\hypersetup{
	colorlinks,
	linkcolor={red!50!black},
	citecolor={blue!50!black},
	urlcolor={blue!80!black}
}

\usepackage{amsthm}
\newtheorem{theorem}{Theorem}

\newtheorem{lemma}{Lemma}
\newtheorem{corollary}{Corollary}

\allowdisplaybreaks
\sloppy

\title{Scheduling on a Stochastic Number of Machines}

\author{Moritz {Buchem}\thanks{Technische Universität München, Germany. Email: \texttt{m.buchem@tum.de}.}
\and
Franziska {Eberle}\thanks{{Technische Universität Berlin, Germany. Email: \texttt{f.eberle@tu-berlin.de}. Supported by the Dutch Research Council (NWO),
Netherlands Vidi grant 016.Vidi.189.087.}}
\and
Hugo Kooki {Kasuya Rosado}\thanks{Technische Universität München, Germany. Email: \texttt{hugo.rosado@tum.de}.}
\and
Kevin {Schewior}\thanks{University of Southern Denmark, Odense, Denmark. Email: \texttt{kevs@sdu.dk}. Supported by the Independent Research Fund Denmark, Natural Sciences, grant DFF-0135-00018B.}
\and
Andreas {Wiese}\thanks{Technische Universität München, Germany. Email: \texttt{andreas.wiese@tum.de}.
}}

\date{}

\newcommand{\bags}{\mathcal{B}}
\newcommand{\calL}{\mathcal{L}}
\newcommand{\guess}{\mathcal{M}}
\newcommand{\opt}{\textsc{Opt}}
\newcommand{\optbags}[1][]{\mathcal{B}^*_{#1}}
\newcommand{\guessbags}[1][]{\hat{\mathcal{B}}_{#1}}
\newcommand{\bigO}{\mathcal{O}}
\newcommand{\eps}{\varepsilon}
\renewcommand{\epsilon}{\varepsilon}
\newcommand{\sand}{\textup{sand}}
\newcommand{\kmax}{K}
\newcommand{\noBags}{M_{k+1,\ldots,\kmax}}
\global\long\def\N{\mathbb{N}}%
\global\long\def\ALG{\textsc{Alg}}%

\newcommand{\profit}{\mathrm{profit}}
\newcommand{\calC}{\mathcal{C}}

\usepackage[textsize=tiny,textwidth=3.5cm,color=Red]{todonotes}

\begin{document}
\pagenumbering{gobble} 

\maketitle           

\begin{abstract}
We consider a new
scheduling problem on parallel identical machines in which the number of machines is initially not known, but it follows a given probability distribution.
Only after all jobs are assigned to a given number of bags, the actual number of machines is revealed. Subsequently, the jobs need to be assigned to the machines without splitting the bags. This is the stochastic version of a related problem introduced by Stein and Zhong [SODA 2018, TALG 2020] and it is, for example, motivated by bundling jobs that need to be scheduled by data centers.
We present two PTASs for the stochastic setting, computing job-to-bag assignments that (i) minimize the expected maximum machine load and (ii) maximize the expected minimum machine load (like in the Santa Claus problem), respectively. The former result follows by careful enumeration combined with known PTASs. For the latter result, we introduce an intricate dynamic program that we apply to a suitably rounded instance.
\end{abstract}

\newpage
\pagenumbering{arabic}

\section{Introduction}
Stein and Zhong~\cite{SteinZ20} recently introduced scheduling problems in which the number of the given
(identical) machines is initially unknown. Specifically, all jobs must be assigned to 
a given number of bags before the actual number of machines is revealed. When that happens, the bags cannot be split anymore and they have to be assigned to the machines as whole bags, optimizing some objective function. Such problems arise, e.g., when ``bundling'' jobs to be scheduled in data centers, where the number of available machines depends on external factors such as momentary demand~\cite{EberleGMMZ23,SteinZ20}.

The aforementioned work (as well as follow-up works~\cite{BalkanskiOSW22,EberleHMNSS23}) focused on the robustness of a job-to-bag assignment. Specifically, they assumed a worst-case number of machines and compared their solution with the in-hindsight optimum for the respective objective function, i.e., a direct job-to-machine assignment without bags.
In contrast to this information-theoretic question, we assume that a distribution of the number of machines is known (e.g., from historical data) and aim to \emph{efficiently compute} a job-to-bag assignment that optimizes the objective function \emph{in expectation}---a common formulation of 
the objective function for stochastic (scheduling) problems~\cite{EberleGMMZ23,Gupta0NS21,GuptaMZ23,ImMP15,KleinbergRT00,Sethuraman16,SkutellaSU16}. In other words, we use a ``fairer'' benchmark for our algorithms, allowing us to sidestep the strong lower bounds by~\cite{SteinZ20}. We are the first to study this novel type of scheduling problem, already proposed in~\cite{SteinZ20}. 

We consider two classic objective functions: minimizing the maximum machine load (makespan) and maximizing the minimum machine load (Santa Claus). Both objectives are well-studied in the deterministic setting, the special case of our problem with one-point distributions, i.e., the distributions in which only one event happens with positive probability. These problems are well understood from a classic approximation perspective: both are known to be strongly NP-hard~\cite{GareyJ75}
and both 
admit Polynomial-Time Approximation Schemes (PTASs)~\cite{HochbaumS87,Woeginger97}, i.e., polynomial-time $(1+\eps)$-approximation algorithms for any $\eps>0$. 
In this paper, surprisingly, we recover the same state for the stochastic versions by designing a PTAS in both cases. 
In contrast to the deterministic setting, we require different techniques tailored to each objective function. 
For the makespan minimization objective, our main technical contribution  is the application and analysis of techniques that have previously been used in approximation schemes for \emph{deterministic} scheduling and packing problems. Our approach for the Santa Claus objective is technically much more intriguing and requires the careful set-up of a novel dynamic program (DP) in order to control its size.

Our results are in stark contrast to classic stochastic scheduling problems, where in some cases the currently best known approximation algorithms have distribution-dependent or even linear guarantees~\cite{ImMP15,SkutellaSU16}. Even for better-understood problems such as load balancing of stochastic jobs on deterministic machines, previous approaches~\cite{EberleGMMZ23,Gupta0NS21,KleinbergRT00} rely on concentration bounds which inherently prohibit approximation ratios of $1+\eps$ for arbitrarily small $\eps>0$.
Moreover, PTASs for stochastic load balancing on deterministic machines are only known for identical machines and Poisson distributed jobs~\cite{de2020efficient,ibrahimpur2021minimum}. We hope that our positive results inspire research for other scheduling problems with a stochastic number of machines, even for (in the classic model with jobs with stochastic processing times) notoriously hard objective functions such as expected weighted sum of completion times. 

\subsection{Our Contribution and Techniques}
Our first result is the following.

\begin{theorem}\label{theo:makespan}
    There is a PTAS for the problem of computing the job-to-bag assignment that minimizes the expected maximum machine load.
\end{theorem}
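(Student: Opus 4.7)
The plan is to reduce the stochastic problem to a polynomial number of deterministic makespan instances and invoke the classical PTAS of Hochbaum and Shmoys as a subroutine. Let $\opt$ denote the optimal expected makespan, let $\mathcal B^*$ denote an optimal bag assignment, and write $T_k^* := T_{m_k}(\mathcal B^*)$ so that $\opt = \sum_k q_k T_k^*$.

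First I would do standard preprocessing: guess $\opt$ up to a $(1+\eps)$ factor; since $\opt \in [\max_j p_j, \sum_j p_j]$, this costs an $O(\log n/\eps)$ overhead. Then I would collapse the support of the distribution by bucketing the distinct machine counts $m_k$ geometrically into $K' = O(\log n/\eps)$ buckets, representing each bucket by the smallest $m_k$ in it. Because $T_{m}(\mathcal B)$ is non-increasing in $m$ for every fixed $\mathcal B$, using the smallest representative is pessimistic and loses only a factor $1+\eps$, which is recovered via the subsequent PTAS. Scenarios with negligible probability can be absorbed into the overall $(1+\eps)$-slack using a crude bound on $T_k^*$.

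The core idea is then to enumerate the vector $(T_k)_{k=1}^{K'}$ of targeted per-scenario makespans, each rounded to a power of $1+\eps$ in $[\max_j p_j, \sum_j p_j]$. By the monotonicity $T_{m_{k}}(\mathcal B^*) \ge T_{m_{k+1}}(\mathcal B^*)$, the relevant vectors are non-increasing, and the number of such monotone profiles of length $K'$ over $V = O(\log n/\eps)$ discrete levels is $\binom{K'+V}{K'} = n^{O(1/\eps)}$, which is polynomial for constant $\eps$. For each profile, the task reduces to deciding whether there is a bag assignment $\mathcal B$ that, for every $k$, admits a schedule on $m_k$ machines with makespan at most $(1+\eps) T_k$. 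I would solve this with a configuration-style approach: classify jobs as \emph{big} or \emph{small} with respect to the targets, enumerate the multiset of big jobs per bag (only $n^{O(1/\eps)}$ configurations after standard rounding), and assign small jobs via an LP relaxation whose rounding incurs at most an additive $\eps T_k$ load per bag. Once $\mathcal B$ is fixed, one invokes the Hochbaum--Shmoys PTAS in each scenario to turn it into a near-optimal bag-to-machine schedule.

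The main obstacle is the \emph{simultaneous} feasibility across scenarios: the single chosen bag assignment must be schedulable within the target in every $k$, not just one. This forces the configuration enumeration to be carried out jointly over all $k$, with a careful dovetailing of rounding errors both for the bag sizes and for the per-scenario schedules produced by the Hochbaum--Shmoys PTAS. Once this is set up, summing the per-scenario bounds gives an expected makespan of at most $(1+O(\eps))\opt$, which becomes a $(1+\eps)$-approximation after rescaling $\eps$.
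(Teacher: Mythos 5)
Your proposal has two genuine gaps, the first of which is fatal.

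\textbf{The geometric bucketing of machine counts is not a $1+\eps$-preserving step.} You claim that replacing each scenario $m$ by the smallest machine count $m'$ in its geometric bucket loses only a factor $1+\eps$. But the optimal makespan is not multiplicatively Lipschitz in the number of machines: with $m$ unit-size jobs and $m$ machines the optimal makespan is $1$, while with $m-1$ machines it is $2$, even though $m/(m-1)=1+o(1)$. Thus the modified instance's optimum can exceed the true optimum by a constant factor (up to $2$), no matter how fine the bucketing. Monotonicity of $T_m(\bags)$ only gives you an \emph{upper bound} on the true cost of whatever $\bags$ the sub-procedure returns; it does not control the gap between modified and true optima, which is what the $1+\eps$ guarantee requires. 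Since the number of scenarios can be $\Theta(n)$, without this compression your profile enumeration $\binom{K'+V}{K'}$ is only quasi-polynomial, so the approach as stated does not yield a PTAS.

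\textbf{The feasibility oracle for a profile is never actually constructed.} You correctly identify that the hard part is deciding, for a fixed target vector $(T_k)$, whether a \emph{single} bag partition exists that is schedulable within $(1+\eps)T_k$ on $m_k$ machines simultaneously for all $k$. ``Classify jobs as big or small with respect to the targets'' is not well-defined when the $T_k$ span many scales, and the coupling between the job-to-bag assignment and the $K'$ different bag-to-machine packings is exactly where the difficulty lies; the proposal only names this obstacle, it does not resolve it.

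For contrast, the paper never touches the scenarios or the per-scenario targets. Instead it guesses the multiset of \emph{bag sizes} directly, at resolution $1+\eps$, and the crucial observation is that bags of size below $\eps C$ (with $C$ a constant-factor estimate of the largest bag size derived from a known upper bound on $\opt$) may be treated as ``sand'': only their number is enumerated, not their individual sizes. This caps the number of distinct relevant size classes at $\bigO(1/\eps^2)$, hence a polynomial number $n^{\bigO(1/\eps^2)}$ of guesses. Feasibility of a guess is checked once with the variable-size bin-packing PTAS of Hochbaum--Shmoys, and then the guess is scored by running the identical-machines makespan PTAS separately in each of the $M\le n$ scenarios and summing; no bucketing or joint configuration argument across scenarios is needed. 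Your route inverts the order (targets first, bag sizes second) and thereby forces a coupled feasibility problem across scenarios that you do not solve; the paper's route keeps the scenarios decoupled.
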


We first guess the bag sizes of the optimal solution up to a factor of $1+\eps$. For each guess, we check whether there is a corresponding assignment of the jobs to the bags (up to a factor of $1+\eps$),
using the PTAS for bin packing with variable sizes~\cite{HochbaumS88}. Among the guesses that fulfill this condition, we can select the (approximately) best guess using the PTAS for makespan minimization~\cite{HochbaumS87}.

For this approach to yield a PTAS, we need to bound the number of guesses by a polynomial (in the input length). First note that it is straightforward to get down to a \emph{quasi-polynomial} number of guesses (and thus a QPTAS). The approach is to disregard jobs of size $(\eps/n)\cdot p_{\max}$ where $p_{\max}$ is the largest processing time; indeed, for any solution, such jobs make up at most an $\eps$-fraction of the objective-function value. The resulting number of possible guesses for a single bag size is then logarithmic in $n$, leading to a quasi-polynomial number of guesses for the (multi-)set of bag sizes. To get a \emph{polynomial} bound (and thus a PTAS), we make the following crucial observation.
Let $C$ be an estimate for the largest \emph{bag} size, up to a constant factor. While bags of size $\bigO(\eps C)$ cannot be disregarded, it is enough to know their \emph{number} rather than approximate size.
Intuitively, when computing a bag-to-machine assignment, these bags are treated like ``sand'', i.e., as infinitesimal jobs of total volume equal to the total volume of those bags. The number of possible (rounded) bag sizes is hence constant, leading to a polynomial number of guesses for the (multi-)set of bag sizes.

Our second result is significantly harder to achieve.

\begin{restatable}{theorem}{thmsanta}
    \label{thm:santa}
    There is a PTAS for the problem of computing the job-to-bag assignment that maximizes the expected minimum machine load.
\end{restatable}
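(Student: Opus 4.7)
The plan is to combine the rounding strategy used for the makespan version with a carefully structured dynamic program that controls all bag-to-machine assignments simultaneously. Let $m_1 < \ldots < m_K$ be the support of the distribution over machine counts with probabilities $\pi_k$, and let $T$ be the prescribed number of bags. For any job-to-bag assignment, write $\tau_k$ for the Santa Claus value achieved under $m_k$ machines; this sequence is non-increasing in $k$, so the optimal profile $(\tau_1^*,\ldots,\tau_K^*)$ is a non-increasing step function.

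First I would perform the usual preprocessing: scale so that the largest job size is $1$, round every job size to a power of $1+\eps$, discard realizations with $m_k > T$ (which contribute zero), and round each optimal target $\tau_k^*$ down to a power of $1+\eps$. This leaves $\bigO(\log n/\eps)$ distinct job sizes and $\bigO(\log n/\eps)$ distinct candidate values for each $\tau_k^*$. Since the sequence $(\tau_k^*)$ is non-increasing, it has at most $\bigO(\log n/\eps)$ break points. Grouping realizations whose $m_k$ and $\pi_k$ fall into matching $(1+\eps)$-geometric buckets reduces the number of ``relevant'' realizations to $\bigO((\log^2 n)/\eps^2)$ at the cost of an $\eps$-fraction of the objective, and then I would enumerate a near-optimal target profile in polynomial time.

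The heart of the proof is a dynamic program that, given a guessed target profile $(\tau_1^*,\ldots,\tau_K^*)$, decides whether a job-to-bag assignment realizing it exists. I would process jobs in decreasing size order and maintain, as state, a compact profile of the bags constructed so far: for each rounded bag-load class, the number of bags in that class, together with the residual job mass still to be placed. Keeping this state polynomial relies on two structural observations. First, bags of load well below the smallest target $\tau_K^*$ behave like sand in the sense of \cref{theo:makespan}: they can be merged with other bags without affecting any target beyond a $(1+\eps)$ factor. Second, bags of load above the largest target $\tau_1^*$ each occupy a full machine under every realization, so only their count---not their exact size---matters. The intermediate bag classes, of which there are only $\bigO(\log n/\eps)$, can be shown to contain at most polynomially many instances in a near-optimal solution, which bounds the DP table.

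The main obstacle I anticipate is that a single bag may act as sand under realizations with few machines but need to occupy a full machine under realizations with many machines; the ``sand'' and ``large'' categories from the makespan argument do not split the bags cleanly across all $m_k$ simultaneously. Setting up the DP state so that feasibility under \emph{all} realizations is verified at once, while keeping the state polynomial, is the novel part. I expect this to require augmenting each bag with a ``commitment label'' indicating the coarsest realization at which it is scheduled on its own, together with a structural lemma bounding the number of distinct labels used in a near-optimal solution. Once the DP returns a feasible bag profile for the guessed targets, the corresponding job-to-bag assignment can be reconstructed and the true expected value certified, within a $(1+\eps)$ factor, by running the deterministic Santa Claus PTAS of Woeginger for each realization.
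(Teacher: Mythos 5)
Your plan shares a family resemblance with the paper's --- round sizes to powers of $1+\eps$, control a DP over bag-size scales, and recognize that some bags act as ``sand'' while others each occupy their own machine --- but several of the load-bearing claims do not hold, and the one step you flag as the crux is left unresolved.

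First, the claim that scaling and rounding yields only $\bigO(\log n / \eps)$ distinct job sizes is false in general: the number of distinct rounded sizes is $\bigO(\log(p_{\max}/p_{\min})/\eps)$, and $p_{\max}/p_{\min}$ can be exponential in the input length. Unlike makespan minimization, you cannot simply drop jobs of size below $(\eps/n)p_{\max}$ for Santa Claus (the paper gives an explicit counterexample: one job of size $1$ and $M-1$ jobs of size $\eps/(2M)$). Reducing to polynomially bounded processing times is where the paper invests substantial effort (\Cref{lem:rounding}): it uses a random-offset argument to carve the range of possible $\opt(\optbags,m)$ values into intervals separated by multiplicative gaps, argues that scenarios landing in the gaps and jobs in carefully chosen ``head gaps'' can be dropped, and then stitches the resulting per-interval subinstances back together with a \emph{separate} dynamic program. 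Your proposal skips this step entirely, so the rest of the argument is working with an instance that may not have the structure it assumes.

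Second, the DP state you describe --- ``for each rounded bag-load class, the number of bags in that class'' --- is not polynomial. With $L$ load classes each holding up to $M\le n$ bags, the state space is $n^L$; with $L=\bigO(\log n/\eps)$ that is quasi-polynomial, and the observation that each class individually has polynomially many bags does not rescue this. The paper avoids this blowup by decomposing by scale: each DP cell is local to a single interval $I_k$, and the only information passed across scales is $\bigO_\eps(1)$ numbers (how many larger bags exist in total, how much small-job volume $S$ must be reserved for them, how many jobs of each size in $I_k$ are reserved via the $a_\ell$'s, and the machine-count threshold $m_{\min}^{(k)}$). The key enabling observation, which your ``commitment label'' gesture is reaching toward but does not establish, is that once $m > m_{\max}^{(k)}$, every bag whose size lies in $I_{k'}$ for $k' \geq k$ is optimally scheduled alone on its own machine, so its exact size becomes irrelevant and only the count matters. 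That decoupling is exactly what lets the DP stay polynomial; your sketch names the obstacle but does not produce the structural lemma that resolves it, and indeed explicitly says you ``expect this to require'' a lemma you have not proved.

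Finally, enumerating an explicit target profile $(\tau_1^*,\ldots,\tau_K^*)$ is not how the paper proceeds (it only guesses, per interval, the largest $m$ with $\opt(\optbags,m)\in I_k$, and otherwise evaluates the achievable value directly), and your claimed reduction to $\bigO((\log^2 n)/\eps^2)$ ``relevant realizations'' by bucketing $m_k$ and $\pi_k$ is unjustified and not needed in the paper's route. As it stands, the proposal does not constitute a proof: it is missing the reduction to polynomial sizes, the polynomial DP, and the decoupling lemma.
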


One may be tempted to try a similar approach as for our first result. Even getting a QPTAS is, however, not possible in the same way as one cannot simply disregard jobs of size $(\eps/n)\cdot p_{\max}$. Consider an instance in which the number of machines is deterministically $M$ and in which there are one job of size $1$ and $M-1$ jobs of size $\eps/(2M)$. Here, ignoring the jobs of size $\eps/(2M)$ leads to $M-1$ empty bags, yielding an objective function value of $0$ instead of the optimal value $\eps/(2M)$.

Also, it is no longer true either that for bags of size $\bigO(\eps C)$ (where $C$ is the size of a largest bag) it is enough to know their total number: Consider an instance with optimal objective-function value $\opt$ and add one huge job of size $\opt/\eps^2$ to the set of jobs and one machine to each scenario. 
Clearly, this new instance has maximum bag size $C \geq \opt/\eps^2$ while the optimal objective-function value does not change since this huge job can safely be packed in its private bag and scheduled on its private machine. (However, crucially for the PTAS for makespan minimization, adding this huge job there would change the objective-function value.) 
In this example, the probability that scenarios with optimal objective-function value much smaller than $\eps C$ occur is $1$. 
To obtain a $(1+\eps)$-approximation, however, one still has to compute a $(1+\eps)$-approximation for the original instance, for which the sizes of bags of size $\bigO(\eps C)$ \emph{are} relevant. 
Of course, the issue with this particular instance could be avoided by removing the huge job in a pre-processing step. However, by concatenating the above original instance at super-constantly many different scales, one can create a new instance where one essentially has to identify the ``relevant scales'' in a preprocessing step.

In some sense, the first step of proving \Cref{thm:santa} is addressing precisely the problem of identifying the correct scales: 
We show that, at a loss of $1+\bigO(\eps)$ in the approximation guarantee, the problem can be reduced to the case of polynomially bounded processing times that are all powers of $1+\eps$. 
To do so, we define suitable (non-trivial) subproblems and assemble them to a global solution with a dynamic program (DP). This approach can be seen as a simpler version of our approach for polynomially bounded processing times, which we focus on in the~following.

Our general approach is to divide the range of possible bag sizes into intervals that contain $\bigO_\eps(1)$ possible approximate bag sizes each. For each such interval, 
it is then possible to guess the set of bags of the respective size in polynomial time. Considering the same range for sizes of \emph{jobs}, rather than bags, we would also be able to guess the assignment of these jobs to these bags.
Observe that a job may, of course, be assigned to a bag whose size lies in a different interval than the job's size. However, we argue that the precise assignment is only relevant when these intervals are neighboring intervals and, hence, can be guessed in polynomial time. 
If this is not the case, i.e., if jobs are assigned to a bag whose size lies in a much larger interval, then such jobs are sufficiently small for us to only consider their total volume.
The resulting parameters, such as number of bags created so far and the assignment of smaller jobs to larger intervals, through which the subproblems corresponding to the intervals interact, are kept track of by a DP. While it is straightforward to keep track of all bags from larger intervals as well as the assignment of jobs from these intervals to the bags, it is not clear how to do this with a polynomial-time DP. 
In particular, when we consider bag sizes of one interval, we still need to remember previously defined bags of much larger sizes with a super-constant number of possibilities for these sizes.
In fact, we show that it is sufficient to keep track of a constant number of parameters that capture all necessary information about larger intervals. Using that the processing times are polynomially bounded, we can bound the size of the DP table by a polynomial in the encoding of the input.

When considering a DP cell corresponding to some interval and guessing bag sizes along with the other parameters implied by the discussion above, we need to evaluate the quality of this guess. To do so, we guess additional parameters (also kept track of by the DP). The main observation is as follows. Suppose that, in addition to the aforementioned parameters, we know the relevant range of the number of machines and the bag sizes from the next-lower interval. For each number of machines in the aforementioned range, we
assign each bag
\begin{enumerate}
    \item[(i)] from a higher interval to one machine each,
    \item[(ii)] from the two currently relevant intervals optimally, and
    \item[(iii)] from the lower intervals fractionally (the total volume of these bags can be approximated). 
\end{enumerate}
The reason we may do so is that the bags assigned in (i) are large enough to
assign enough load to
an entire machine and the bags assigned in (iii) are small enough to be considered~fractional.

We remark that for both problems considered, a PTAS is the best possible approximation algorithm achievable when the number of bags (and machines) is part of the input, unless $\text{P} = \text{NP}$: Since their strongly $\text{NP}$-hard deterministic counterparts~\cite{garey1978strong} are special cases of the stochastic problems, neither makespan minimization nor Santa Claus on stochastic machines admits fully polynomial time approximation schemes (FPTASs) unless $\text{P} = \text{NP}$. If, however, the number of bags (and machines) is not part of the input, i.e., a constant, a FPTAS can be designed by directly guessing the bag sizes approximately, i.e., up to a factor of $1+\epsilon$, in polynomial time and using known FPTASs to compute a job-to-bag assignment based on these bag sizes~\cite{epstein2010maximizing,sahni1976algorithms}.

Stein and Zhong~\cite{SteinZ20} also considered a third objective function, minimizing the difference between the maximum and the minimum machine load. Any polynomial-time approximation algorithm (in the multiplicative sense) is, however, impossible here unless $\text{P} = \text{NP}$. Indeed, already in the deterministic case, it is strongly NP-hard to decide whether the optimal objective-function value is 0 (as can be seen, e.g., by a straightforward reduction from 3-Partition).

\subsection{Further Related Work}

We first review the literature on the aforementioned information-theoretic question in which one compares with the in-hindsight optimum. Since this benchmark is stronger than ours and the upper bounds are obtained through polynomial-time algorithms, the upper bounds carry over to our setting as guarantees of polynomial-time approximation algorithms. Specifically, for makespan, Stein and Zhong~\cite{SteinZ20} showed how to compute for any $\eps>0$ a job-to-bag assignment whose cost is guaranteed to be a factor of at most $5/3+\eps$ away from the cost of the in-hindsight optimum. They also showed an impossibility of $4/3$. When all jobs have infinitesimal size, the best-possible guarantee is $(1+\sqrt{2})/2\approx 1.207$~\cite{EberleHMNSS23,SteinZ20}. For Santa Claus and infinitesimal jobs, the best-possible guarantee is $2\ln2\approx1.386$~\cite{SteinZ20}.

This model has been generalized in two directions. First, Eberle et al.~\cite{EberleHMNSS23} considered arbitrary machine \emph{speeds} (rather than just $0$ or $1$) that are revealed after the bags have been created. They gave a guarantee of $2-1/m$ with respect to the in-hindsight optimum and improved guarantees for special cases. Second, Balkanski et al.~\cite{BalkanskiOSW22} considered the problem with arbitrary speeds in the algorithms-with-predictions framework.

In the majority of the scheduling literature, stochastic uncertainty refers to uncertainty in the processing times of the jobs (see~\cite{Sethuraman16} for a survey and~\cite{EberleGMMZ23,Gupta0NS21,GuptaMZ23} for some recent works). The literature on stochastic uncertainty, even uncertainty in general, in the machines is much more scattered. Stadje considered the unrecoverable breakdown on a single machine caused by stochastic jobs~\cite{Stadje95}. Temporary machine unavailability has also been studied in~\cite{DiedrichJST09}.

\section{Preliminaries}

Formally, we are given a job set~$J = [n] := \{1,\ldots,n\}$, where each job has a \emph{processing time} $p_j$, and a maximum number of machines~$M$. For each $1 \leq m \leq M$, we are given its \emph{probability}~$q_m$, where~$\sum_{m=1}^M q_m = 1$. We want to find a partition of the job set~$J$ into~$M$ sets, called \emph{bags}. We denote the set of bags by~$\bags$. For a bag~$B \in \bags$, let~$p(B) = \sum_{j \in B} p_j$ denote its \emph{size}. We typically say for $j \in B$ that $j$ is \emph{packed} in bag~$B$.

Clearly, if~$M \geq n$, we can pack every job in its own private bag, and the problem becomes trivial. Hence, we assume from now on that $M < n$. 

We denote by $\opt(\bags,m)$ the optimal objective function value for a given set of bags $\bags$ and a scenario with $m$ machines, that is, $\opt(\bags,m)$ denotes the maximum or minimum machine load of an optimal bag-to-machine assignment, or \emph{schedule}, respectively.  
The objective is to find a partition or set of bags~$\bags$ that optimizes $\sum_{m=1}^M q_m \opt(\bags,m)$. We denote a fixed optimal set of bags
by~$\optbags$ and its objective function value by $\opt := \sum_{m=1}^M q_m \opt(\optbags,m)$. 

As discussed above, the problems we consider are generalizations of strongly NP-hard problems.
Thus, unless $\text{P} = \text{NP}$, we cannot expect to find~$\optbags$ in polynomial time. Hence, we are interested in \emph{polynomial-time approximation schemes} (PTASs), i.e., for each~$\eps >0$, a polynomial-time $(1+\eps)$-\emph{approximation algorithm}. Such an algorithm is required to return a partition of the job set~$J$ into~$M$ bags, denoted by~$\bags$, that satisfies 
$\sum_{m=1}^M q_m \opt(\bags,m) \leq (1+\eps) \opt$ for makespan, and $ \sum_{m=1}^M q_m \opt(\bags,m) \geq \frac1{1+\eps} \opt$ for Santa Claus.

\section{Minimizing the maximum machine load}\label{sec:makespan}

In this section we design and analyze our polynomial-time approximation scheme for the setting of makespan minimization: 
For a given number of machines~$m$ and a set~$\bags$ of bags, we want to find an assignment of bags to machines that minimizes the maximum total size of bags assigned to any machine. 

\subsection{Algorithm}\label{subsec:Alg}

Let $\eps>0$; we will give a polynomial-time algorithm that achieves an approximation ratio of $1+\bigO(\eps)$. This algorithm finds a good estimate of the optimal bag sizes in~$\optbags$. To this end, we show later that the maximum size of a bag in~$\optbags$ is at most~$4C$, where 
\[
    C := \sum_{m=1}^M q_m \max \Big\{ \max_{j \in J} p_j, \, \frac1m \sum_{j \in J} p_j \Big\}. 
\]
We say a bag~$B$ in $\optbags$ is \emph{regular} if its size is at least~$\eps C$ 
or if there is at least one job of size at least~$\eps^2 C$ packed in~$B$. 
For~$\ell \in \calL := \{ \lfloor \log_{1+\eps} (\eps^2 C) \rfloor, \lfloor \log_{1+\eps} (\eps^2 C) \rfloor+1, \ldots,  \lceil \log_{1+\eps}(4C) \rceil \} $, the algorithm \emph{guesses} the number $M_\ell$ of optimal bags with $p(B) \in \big[(1+\eps)^{\ell},(1+\eps)^{\ell+1} \big)$. 

Further, it \emph{enumerates} all possible numbers $M_{\sand}$ of bags of size at most 
$(1+\eps)\eps C$, called \emph{sand bags}. These sand bags do not directly correspond to optimal bags, but instead can pack all jobs not packed in regular bags in \opt. 

Clearly, a guess $(M_\ell)_{\ell \in \mathcal L}$ combined with $M_{\sand}$ sand bags does not necessarily guarantee that it is \emph{feasible}, i.e., that $M_{\sand} + \sum_{\ell \in \mathcal L} M_\ell \leq M$ and that there is a partition of $J$ into bags such that there are at most $M_\ell$ bags with sizes in $\big[(1+\eps)^{\ell},(1+\eps)^{\ell+1} \big)$ and $M_{\sand}$ bags of size at most~$(1+\eps)\eps C$.
Thus, the algorithm ignores all combinations of $(M_\ell)_{\ell \in \calL}$ and~$M_{\sand}$ with more than~$M$ bags. If the total number of bags is at most~$M$, the algorithm uses the PTAS by Hochbaum and Shmoys~\cite{HochbaumS88} for bin packing with variable bin sizes to check if there is a feasible packing of jobs into the bags as follows: The input is $\eps$ as approximation parameter, an item of size $p_j$ for each job~$j \in J$, $M_{\ell}$ bins of size $(1+\eps)^{\ell+1}$ for any $\ell\in\calL$, and $M_{\sand}$ bins of size~$(1+\eps)\eps C$. If the guess is feasible, the PTAS is guaranteed to return an item-to-bin (here a job-to-bag) assignment that violates the bin sizes by at most a factor~$(1+\eps)$.

If all jobs can be packed by the above PTAS, the algorithm evaluates the current guess by computing a $(1+\eps)$-approximation of $\opt((M_\ell)_{\ell \in \calL\cup\{\sand\}}, m)$, where we overload notation and let $\opt((M_\ell)_{\ell \in \calL\cup\{\sand\}}, m)$ denote the minimum makespan for a set of bags consisting of $M_\ell$ bags of size $(1+\eps)^{\ell+1}$, for any $\ell\in\calL$, and $M_{\sand}$ bags of size~$(1+\eps)\eps C$.
We denote the makespan of this~$(1+\eps)$-approximation by~$z((M_\ell)_{\ell \in \calL\cup\{\sand\}},m)$ and compute it by running the PTAS by Hochbaum and Shmoys~\cite{HochbaumS87} for makespan minimization on identical machines with approximation parameter~$\eps$, $M_\ell$ jobs with processing time $(1+\eps)^{\ell+1}$, $M_{\sand}$ jobs with processing time~$(1+\eps)\eps C$, and~$m$ machines. 

The algorithm returns a feasible 
minimizer of $\sum_{m=1}^M q_m z((M_\ell)_{\ell \in \calL\cup\{\sand\}},m)$.

\subsection{Analysis}

In this section, we analyze the algorithm designed in the previous section. We start by justifying our bound on the maximum bag size before we argue that there exists a guess that is similar to the optimal set $\optbags$ in terms of the bag size and objective-function value. Last, we evaluate the running time of the algorithm and conclude with the proof of \Cref{theo:makespan}. For formal proofs see \Cref{app:makespan}.

We begin by justifying our assumption to only consider bags of size at most $4 C = 4\sum_{m=1}^M q_m \max \big\{ \max_{j \in J} p_j, \, \frac1m \sum_{j \in J} p_j \big\}$: By \cite{SteinZ20}, $4C$ is an upper bound on~$\opt$. As the largest bag size lower bounds $\opt(\bags,m)$ in scenario~$m$, this implies the next lemma.  
\begin{restatable}{lemma}{lemmakespanlargestbagsize}\label{lem:largest-bag-size}
    No optimal solution uses bags of size greater than $4C$.
\end{restatable}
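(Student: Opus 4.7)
The plan is a direct contradiction argument that leverages the cited bound $\opt \leq 4C$ of Stein and Zhong~\cite{SteinZ20}. I would suppose, for the sake of contradiction, that some optimal solution $\optbags$ contains a bag $B^{\star}$ with $p(B^{\star}) > 4C$.

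The key observation is that bags are indivisible: in every scenario $m \in \{1,\ldots,M\}$, whichever machine receives $B^{\star}$ must carry its full size, so $\opt(\optbags, m) \geq p(B^{\star})$. Weighting by $q_m$ and summing (using $\sum_{m=1}^M q_m = 1$) preserves this lower bound and gives
\[
    \opt \;=\; \sum_{m=1}^M q_m \, \opt(\optbags, m) \;\geq\; p(B^{\star}) \;>\; 4C,
\]
which contradicts $\opt \leq 4C$. Hence no bag in $\optbags$ exceeds size~$4C$.

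The only nontrivial ingredient is the invocation of $\opt \leq 4C$, which relies on the fact that, for each scenario $m$, the quantity $\max\{\max_{j \in J} p_j,\, \frac{1}{m}\sum_{j \in J} p_j\}$ is the classical lower bound on the deterministic makespan, so that $C$ is essentially the expected scenario-wise lower bound; Stein and Zhong exhibit a single bag assignment whose expected makespan is within a constant factor of $C$, and $4C$ is a convenient explicit form. Beyond citing this external estimate, the argument is a one-line pigeonhole on $B^{\star}$ that does not require any auxiliary construction, so I do not anticipate any technical obstacle.
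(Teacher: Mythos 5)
Your argument is correct and is essentially the paper's own proof: lower-bound $\opt(\optbags,m)$ by the size of the oversized bag in every scenario, sum with weights $q_m$, and contradict $\opt\leq 4C$. The paper merely spells out the derivation of $\opt\leq 4C$ (combining Stein–Zhong's Theorem~3.10 with the Graham-type bound on $\opt(J,m)$) rather than invoking it directly as a cited fact, but the contradiction step is identical.
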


Fix a set of bags~$\optbags$ 
with objective-function value $\opt$. By \Cref{lem:largest-bag-size}, the maximum bag size is at most~$4C$. The algorithm guesses a set of bag sizes similar to the bag sizes in~$\optbags$. 

Based on $\optbags$ we define a ``good'' guess $(\hat M_\ell)_{\ell \in \calL\cup\{\sand\}}$, i.e., a set of possible bag sizes, as follows: Let~$\optbags[R]$ denote the set of regular bags, i.e., the set of bags in $\optbags$ that pack at least one job of size at least~$\eps^2 C$ or have size at least~$\eps C$. 
\begin{itemize}
    \item For $\ell \in \calL$, let $\hat M_\ell$ be the number of regular bags in $\optbags[R]$ with $p(B) \in \big[(1+\eps)^{\ell},(1+\eps)^{\ell+1} \big)$.
    \item Set  $\hat M_{\sand} = \Big\lceil \frac{\sum_{B \in \optbags\setminus\optbags[R]} p(B) }{\eps C} \Big\rceil$; recall that sand bags have size at most  
    $(1+\eps)\eps C$. 
\end{itemize}

Since the sizes of regular bags are only rounded up, the bags in $(\hat M_\ell)_{\ell \in \calL}$ can pack the same subset of jobs as $\optbags[R]$. 
Since the volume of any sand bag has been increased by a $(1+\eps)$-factor as opposed to $\eps C$ and the size of any job not packed in a regular bag is at most $\eps^2 C$, we show that the bag-size vector $\hat \guess : = (\hat M_\ell)_{\ell \in \calL\cup\{\sand\}}$ is a possible and feasible guess.

\begin{restatable}{lemma}{lemmakespanfeasibleguesspackswell}
\label{lem:feasible-guess_packed-well}
        The above defined vector $\hat \guess$ is a feasible guess of the algorithm. The jobs can be packed into a set of bags consisting of $\hat M_\ell$ bags with size $(1+\eps)^{\ell + 1}$ for~$\ell \in \calL$ and $\hat M_{\sand}$ bags with size~$(1+\eps)\eps C$.
\end{restatable}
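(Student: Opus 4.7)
The plan is to verify the two components of the lemma separately. For the bag-count part, by definition $\sum_{\ell\in\calL}\hat{M}_\ell = |\optbags[R]|$. Since every non-regular bag $B \in \optbags\setminus\optbags[R]$ has $p(B) < \eps C$, their aggregate size is strictly less than $\eps C\cdot|\optbags\setminus\optbags[R]|$, whence $\hat{M}_{\sand} \leq |\optbags\setminus\optbags[R]|$. As $\optbags$ partitions $J$ into $M$ bags, summing the two counts yields $\hat{M}_{\sand}+\sum_{\ell\in\calL}\hat{M}_\ell\leq M$, so $\hat{\guess}$ passes the bag-count feasibility check performed by the algorithm.

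For the packing part, I treat regular and non-regular bags separately. Any regular bag $B\in\optbags[R]$ satisfies $p(B)\in[\eps^2 C,4C]$: the lower bound holds since either $p(B)\geq\eps C$ or $B$ contains a job of size at least $\eps^2 C$, while the upper bound follows from \Cref{lem:largest-bag-size}. Hence the unique index $\ell$ with $p(B)\in\bigl[(1+\eps)^{\ell},(1+\eps)^{\ell+1}\bigr)$ lies in $\calL$, and since $p(B)<(1+\eps)^{\ell+1}$ the jobs of $B$ can be assigned wholesale to one of the $\hat{M}_\ell$ bags of size $(1+\eps)^{\ell+1}$. For the jobs contained in non-regular bags, each such job has size at most $\eps^2 C$ (otherwise its host bag would be regular), so I apply First Fit to these jobs using the $\hat{M}_{\sand}$ bags of size $(1+\eps)\eps C$: whenever a job fails to fit into the currently open sand bag, that bag already holds load strictly larger than $(1+\eps)\eps C-\eps^2 C=\eps C$. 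Therefore all but at most one sand bag carry load above $\eps C$, so the number of sand bags required is at most $\lceil V/(\eps C)\rceil = \hat{M}_{\sand}$, where $V$ denotes the total size of the non-regular bags.

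The only delicate point is the First Fit step: the capacity $(1+\eps)\eps C$ of a sand bag together with the bound $\eps^2 C$ on the size of individual jobs placed in non-regular bags yields slack precisely $\eps C$, exactly matching the denominator in the definition of $\hat{M}_{\sand}$. Everything else reduces to straightforward bookkeeping with the rounding of bag sizes.
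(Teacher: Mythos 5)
Your proof is correct and follows essentially the same strategy as the paper: the bag-count argument is identical (each non-regular bag has size below $\eps C$, so $\hat M_{\sand}\le|\optbags\setminus\optbags[R]|$), and regular bags are handled by direct wholesale reassignment. The only cosmetic difference is in packing the non-regular jobs into sand bags: the paper first packs them fractionally, filling each sand bag to exactly $\eps C$, and then rounds each split job to its starting bag, while you run a greedy (Next/First Fit) pass and bound the bag count via the observation that every bag except possibly the last ends up with load above $\eps C$. Both variants rest on the same key fact that non-regular jobs have size at most $\eps^2 C$, which is exactly the slack $(1+\eps)\eps C-\eps C$ in a sand bag; they are interchangeable and yield the same bound $\lceil V/(\eps C)\rceil=\hat M_{\sand}$.
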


Combining this lemma with Theorem~$2$ of \cite{HochbaumS88}, we get the next corollary. 

\begin{corollary}\label{cor:HS_bin-packing-PTAS}
    The PTAS by~\cite{HochbaumS88} returns a packing of all jobs into a set of bags with $M_\ell$ bags of size $(1+\eps)^{\ell + 2}$ for~$\ell \in \calL$ and $M_{\sand}$ bags of size~$(1+\eps)^2\eps C$. 
\end{corollary}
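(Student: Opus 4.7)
The plan is to apply the PTAS of Hochbaum and Shmoys~\cite{HochbaumS88} for bin packing with variable bin sizes directly to the feasible guess $\hat \guess = (\hat M_\ell)_{\ell \in \calL \cup \{\sand\}}$ produced by \Cref{lem:feasible-guess_packed-well}. First, I would invoke that lemma to conclude that the full job set $J$ admits a packing into $\hat M_\ell$ bins of size $(1+\eps)^{\ell+1}$ for each $\ell \in \calL$ together with $\hat M_{\sand}$ bins of size $(1+\eps)\eps C$. This is exactly the existence hypothesis of Theorem~2 of \cite{HochbaumS88} for the bin-packing instance that the algorithm in \Cref{subsec:Alg} feeds into the PTAS when it considers the guess $\hat \guess$ in its enumeration.

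Next, I would apply the conclusion of Theorem~2 of \cite{HochbaumS88}: with approximation parameter $\eps$, the PTAS returns, in polynomial time, a packing of all items into the prescribed multiset of bins while violating each bin's nominal capacity by at most a factor of $1+\eps$. Multiplying $(1+\eps)^{\ell+1}$ by $1+\eps$ yields $(1+\eps)^{\ell+2}$, and multiplying $(1+\eps)\eps C$ by $1+\eps$ yields $(1+\eps)^2 \eps C$; these are precisely the enlarged bag sizes asserted in the corollary.

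There is essentially no technical obstacle here: the statement is a routine composition of \Cref{lem:feasible-guess_packed-well} with the off-the-shelf guarantee of \cite{HochbaumS88}. The only point worth double-checking is that the bin-packing instance on which the PTAS is run is the one for which the lemma establishes feasibility, which is immediate from the description of the algorithm in \Cref{subsec:Alg}, and that the enumeration of the algorithm does reach the specific guess $\hat \guess$, which holds because $\hat \guess$ respects the ranges over which $(M_\ell)_{\ell \in \calL}$ and $M_{\sand}$ are enumerated.
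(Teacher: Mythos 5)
Your proof is correct and follows exactly the route the paper intends: the paper states the corollary immediately after the remark ``Combining this lemma with Theorem~$2$ of \cite{HochbaumS88}, we get the next corollary,'' and gives no further proof, so your elaboration (feasibility of the bin-packing instance from \Cref{lem:feasible-guess_packed-well}, then the $(1+\eps)$-capacity-violation guarantee of \cite{HochbaumS88}, then the arithmetic $(1+\eps)^{\ell+1}\cdot(1+\eps)=(1+\eps)^{\ell+2}$ and $(1+\eps)\eps C\cdot(1+\eps)=(1+\eps)^2\eps C$) is precisely the intended argument.
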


To prove the next lemma, we first assign the regular bags in the same way as in the optimal solution and assign the sand bags one by one to the currently least loaded machine. For bounding the makespan in a given scenario~$m$, we distinguish whether a regular bag determines the makespan (which increases the makespan by at most a factor $(1+\eps)$ compared to $\opt(\optbags,m)$) or whether a sand bag determines the makespan. In the latter case, we use a volume bound to upper bound this sand bag's starting time (losing at most a factor $(1+\eps)$) 
and amortize its maximum size, i.e., $(1+\eps)\eps C$, over all scenarios, using that $\sum_{m=1}^M q_m =1$. 

\begin{restatable}{lemma}{lemmakespangoodguess}
\label{lem:good-guess}
        $\hat \guess = (\hat M_\ell)_{\ell \in \calL\cup\{\sand\}}$ satisfies
    \(
        \sum_{m = 1}^M q_m {\opt}(\hat \guess, m) \leq (1+5\eps){\opt}. 
    \) 
\end{restatable}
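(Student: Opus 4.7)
The plan is to build, for each scenario $m$, an explicit bag-to-machine assignment for $\hat\guess$, bound its makespan by $(1+\eps)\opt(\optbags,m)+O(\eps C)$, average over $m$, and absorb the additive $O(\eps C)$ term into $O(\eps)\opt$ via the elementary bound $C\le\opt$.

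For a fixed scenario $m$, I first fix an optimal schedule of $\optbags$ on $m$ machines and build an assignment for $\hat\guess$ in two phases. In phase~1, using that $\hat M_\ell$ equals the number of bags in $\optbags[R]$ whose original size falls in $[(1+\eps)^\ell,(1+\eps)^{\ell+1})$, I match these one-to-one with the bags of size $(1+\eps)^{\ell+1}$ in $\hat\guess$ and place each on the same machine its counterpart occupies in the optimal schedule. In phase~2, I place the $\hat M_{\sand}$ sand bags of size $(1+\eps)\eps C$ greedily, each on a currently least-loaded machine.

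To bound the resulting load $L_{i^*}$ on the makespan-attaining machine $i^*$, I split it into a regular part and a sand part. Since each regular bag grew by a factor at most $1+\eps$, the regular part of $L_{i^*}$ is at most $(1+\eps)$ times the regular-bag load of $i^*$ in the optimal schedule, hence at most $(1+\eps)\opt(\optbags,m)$. If $i^*$ carries no sand bag, we are done. Otherwise, let $B^*$ be the last sand bag placed on $i^*$; by the greedy rule its load just before that placement was minimum, hence at most $V/m$, where $V$ denotes the total volume of bags in $\hat\guess$. Using $\hat M_{\sand}\le V_{\sand}/(\eps C)+1$ with $V_{\sand}:=\sum_{B\in\optbags\setminus\optbags[R]}p(B)\le\sum_j p_j$, I obtain $V\le(1+\eps)\sum_j p_j+(1+\eps)\eps C$. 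Combining with $\sum_j p_j/m\le\opt(\optbags,m)$ and $1/m\le 1$ yields
\[
L_{i^*}\le(1+\eps)\opt(\optbags,m)+\tfrac{(1+\eps)\eps C}{m}+(1+\eps)\eps C\le(1+\eps)\opt(\optbags,m)+2(1+\eps)\eps C.
\]

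Finally, $p_{\max}\le\opt(\optbags,m)$ and $\sum_j p_j/m\le\opt(\optbags,m)$ give $C\le\sum_m q_m\opt(\optbags,m)=\opt$. Averaging the per-scenario bound with weights $q_m$ (which sum to $1$) therefore produces $\sum_m q_m\opt(\hat\guess,m)\le(1+\eps)\opt+2(1+\eps)\eps C\le(1+3\eps+2\eps^2)\opt\le(1+5\eps)\opt$ for $\eps\in(0,1]$. The main obstacle is phase~2: both the upward rounding of sand-bag size from $\eps C$ to $(1+\eps)\eps C$ and the ceiling in the definition of $\hat M_{\sand}$ each contribute an additive $\Theta(\eps C)$ to a machine's load, and it is precisely the combination of the greedy volume bound with the inequality $C\le\opt$ that lets these losses be absorbed into the $O(\eps)\opt$ slack.
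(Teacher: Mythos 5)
Your proof is correct and follows essentially the same route as the paper's: fix a scenario, place the rounded regular bags on the same machines as their optimal counterparts, list-schedule (greedily place) the sand bags, and then bound the makespan via a Graham-style volume argument, finally absorbing the additive $O(\eps C)$ term using $C\le\opt$. The only difference is cosmetic bookkeeping in the final constants (you track $2(1+\eps)\eps C$ where the paper rounds up to $4\eps C$), which yields the same $(1+5\eps)$ bound.
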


\subparagraph*{Proof of \Cref{theo:makespan}.}
Using that we return the cheapest guess and that the number of distinct rounded sizes of regular bags is bounded by $\bigO\big( \frac1{\eps^2} \big)$, we can show the following two lemmas. Combined, they complete the proof of \Cref{theo:makespan}. 

\begin{restatable}{lemma}{lemmakespanapproxratio}
\label{lem:approx-ratio}
    The set of bags returned by the algorithm guarantees an objective function value of at most $(1+\bigO(\eps)) \opt$. 
\end{restatable}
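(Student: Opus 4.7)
The plan is to compare the algorithm's output against the benchmark guess $\hat{\guess}$ that was constructed in the analysis. The first step is to confirm that $\hat{\guess}$ actually appears among the guesses the algorithm evaluates: by \Cref{lem:feasible-guess_packed-well}, $\hat{\guess}$ uses at most $M$ bags and the bin packing PTAS of~\cite{HochbaumS88} succeeds on the corresponding instance, so $\sum_{m=1}^M q_m z(\hat{\guess},m)$ is well defined and is considered by the algorithm.

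The second step is to chain the approximation losses. The makespan PTAS of~\cite{HochbaumS87} yields $z(\hat{\guess},m) \leq (1+\eps)\,\opt(\hat{\guess},m)$ for every $m$, which combined with \Cref{lem:good-guess} gives $\sum_{m} q_m z(\hat{\guess},m) \leq (1+\eps)(1+5\eps)\,\opt$. Since the algorithm selects a feasible guess $\guess^\star$ minimizing $\sum_{m} q_m z(\cdot,m)$, the same bound immediately transfers to $\guess^\star$, i.e., $\sum_{m} q_m z(\guess^\star,m) \leq (1+\eps)(1+5\eps)\,\opt$.

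The third step is to pass from the idealized sizes used when computing $z(\guess^\star,m)$ to the actual bags $\bags$ produced by the bin packing PTAS on $\guess^\star$. By \Cref{cor:HS_bin-packing-PTAS}, each actual bag has size at most a factor $(1+\eps)$ above its idealized counterpart. Consequently, applying an optimal idealized bag-to-machine assignment to $\bags$ inflates the makespan by at most $(1+\eps)$, so $\opt(\bags,m) \leq (1+\eps)\,\opt(\guess^\star,m) \leq (1+\eps)\,z(\guess^\star,m)$. Taking the $q_m$-weighted sum yields
\[ \sum_{m=1}^M q_m\, \opt(\bags,m) \leq (1+\eps)^2(1+5\eps)\,\opt = (1+\bigO(\eps))\,\opt, \]
which is the desired bound (and if $\guess^\star$ uses fewer than $M$ bags, the partition is padded with empty bags without changing the makespan in any scenario).

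I do not anticipate a real obstacle: all three $(1+\eps)$-losses are already encapsulated in the cited lemmas, and the argument is a direct chain of inequalities. The only point requiring care is keeping the three sources of error cleanly separated---the size violation of the bin packing PTAS, the guarantee of the makespan PTAS, and the loss incurred by \Cref{lem:good-guess}---and bounding $\opt(\bags,m)$ by transporting the idealized optimum assignment to $\bags$, rather than analysing $\bags$ from scratch.
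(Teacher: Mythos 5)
Your proof follows the same structure as the paper's: chain the $(1+\eps)$-loss from \Cref{lem:good-guess}, the $(1+\eps)$-loss from the makespan PTAS guarantee on $z(\cdot,m)$, and the $(1+\eps)$-loss from \Cref{cor:HS_bin-packing-PTAS} when moving from idealized bag sizes to the actual packed bags, using that the algorithm picks the feasible minimizer of $\sum_m q_m z(\cdot,m)$. You are only slightly more explicit than the paper (naming the minimizer $\guess^\star$, noting the empty-bag padding), but the argument and cited lemmas are identical.
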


\begin{restatable}{lemma}{lemmakespanrunningtime}
\label{lem:running-time}
        For $\eps \in \big(0,\frac12\big)$, the algorithm runs in time $\bigO \Big( \big(\frac n {\eps}\big)^{\bigO(1/\eps^2)} \Big)$. 
\end{restatable}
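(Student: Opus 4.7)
The plan is to bound the total running time as the number of guesses enumerated by the algorithm multiplied by the polynomial-time cost incurred per guess. The argument is essentially a counting argument combined with the polynomial-time guarantees of the two external PTASs (\cite{HochbaumS88} for bin packing with variable bin sizes and \cite{HochbaumS87} for makespan minimization).

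First, I would estimate $|\calL|$. By definition,
$|\calL| \leq \log_{1+\eps}(4C) - \log_{1+\eps}(\eps^2 C) + 2 = \log_{1+\eps}(4/\eps^2) + 2$.
Using the standard inequality $\ln(1+\eps) \geq \eps/2$ valid for $\eps \in (0,1/2)$, this quantity is $\mathcal{O}(\ln(1/\eps)/\eps)$, which in turn is $\mathcal{O}(1/\eps^2)$ since $\ln(1/\eps) \leq 1/\eps$ on this range. Hence the index set $\calL \cup \{\sand\}$ has size $\mathcal{O}(1/\eps^2)$.

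Next, I would bound the number of guesses. Each of $M_{\sand}$ and the $|\calL|$ values $M_\ell$ takes an integer value in $\{0,1,\ldots,M\}$ with $M < n$, so the algorithm enumerates at most $(n+1)^{|\calL|+1} = n^{\mathcal{O}(1/\eps^2)}$ tuples. For each tuple I would account for: (i) $\mathcal{O}(|\calL|)$ time to check that the total bag count does not exceed $M$; (ii) one call to the bin-packing PTAS of \cite{HochbaumS88}, whose running time is polynomial in $n$ and $1/\eps$; and (iii) one call to the makespan PTAS of \cite{HochbaumS87} for each scenario $m \in \{1,\ldots,M\}$, each such call again polynomial in $n$ and $1/\eps$.

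Multiplying the $n^{\mathcal{O}(1/\eps^2)}$ guesses by the per-guess cost of $n \cdot \mathrm{poly}(n, 1/\eps)$ (with the factor $n$ absorbing the at most $M \leq n$ scenarios) yields the claimed running time $\mathcal{O}((n/\eps)^{\mathcal{O}(1/\eps^2)})$. I do not expect any genuine obstacle in this argument; the only mildly delicate point is the estimate $\log_{1+\eps}x = \mathcal{O}(\log x / \eps)$ used to convert the logarithmic range of bag sizes into an $\mathcal{O}(1/\eps^2)$ exponent, for which the restriction $\eps \in (0,1/2)$ is exactly what the lemma assumes.
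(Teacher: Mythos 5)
Your proposal is correct and takes essentially the same route as the paper: bound $|\calL| = \bigO(1/\eps^2)$ via the logarithmic estimate, observe that each of the $\bigO(1/\eps^2)$ counts $M_\ell, M_{\sand}$ ranges over $\{0,\ldots,M\}$ with $M<n$, giving $n^{\bigO(1/\eps^2)}$ guesses, and then multiply by the per-guess polynomial cost of one bin-packing PTAS call plus at most $M$ makespan PTAS calls. The paper simply plugs in the concrete running times of the two Hochbaum--Shmoys PTASs where you cite them as polynomial.
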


\section{Maximizing the minimum machine load}\label{sec:santaclaus}

In this section, we present our polynomial-time $(1+\epsilon)$-approximation
algorithm for the setting in which we want to maximize the minimum
machine load. The formal proofs for the results presented in this section can be found in~\Cref{app:santa}.

\subsection*{Polynomially bounded processing times}

First, we show that we can reduce our problem to the case of polynomially bounded job processing times that are all essentially powers
of $1+\epsilon$, while losing at most a factor of $1+\bigO(\epsilon)$ in our approximation guarantee.
The main concepts of the reduction can be summarized by the following three ideas.

The first idea is to disregard scenarios whose contribution to the expected objective function value is very small.
W.l.o.g., assume that $p_j \in \mathbb{N}$ and let $d$ be an integer such that $\opt(\optbags,m)$ falls in the interval $\big[1,\big(\frac{n}{\epsilon}\big)^d\big]$ for every scenario $m$.
Then, for some offset $a \in \big\{0,1,\dots,\frac{1}{\epsilon} + 3\big\}$ we ``split'' the interval $\big[1,\big(\frac{n}{\epsilon}\big)^d\big]$ into a polynomial number of pair-wise disjoint intervals $\Tilde{I}_i = \Big[\left( \frac{n}{\epsilon} \right)^{3k + \frac{k-1}{\epsilon} + a}, \left( \frac{n}{\epsilon} \right)^{3k + \frac{k}{\epsilon} + a}\Big)$.
Observe that any two consecutive intervals have a multiplicative gap of $\left( \frac{n}{\epsilon} \right)^3$.
Using probabilistic arguments, we show that there is an offset $a$ such that the scenarios with $\opt(\optbags,m)$ in the gaps contribute very little to the expected objective function value.
Hence, such scenarios can be neglected by losing a factor of at most $1+\bigO(\epsilon)$ in the approximation ratio.
As there is only a polynomial number of possible offsets $a$, we may assume that we correctly choose such $a$ by enumeration. 

The second idea is to observe that the gaps enable us to actually \emph{ignore} a carefully chosen subset of jobs.
Let $\Tilde{I}^+_i = \Big[\left( \frac{n}{\epsilon} \right)^{3k + \frac{k-1}{\epsilon} + a - 3},\left( \frac{n}{\epsilon} \right)^{3k + \frac{k-1}{\epsilon} + a}\Big) \cup \Tilde{I}_k$ denote the extended interval obtained by the union of the interval $\Tilde{I}_k$ and the smaller of its adjacent gaps, and let $m$ be a scenario such that $\opt(\optbags,m) \in \tilde{I}_k$.
We show that, by losing a factor of at most $1+\bigO(\epsilon)$ in the approximation ratio, we may assume that a machine with minimum load in the schedule that achieves $\opt(\optbags,m)$ is assigned only bags with jobs whose processing time is in $\Tilde{I}^+_k$.
Then, based on this assumption, we show that we may assume that there are no jobs whose processing times fall in the gaps by losing at most another factor of $1+\bigO(\epsilon)$ in the approximation ratio. 
Observe that we are now facing an instance where neither $\opt(\optbags,m)$ nor $p_j$ belong to the just created gaps in the interval $\big[1,\big(\frac{n}{\epsilon}\big)^d\big]$.

The third idea is then to solve the problem restricted to the intervals $\tilde I_k$ individually by using the fact that within each interval $\tilde I_k$ the processing times are polynomially bounded and combine the obtained solutions into a single one with a dynamic program.
We show that rounding up the processing times and solving each subproblem that arises in the dynamic program costs a factor of at most $1+\bigO(\epsilon)$ in the approximation ratio.
Formalizing this proof sketch proves~\Cref{lem:rounding}.

\begin{restatable}{lemma}{lemrounding}
\label{lem:rounding}
    By losing a factor of at most $1+\bigO(\epsilon)$ in the approximation ratio, we can assume for each job~$j \in J$ that $p_{j}=\lceil (1+\epsilon)^{\ell_j}\rceil$ for some $\ell_j\in \mathbb{N}_0$ and $p_{j}\in[1,n^{c(\epsilon)}]$ where $c(\epsilon)$ only depends on $\eps$.
\end{restatable}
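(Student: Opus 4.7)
The plan is to separate the processing-time spectrum into well-separated scales so that each scale becomes a polynomially-bounded subinstance on its own, and then to stitch the per-scale decisions back together via a dynamic program at total loss $1+\bigO(\eps)$. After normalizing to $p_j\in\mathbb{N}$, fix $d$ with $\opt(\optbags,m)\in[1,(n/\eps)^d]$ for every scenario $m$, and for each offset $a\in\{0,1,\dots,1/\eps+3\}$ define the bands
\[
    \tilde I_k = \bigl[(n/\eps)^{3k+(k-1)/\eps+a},(n/\eps)^{3k+k/\eps+a}\bigr),
\]
which in logarithmic coordinates have length $1/\eps$ and are separated from each other by gaps of length $3$; these gaps are the slack that absorbs every subsequent error.

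First I would pick a good offset $a$ by a standard averaging argument. Call a scenario $m$ \emph{bad} for $a$ if $\opt(\optbags,m)$ lies in one of the gaps. In log coordinates the band/gap pattern is periodic with period $3+1/\eps$ and the bad region has length $3$, so among any $1/\eps+4$ consecutive integer offsets each fixed scenario is bad for only $\bigO(1)$ of them. Hence $\sum_a \sum_m q_m\,\opt(\optbags,m)\cdot\mathbf 1[m\text{ bad for }a]\le\bigO(1)\cdot\opt$, and some offset achieves bad-scenario weight at most $\bigO(\eps)\cdot\opt$. Since there are only $\bigO(1/\eps)$ offsets we enumerate them, and on the chosen one pessimistically drop the bad scenarios at cost $1+\bigO(\eps)$.

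Next I would use the gaps to delete every job whose processing time lies in a gap. Fix any retained scenario $m$ with $\opt(\optbags,m)\in\tilde I_k$; a minimum-load machine in the optimum has load less than $(n/\eps)^{3k+k/\eps+a}$, so no bag on it contains a job above $\tilde I_k^+$. For the lower truncation, the combined size of all jobs with $p_j$ below the bottom of $\tilde I_k^+$ is at most $n\cdot(n/\eps)^{3k+(k-1)/\eps+a-3}\le\eps\cdot\opt(\optbags,m)$ by the gap width, so removing them from the min-load machine decreases its load by only an $\eps$-fraction. Because this argument is uniform across all retained scenarios, every job whose processing time lies in one of the gaps may be deleted from the instance at a further $1+\bigO(\eps)$ loss; each surviving $p_j$ may then be rounded up to $\lceil(1+\eps)^{\ell_j}\rceil$, inflating every bag load by at most $1+\eps$.

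Finally, within each retained band $\tilde I_k$ the processing times span a ratio of only $(n/\eps)^{1/\eps}=n^{c(\eps)}$, which is the ``polynomially bounded'' regime the lemma promises; the dynamic program then processes retained bands in decreasing order of $k$, forwarding only $\bigO(1)$ aggregate parameters per band (open-bag count and residual sand volume) because the $(n/\eps)^3$ gap makes the contribution of lower-band jobs to any band-$k$-dominated bag at most an $\eps$-fraction of that bag. The main technical obstacle I anticipate is the offset-averaging step: one must verify that a scenario is bad for only $\bigO(1)$ integer offsets under a possibly non-integer period $3+1/\eps$, and check that the three independent $1+\bigO(\eps)$ losses (dropped bad scenarios, truncated min-load machines, geometric rounding) compose into a single $1+\bigO(\eps)$ factor because they draw from disjoint slack budgets; the remaining bookkeeping is routine.
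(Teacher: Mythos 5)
Your high-level plan (offset averaging to drop bad scenarios, exploiting the multiplicative gaps to decouple scales, stitching per-scale solutions with a DP) is the paper's plan, but there is a real error in the middle step that would make the proof fail.

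The problematic claim is that \emph{every} job whose processing time lies in a gap may be deleted at a $1+\bigO(\eps)$ loss. Your justification only covers jobs lying \emph{below} $\tilde I_k^+$ (i.e., jobs at least three log-units below $\tilde I_k$), whose total size is indeed at most $n\cdot(n/\eps)^{3k+(k-1)/\eps+a-3}\le \eps\cdot\opt(\optbags,m)$. But the jobs in the gap between $\tilde I_{k-1}$ and $\tilde I_k$ -- the ones inside $\tilde I_k^+\setminus\tilde I_k$ -- are \emph{not} covered by that bound: they may be arbitrarily close to $(n/\eps)^{3k+(k-1)/\eps+a}$, so that $n$ of them have total size comparable to $n\cdot\opt(\optbags,m)$, not $\eps\cdot\opt(\optbags,m)$. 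A concrete counterexample: take one scenario with $m=1$, one job of size exactly $\lceil(n/\eps)^{3k+(k-1)/\eps+a}\rceil$ and $n-1$ jobs of size one less. Then $\opt$ is about $n$ times the band bottom and so lies in $\tilde I_k$ (so the offset $a$ is retained by your averaging step, which looks only at $\opt$ values), yet all but one job sit in the gap, and deleting them reduces $\opt$ by a factor of roughly $n$. The offset averaging cannot rescue this because it never considers where the $p_j$ fall.

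This is exactly the subtlety the paper resolves differently. It does \emph{not} throw away all gap jobs; it keeps the jobs in the upper two-thirds of each gap inside the extended interval $\tilde I_k^+$, defines $J_k$ via $\tilde I_k^+$, and only discards the jobs in a thin \emph{head gap} $\tilde G_k$ of log-width $1$ at the very bottom of each gap (Lemmas~\ref{lem:bags_restricted_to_intervals}--\ref{lem:headgap_empty}). Head-gap jobs have total size $\le n\cdot(n/\eps)^{3k+(k-1)/\eps+a-2}<\eps\cdot\opt(\optbags,m)$, so their removal is safe, and after removal the remaining $J_k$ still spans only a ratio $(n/\eps)^{1/\eps+2}=n^{c(\eps)}$ while being separated from $J_{k-1}$ by a factor $n/\eps$, which is enough to make $\sum_{j\in J_{k-1}}p_j\le\eps\cdot p_{j'}$ for every $j'\in J_k$ (\Cref{cor:headgap_empty}) -- the separation actually used by the DP. You will also need the intermediate structural facts (no bag mixes jobs from distinct extended intervals, min-load machines only carry bags from the ``current'' extended interval, \Cref{lem:bags_restricted_to_intervals,lem:bags_in_interval_dominates_opt_in_interval,lemma:m_m1_machines_only_jobs_from_J_i}) to make the DP stitching go through; the sentence ``forwarding only $\bigO(1)$ aggregate parameters per band'' is hiding the content that the paper's DP over cells $\mathcal{C}[k,m_{\max},b]$ works out in detail. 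If you replace ``delete all gap jobs'' with ``delete only head-gap jobs and keep the rest grouped with the adjacent band via extended intervals,'' the rest of your outline matches the paper.
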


\subparagraph*{Algorithmic overview.}

Based on \Cref{lem:rounding}, we assume that each job $j\in J$ satisfies~$p_{j}\in[1,n^{c(\epsilon)}]$. The high-level idea of our algorithm is to partition $[1,n^{c(\epsilon)}]$ into intervals of the form $I_{k}:=\big[\big(\frac{1}{\epsilon}\big)^{3k},\big(\frac{1}{\epsilon}\big)^{3k+3}\big)$ for $k \in \mathbb N$ and only consider bags, jobs, and scenarios relevant for a \emph{single} interval. More precisely, we use these intervals to partition the processing times $\left\{ p_{j}\right\} _{j\in J}$, the bag sizes in $\optbags$ and in our solution as well as the values $\left\{ \opt(\optbags,m)\right\} _{m}$. Let $\kmax$ such that $\sum_{j\in J}p_{j}\in I_{\kmax}$; we ignore intervals $I_{k}$ with $k>\kmax$. For $k\in [\kmax]$, let $J_{k}:=\{j\in J:p_{j}\in I_{k}\}$, let $L_k := \big\{\ell \in \mathbb N: \big\lceil (1+\eps)^\ell \big\rceil \in I_k \big\}$, and let $\optbags[k] := \{B \in \optbags: p(B) \in I_{k}\}$.

Our algorithm recursively considers the intervals in the order $I_{\kmax},I_{\kmax-1},...,I_{1}$
and, step by step, defines bags that correspond to $\optbags[\kmax],\optbags[\kmax-1],...,\optbags[1]$. When considering interval~$I_k$, the algorithm enumerates all possible bag sizes of bags in $\optbags[k]$ and all possible assignments of a subset of the jobs in $J_k \cup J_{k-1}$ to those bags; the remaining jobs in $J_k$ are implicitly assigned to bags in $\bigcup_{k' = k+1}^{\kmax} \optbags[k']$. 
Here, we use the fact that by definition of our intervals only a constant number of jobs in $J_k \cup J_{k-1}$ can be assigned to any bag in $\optbags[k]$ while jobs in $J_k$ are tiny compared to bags in $\bigcup_{k' = k+2}^{\kmax} \optbags[k']$ (see \Cref{fig:jobvsbags} for visualization) and, hence, the assignment of jobs $J_k$ to bags $\bigcup_{k' = k+2}^{\kmax} \optbags[k']$ cannot be guessed in polynomial time.
The remaining jobs in $J_{k-1}$ will be assigned when interval $I_{k-1}$ is considered. 
We embed this recursion into a polynomial-time dynamic program (DP). 
Since our DP is quite technical, we first describe the algorithmic steps that correspond to the root subproblem of the recursion, i.e., $I_{\kmax}$, before we define the DP cells and argue about their solution.
\begin{figure}
    \centering
    \includegraphics[width = 0.9\linewidth]{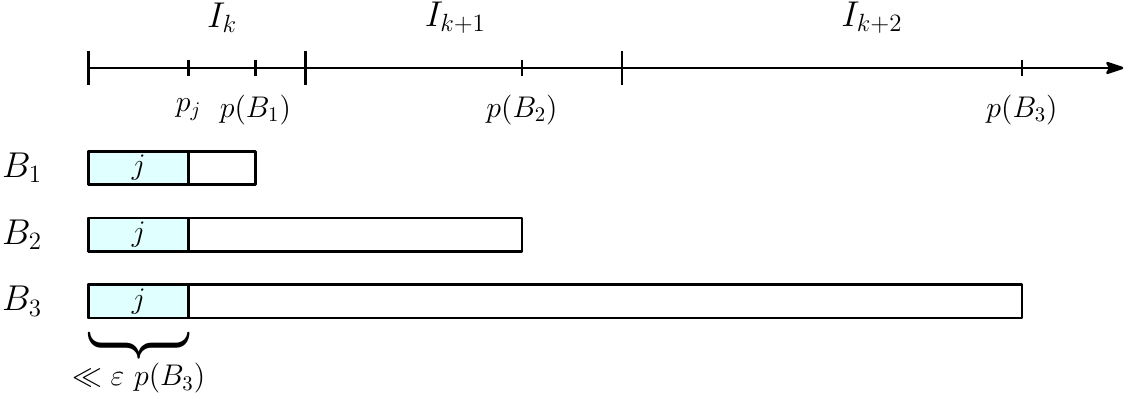}
    \caption{Visualization of relation between jobs in $J_k$ and bags in $\optbags[k]$, $\optbags[k+1]$ and $\bigcup_{k' = k+2}^{\kmax} \optbags[k']$.} 
    \label{fig:jobvsbags}
\end{figure}
Defining the DP cells and solving their corresponding subproblem involves enumerating all possible values of several quantities and storing an approximation of the objective-function value of the (approximately) best combination in the DP cell.
When arguing about the correctness of our DP, we show that there is a chain of DP cells that represent some (fixed) optimal solution. 
Hence, we use~$X^*$ for some parameter~$X$ when referring to the correct value, i.e., 
the value of this parameter in this optimal solution. 
We refer to this process as \emph{guessing}~$X^*$. 
In general, we use $\hat X$ to refer to an arbitrary guess. 

\subsection{\label{subsec:Root-subproblem} Guessing initial quantities}
In the following, we describe how to guess and evaluate initial parameters corresponding to a (partial) solution of our root subproblem. Intuitively, we construct a partial assignment of jobs to bags $\optbags[K]$ and the parameters representing this partial assignment define the DP cell corresponding to the remaining problem.
\subsubsection{Algorithm}
The algorithm to compute a partial solution to a root subproblem can be essentially split into two phases: (1) \emph{guessing} key quantities and (2) \emph{evaluating} these guesses.
\subparagraph*{Guessing phase.}
We start by guessing $|\optbags[\kmax]|$ and $|\optbags[\kmax-1]|$, the number of bags in $\optbags[\kmax]$ and in $\optbags[\kmax-1]$, before we guess $(1+\epsilon)$-approximations for the bags sizes in $\optbags[\kmax]\cup\optbags[\kmax-1]$. 
Formally, for each bag $B\in\optbags[\kmax]\cup\optbags[\kmax-1]$ we guess a value $\ell(B)\in\N$ such that $p(B)\in[(1+\epsilon)^{\ell(B)},(1+\epsilon)^{\ell(B)+1})$; we say that such a value $\ell(B)$ is the \emph{size-estimate}
for $B$.
Next, we guess an assignment of all jobs in $J_{\kmax}$ and a subset of the jobs in $J_{\kmax-1}$ to the bags $\optbags[\kmax]$ and an assignment of the remaining jobs in $J_{\kmax-1}$ and of a subset of the jobs in $J_{\kmax-2}$ to the bags $\optbags[\kmax-1]$. 
Finally, we guess $m_{\max}^{(\kmax)}$ which we define to be the largest value $m\in M$ for which $\opt(\optbags,m)\in I_{\kmax}$. 

\subparagraph*{Evaluation phase.}
In contrast to the previous section, maximizing the minimum machine load asks for ``covering'' a machine or, in our case, a bag. 
To this end, we potentially have to assign jobs from $\bigcup_{k'=1}^{\kmax-2} J_{k'}$ to the bags in $\optbags[\kmax]$. 
Formally, for $B\in\optbags[\kmax]$ let $p^{+}(B)$ be the total size of the jobs from $J_{\kmax}$ and $J_{\kmax-1}$ already assigned to~$B$. 
We define $S:=\sum_{B\in\optbags[\kmax]}\max\{\left\lceil (1+\epsilon)^{\ell(B)}\right\rceil - p^{+}(B),0\}$.
Our DP also stores this value in order to guarantee that, in the remainder, jobs from $\bigcup_{k'=1}^{\kmax-2} J_{k'}$ with total size~$S$ are assigned to bags in $\optbags[\kmax]$. 
Let $J_{\kmax-1}(\optbags[\kmax-1])$ and $J_{\kmax-2}(\optbags[\kmax-1])$ be the subsets of $J_{\kmax-1}$ and $J_{\kmax-2}$ already assigned to bags in $\optbags[\kmax-1]$. 
Then, $\bar S:= \sum_{B\in \optbags[\kmax-1]} \left\lceil (1+\epsilon)^{\ell(B)}\right\rceil - p\big(J_{\kmax-1}(\optbags[\kmax-1]) \cup J_{\kmax-2}(\optbags[\kmax-1])\big)$ is the total volume of bags in $\optbags[\kmax-1]$ that needs to be covered with jobs from $\bigcup_{k'=1}^{\kmax-3} J_{k'}$. 

For evaluating our current guess, we fix some $m \leq m^{(\kmax)}_{\max}$ and create a set $J_T$ of dummy jobs, each with processing time~$1$ and total size $T:=\sum_{k=1}^{\kmax -2} \sum_{ j \in J_k} p_j - S - \bar{S}$.
Now, we guess the assignment of the bags $\optbags[\kmax] \cup \optbags[\kmax-1]$ to the machines.
Based on the load guaranteed by these bags, we now greedily distribute these dummy jobs as follows. 
Assume w.l.o.g. that the machines are sorted non-decreasingly by their loads and consider the prefix of the machines which all have the smallest load. 
We assign to each of these machines the same number of dummy jobs such that their new load is equal to the load of the machines with the second smallest load. We repeat this procedure until all dummy jobs in $J_T$ are assigned.
At the end, among all possibilities to assign the bags $\optbags[\kmax] \cup \optbags[\kmax-1]$, we choose the one which maximizes the minimum machine load after the distribution of $J_T$. 
We define $\ALG(m)$ as the load of the least loaded machine for this fixed candidate solution.

Among all guesses with the same set of bag sizes for bags in $\optbags[\kmax]\cup \optbags[\kmax-1]$, the same value~$m_{\max}^{(\kmax)}$ and the same values~$S$ and~$\bar S$, we keep the guess which maximizes our proxy for the (partial) objective function, $\sum_{m=1}^{m_{\max}^{(\kmax)}} q_m \cdot \ALG(m)$. 

\subsubsection{Analysis}
Observing that $|\optbags[\kmax]| \leq M \leq n$ and $|\optbags[\kmax-1]|\leq M \leq n$ implies that we can enumerate all possible combinations in time $O(n^2)$. Since the relative length, i.e., the ratio of the left interval border to the right interval border, of $I_{\kmax} \cup I_{\kmax-1}$ is bounded by $\big(\frac1\eps\big)^{6}$, there are at most $\bigO_{\epsilon}(1)$ possibilities for each size-estimate $\ell(B)$. By guessing the number of bags with a given size estimate, we can guess the size-estimates of all bags in $\optbags[\kmax]\cup\optbags[\kmax-1]$ in time $n^{\bigO_\eps(1)}$.
Further, each bag in $\optbags[\kmax]$ can be assigned at most $O_{\epsilon}(1)$ many jobs from $J_{\kmax} \cup J_{\kmax-1}$ and, similarly, each bag in $\optbags[\kmax-1]$ can be assigned at most $O_{\epsilon}(1)$ many jobs from $J_{\kmax-1} \cup J_{\kmax-2}$. 
Hence, there is only a constant number of possible assignments for each bag, up to permutations of jobs with the same size.
We formalize these observations in the next lemma.\footnote{For the initial guesses, one could give tighter bounds by observing that $|\optbags[\kmax]|+|\optbags[\kmax]-1|=\bigO_\eps(1)$. However, we give polynomial bounds which are sufficient and of the same kind as the bounds we will use later in the DP.}

\begin{restatable}{lemma}{lemscrootsizeassign}
\label{lem:sc_root_size}
    In time $n^{\bigO_{\epsilon}(1)}$, we can guess the size-estimate $\ell(B)$ for each bag $B\in\optbags[\kmax]\cup\optbags[\kmax-1]$ as well as the assignment of the jobs in $J_{\kmax}$ to the bags $\optbags[\kmax]$, of the jobs in $J_{\kmax-1}$ to the bags $\optbags[\kmax]\cup\optbags[\kmax-1]$ and of a subset of jobs in $J_{\kmax-2}$ to the bags in $\optbags[\kmax-1]$, up to a permutation of bags.
\end{restatable}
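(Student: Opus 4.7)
The plan is to show that each of the enumerations listed in the lemma---$|\optbags[\kmax]|$ and $|\optbags[\kmax-1]|$, the size-estimates $\ell(B)$, and the job-to-bag assignments---contributes only a factor $n^{\bigO_{\epsilon}(1)}$ to the total number of guesses, so the full enumeration runs in time $n^{\bigO_{\epsilon}(1)}$.

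First, I would handle the ``structural'' part. Both $|\optbags[\kmax]|$ and $|\optbags[\kmax-1]|$ lie in $\{0,1,\dots,n\}$, contributing $\bigO(n^{2})$ choices. Because $I_{\kmax}\cup I_{\kmax-1}=\bigl[(1/\epsilon)^{3\kmax-3},(1/\epsilon)^{3\kmax+3}\bigr)$ has endpoint ratio $(1/\epsilon)^{6}$, the set of integers $\ell$ with $(1+\epsilon)^{\ell}$ in this union has cardinality $\bigO\bigl(\tfrac{1}{\epsilon}\log\tfrac{1}{\epsilon}\bigr)=\bigO_{\epsilon}(1)$, which is the number of possible values for any single $\ell(B)$. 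Because the lemma allows us to work up to permutation of bags, a configuration of all size-estimates is captured by a count vector of length $\bigO_{\epsilon}(1)$ with entries in $\{0,\dots,n\}$, contributing at most $(n+1)^{\bigO_{\epsilon}(1)}=n^{\bigO_{\epsilon}(1)}$ choices.

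Second, I would bound the assignment part. For any $B\in\optbags[\kmax]$ and any $j\in J_{\kmax}\cup J_{\kmax-1}$ the ratio $p(B)/p_{j}$ is at most $(1/\epsilon)^{6}$, so $B$ can receive at most $\bigO_{\epsilon}(1)$ such jobs; analogously each $B\in\optbags[\kmax-1]$ receives at most $\bigO_{\epsilon}(1)$ jobs from $J_{\kmax-1}\cup J_{\kmax-2}$. By~\Cref{lem:rounding}, jobs in $J_{\kmax}\cup J_{\kmax-1}\cup J_{\kmax-2}$ take only $\bigO_{\epsilon}(1)$ distinct processing times once they are rounded to powers of $1+\epsilon$ inside each interval of bounded relative length, so the contents of one bag can be encoded by a tuple in $\{0,\dots,\bigO_{\epsilon}(1)\}^{\bigO_{\epsilon}(1)}$ and therefore admits only $\bigO_{\epsilon}(1)$ possible content-descriptions. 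Coupling each description with its size-estimate, the number of distinct ``bag types'' is still $\bigO_{\epsilon}(1)$, and, up to permutation of bags and of same-size jobs, the full configuration for each of $\optbags[\kmax]$ and $\optbags[\kmax-1]$ is one count vector over $\bigO_{\epsilon}(1)$ types with entries in $\{0,\dots,n\}$, giving $n^{\bigO_{\epsilon}(1)}$ guesses. Feasibility of a guess (all of $J_{\kmax}$ placed, available multiplicities of $J_{\kmax-1}\cup J_{\kmax-2}$ respected) can be verified in linear time, so multiplying all factors yields the claimed bound.

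The main subtlety I expect is in the last step: if we treated bags as labeled or distinguished jobs of equal rounded size, the assignment count would blow up exponentially in the number of bags. The polynomial bound therefore relies crucially on the ``up to a permutation of bags'' qualifier, combined with the fact that equal-processing-time jobs are interchangeable for the minimum-machine-load objective. Once these equivalences are acknowledged, the counting is a routine product of the constants $\bigO_{\epsilon}(1)$ arising from the interval length, the per-bag job capacity, and the number of distinct rounded job sizes.
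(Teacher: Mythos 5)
Your proof is correct, and the counting is sound: the interval ratio bound gives $\bigO_\epsilon(1)$ size-estimates, the bounded ratio $p(B)/p_j\le(1/\epsilon)^6$ gives $\bigO_\epsilon(1)$ jobs per bag, and hence $\bigO_\epsilon(1)$ bag types, so a count vector over types yields $n^{\bigO_\epsilon(1)}$ guesses. This differs slightly from the paper's proof of this particular lemma, which instead observes that $|\optbags[\kmax]|+|\optbags[\kmax-1]|\le(1/\epsilon)^6$ (since $\sum_j p_j\in I_\kmax$ and each such bag has size at least $(1/\epsilon)^{3\kmax-3}$), and then guesses the job-multiplicity vector for each of these $\bigO_\epsilon(1)$ bags independently, in time $n^{\bigO_\epsilon(1)}$ per bag. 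Your ``bag types + count vector'' argument is exactly the one the paper later uses for the DP version (Lemma~\ref{lem:lmscassginDP}), where the constant bound on the number of bags no longer holds; the paper's own footnote acknowledges this trade-off. In that sense your proof is a valid and arguably cleaner unification, since it does not rely on the root-specific fact that the number of large bags is constant, while still delivering the same $n^{\bigO_\epsilon(1)}$ bound.
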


First, observe that for each bag $B \in \optbags[\kmax]$, the value $\max\{\lceil (1+\eps)^{\ell(B)}\rceil - p^{+}(B), 0\} \in \mathbb N_0$ since $p_j \in \mathbb{N}$ for each $j \in J$ by \Cref{lem:rounding}.
Hence, $S$ accurately captures the total volume missing to ensure that each $B \in \optbags[\kmax]$ packs jobs with a total size of at least $(1 + \eps)^{\ell(B)} \geq \frac{p(B)}{1+\eps}$. 
Using that each job in $\bigcup_{k=1}^{\kmax-2} J_{k}$ is very small compared to a bag in~$\optbags[\kmax]$, we can argue that knowing $S$ is actually sufficient to cover $B \in \optbags[\kmax]$ with jobs of total size of at least $\frac{p(B)}{(1+\eps)^2}$. 

Similarly, for a bag in $\optbags[\kmax-1]$, each job in $\bigcup_{k=1}^{K-3} J_k$ is very small compared to its size. Hence, we can again argue that knowing $\bar S \in \mathbb N_0$ is sufficient to pack jobs of total size at least $\frac{p(B)}{(1+\eps)^2}$ into bag $B \in \optbags[\kmax-1]$. 

Observing that no bag in $\bigcup_{k=1}^{\kmax-2} \optbags[k]$ can pack a job from $J_{\kmax}\cup J_{\kmax-1}$ by definition of their sizes, we conclude that $T$ indeed represents the total volume of bags in $\bigcup_{k=1}^{\kmax-2} \optbags[k]$.
In fact, we can show that for scenarios with $m\le m_{\max}^{(\kmax)}$ machines 
\emph{any} assignment of the remaining jobs in $\bigcup_{k'=1}^{\kmax-2} J_{k'}$ of total volume at most $\frac T{1+\eps}$ to at most $M - |\optbags[\kmax]| - |\optbags[\kmax-1]|$ bags of size at most $\eps \big(\frac1\eps\big)^{3\kmax}$ yields the same objective function value (up to a factor of $1+\bigO(\eps)$).
These observations are formalized in the next lemma where some jobs are set aside in bags $B_S$ and $B_{\bar S}$, corresponding to the values~$S$ and~$\bar S$.

Recall that we use $\hat X$ to denote a possible guess for parameter $X$ considered by our algorithm. 
\begin{restatable}{lemma}{lemscrootmain}
\label{lem:sc_root_main}
    Let the guessed quantities be as defined above.
    Let $\bags'\cup\{B_{S}, B_{\bar{S}}\}$ be a partition of the jobs $\bigcup_{k'=1}^{\kmax-2} J_{k'}$
    into $M-|\guessbags[\kmax]|-|\guessbags[\kmax-1]|+2$
    bags such that  
    \begin{itemize}
        \item for each bag $B\in\bags'$, $p(B)\le\epsilon\cdot(\frac{1}{\epsilon})^{3\kmax}$,
        \item $p(B_{S})\ge S$,
        \item $p(B_{\bar{S}})\ge \bar{S}$,
        and
        \item $p(\bags'):=\sum_{B\in\bags'}p(B)\ge(1+\epsilon)^{-1}T$.
    \end{itemize}
    Suppose that $B\in\guessbags[\kmax]\cup\guessbags[\kmax-1]$ has size in \mbox{$[(1+\epsilon)^{\ell(B)}, (1+\epsilon)^{\ell(B)+1})$}. 
    We can compute the vector $\left( \ALG(m)\right)_{m=1}^{m_{\max}^{(\kmax)}}$ in polynomial time and ${\opt}(\bags'\cup\guessbags[\kmax]\cup\guessbags[\kmax-1],m)\in [(1+\epsilon)^{-5}\ALG(m),(1+\epsilon)\ALG(m))$ for each $m\le m_{\max}^{(\kmax)}$. 
\end{restatable}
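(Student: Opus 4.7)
The plan is to establish polynomial-time computability of $\ALG(m)$ and the two-sided approximation separately, exploiting the scale separation between the three groups of bags: bags in $\guessbags[\kmax]\cup\guessbags[\kmax-1]$ are large and few; each bag in $\bags'$ has size at most $\epsilon\big(\frac{1}{\epsilon}\big)^{3\kmax}$ (sand); and $m\le m^{(\kmax)}_{\max}$ forces the relevant min-load scale to be $\Omega\big(\big(\frac{1}{\epsilon}\big)^{3\kmax}\big)$.

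For the running time, I would first note that $|\guessbags[\kmax]|+|\guessbags[\kmax-1]|=\bigO_\eps(1)$: the total job volume is at most $\big(\frac{1}{\epsilon}\big)^{3\kmax+3}$ by the choice of $\kmax$, while each such bag has size at least $\big(\frac{1}{\epsilon}\big)^{3(\kmax-1)}$. Since the machines are identical, all bag-to-machine assignments (up to relabelling) can be enumerated in time $m^{\bigO_\eps(1)}$, and for each candidate greedy water-filling of $T$ unit dummy jobs is a standard polynomial-time routine. I let $\ALG(m)$ be the maximum over candidates of the resulting minimum machine load.

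For the upper bound $\opt(\bags'\cup\guessbags[\kmax]\cup\guessbags[\kmax-1],m)<(1+\epsilon)\ALG(m)$, I would take an optimal schedule $\sigma^*$ for the bag set and look at its restriction to $\guessbags[\kmax]\cup\guessbags[\kmax-1]$, which up to relabelling is one of the enumerated candidates. Writing $\mathrm{load}_i$ for machine $i$'s load under this restriction and $\alpha_i\ge 0$ for its additional load from $\bags'$ under $\sigma^*$, we have $\sum_i\alpha_i=p(\bags')\le T$ and $\opt=\min_i(\mathrm{load}_i+\alpha_i)$. Since water-filling maximises the minimum over all distributions of a given total, the fractional water-filling of volume $T$ dominates $(\alpha_i)_i$, and passing to integer unit jobs loses at most one unit in the min load. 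The scale bound $\opt=\Omega\big(\big(\frac{1}{\epsilon}\big)^{3\kmax}\big)$, inherited from $m\le m^{(\kmax)}_{\max}$, then absorbs this additive slack into a $(1+\epsilon)$ factor.

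For the lower bound $\opt\ge(1+\epsilon)^{-5}\ALG(m)$, I would take the $\ALG$-realising candidate with dummy volumes $t_i$ and convert it to a real schedule by assigning $\bags'$ to machines with targets $\tfrac{p(\bags')}{T}\,t_i$, which sum to $p(\bags')$. The scaling $\tfrac{p(\bags')}{T}\ge(1+\epsilon)^{-1}$ costs one $(1+\epsilon)$-factor against $\ALG(m)$; a standard rounding that splits at most one $\bags'$-bag per machine achieves each target up to an additive $\max_{B\in\bags'}p(B)\le\epsilon\big(\frac{1}{\epsilon}\big)^{3\kmax}$, and against the same scale this deficit yields another $(1+\epsilon)$-factor. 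The remaining $(1+\epsilon)$-factors absorb the size-estimate rounding $p(B)\in[(1+\epsilon)^{\ell(B)},(1+\epsilon)^{\ell(B)+1})$ of bags in $\guessbags[\kmax]\cup\guessbags[\kmax-1]$, the integer-vs-fractional water-filling slack, and the $T\ge p(\bags')$ conversion. The main obstacle I expect is making the rounding step rigorous—showing that packing $\bags'$ into the dummy targets introduces only an additive per-machine deficit of $\max_{B\in\bags'}p(B)$—since this is where the scale separation between $I_\kmax$ and $\bigcup_{k\le\kmax-2}I_k$ is essential.
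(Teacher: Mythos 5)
Your proposal is correct and uses the same core ideas as the paper's proof: enumerate the assignments of the $\bigO_\eps(1)$ large bags in $\guessbags[\kmax]\cup\guessbags[\kmax-1]$ to machines, water-fill the remaining volume with $T$ unit dummy jobs, and exploit the scale separation (each sand bag has size at most $\eps(\tfrac1\eps)^{3\kmax}$ while the relevant minimum-load scale is $\Omega\big((\tfrac1\eps)^{3\kmax}\big)$) together with $p(\bags')\in[(1+\eps)^{-1}T,T]$ to relate the proxy value to the true optimum. The only structural difference is bookkeeping: the paper introduces an intermediate quantity $\ALG'$ (the water-filled minimum when the large-bag assignment is forced to match $\opt(m)$) and proves $\opt(m)\ge(1+\eps)^{-2}\ALG'$ and $\ALG(m)\le(1+\eps)^3\ALG'$ by two exchange/contradiction arguments, whereas you convert the $\ALG(m)$-realizing candidate into a feasible schedule directly; both yield the stated $(1+\eps)^{\pm}$ window.
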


Note that the lemma does not state anything about the relationship of $\opt(\optbags,m)$ and ${\opt}(\bags'\cup\guessbags[\kmax]\cup\guessbags[\kmax-1],m)$; it relates our proxy function $\ALG(m)$ and ${\opt}(\bags'\cup\guessbags[\kmax]\cup\guessbags[\kmax-1],m)$, the best possible assignment for $\bags'\cup\guessbags[\kmax]\cup\guessbags[\kmax-1]$.

In the remaining problem it suffices to focus on scenarios in which we have $m > m_{\max}^{(\kmax)}$ machines and which hence satisfy $\opt(\optbags,m)<(\frac{1}{\epsilon})^{3\kmax}$. 
Note that for each bag $B\in\optbags[\kmax]$ we have that $p(B)\ge(\frac{1}{\epsilon})^{3\kmax}$. 
Therefore, if we are given $m > m_{\max}^{(\kmax)}$ machines, it is optimal to assign each
bag $B\in\optbags[\kmax]$ to a separate machine without any further bags assigned to that machine. 
Hence, if $m > m_{\max}^{(\kmax)}$, then the bags in $\optbags[1],...,\optbags[\kmax-1]$ need to ensure only
that the remaining $m-|\optbags[\kmax]|$ machines get enough load. 
This insight and the above lemma allow us to decouple our decisions for scenarios with $m \le m_{\max}^{(\kmax)}$ machines from scenarios with $m > m_{\max}^{(\kmax)}$ machines. 
This is the key idea for our DP.

\subsection{Dynamic program}
After our initial guesses above, it remains to
\begin{itemize}
\item pack the jobs in $\bigcup_{k'=1}^{\kmax-1} J_{k'}$ into the bags in $\optbags[\kmax-1]$,
\item compute the bag sizes in $\optbags[1],...,\optbags[\kmax-2]$, and 
\item select jobs from $ \bigcup_{k=1}^{\kmax-2} J_k$ with total size at least $S$ for filling $\optbags[\kmax]$.
\end{itemize}

For each $m\ge m_{\min}^{(\kmax-1)}:=m_{\max}^{(\kmax)}+1$, our goal is to obtain a value close enough to $\opt(\optbags,m)$ so that, overall, we achieve a value of $(1-\bigO(\epsilon))\opt$.
 
To this end, we design a dynamic program (DP) that solves the remaining problem from above. Each DP cell corresponds to some subproblem. We show that for each possible guess of the initial quantities in \Cref{subsec:Root-subproblem} there is a DP cell corresponding to the remaining subproblem.
In order to solve each subproblem, we guess similar quantities as in the previous section and
reduce the resulting remaining problem to the subproblem of another DP cell.

\subsubsection{Algorithm}
Following the same idea as for the root subproblem, our dynamic program proceeds as follows: for each DP cell we first \emph{guess} key quantities defining a partial solution as well as the transition to the next DP cell and then we \emph{evaluate} this guessed partial solution.

\subparagraph*{DP cell and its subproblem.} Formally, each DP cell~$\mathcal{C}$ is specified by
\begin{itemize}
\item $k\in\N$ with $k<\kmax$ specifying that we still need to define $\bags_{1},...,\bags_{k}$,
\item $\noBags \in\N$ counting the previously defined (large) bags $\bags_{k+1},...,\bags_{\kmax}$,
\item $M_{k}\in\N$ representing our decision $|\bags_{k}|=M_{k}$, 
\item $m_{\min}^{(k)}\in\N$ indicating the minimal number of machines we consider,
\item $s_{\ell}\in\N$ for $\ell\in L_k$ counting $B \in \bags_k$ with $\ell(B)=\ell$,
\item $a_{\ell} \in\N $ for $\ell\in L_k $ counting the jobs~$j$ with $p_{j}=\lceil(1+\epsilon)^{\ell}\rceil$ that are assigned to bags in $\bags_{k+1}$,
\item $S\in\N$, 
defining the total size of jobs in $\bigcup_{k'=1}^k J_{k'}$ that 
must not be assigned to bags $\bigcup_{k'=1}^k \bags_{k'}$ and that are neither assigned to bags in $\bags_{k+1}$ via the values $a_\ell$; instead they will be assigned to $\bigcup_{k'=k+1}^{\kmax} \bags_{k'}$. (Note that we will make sure that even though jobs from $J_k$ might contribute to $S$, such jobs will not be used to cover bags in $\bags_{k+1}$.)
\end{itemize}

The goal of the subproblem of $\mathcal C$ is to pack a subset of the jobs $\bigcup_{k'=1}^k J_{k'}$ 
into the bags $\bigcup_{k'=1}^{k} \bags_{k'}$ and define 
a size-estimate $\ell(B)$ for $B\in\bigcup_{k'=1}^k \bags_{k}$ 
such that
\begin{itemize}
\item $|\bags_{k}|=M_{k}$,
\item $p(B) \in I_{k'}$ for each $k'\in [k]$ and each $B\in \bags_{k'}$,
\item $p(B)\in [(1+\epsilon)^{\ell(B)},(1+\epsilon)^{\ell(B)+1})$ for each $B\in \bigcup_{k'=1}^k \bags_{k}$,
\item there are $s_{\ell}$ bags $B\in\bags_{k}$ with $\ell(B)=\ell$ for each $\ell\in L_k$, 
\item each job $j\in J_{k'}$ for $k'\in[k]$ is either
 assigned to some bag in $\bags_{k'}\cup... \cup \bags_{k}$ or not at all,
\item there
are $a_{\ell}$ jobs $j$ with $p_{j}=\lceil(1+\epsilon)^{\ell}\rceil$ that are not assigned to any bag for each $\ell\in L_k$,
\item the jobs in $\bigcup_{k'=1}^k J_{k'}$ not packed in any
bag have total size at least $S$.
\end{itemize}

For each DP cell $\mathcal C$, we compute a solution and a corresponding objective function value which we denote by $\mathrm{profit}(\mathcal{C})$. 
This objective function corresponds to the expected profit from scenarios in $\{m_{\min}^{(k)},...,M\}$ that we achieve with the solution stored in the DP cell and $\noBags$ ``large'' bags, i.e., bags~$B$ with $p(B) \in \bigcup_{k'=k+1}^{\kmax} I_{k'}$.

\subparagraph*{Guessing phase.}
By definition of the DP cell, $|\optbags[k]|=M_{k}$. 
For each $\ell \in L_k$, there are $s_{\ell}$ many bags $B \in \optbags[k]$ with $\ell(B) = \ell$ (and hence with $p(B) \in [(1+\epsilon)^\ell,(1+\epsilon)^{\ell+1})$). 
We start by guessing the assignment of the jobs $J_{k-1}\cup J_{k}$ to the bags in $\optbags[k]$. 
We only consider guesses satisfying the values $a_\ell$ and $S$ of our current DP cell, i.e., for every possible processing time in $I_k$ enough jobs are left to be assigned to $\optbags[k+1]$ and enough total volume of jobs in $\bigcup_{k'=1}^k J_{k'}$ is left to be assigned to bags in $\bigcup_{k'=k+1}^K \optbags[k']$. 
Finally, we guess $m_{\max}^{(k)}$ which is the largest value $m$ for which $\opt(\optbags,m) \in I_k$.

\subparagraph*{Evaluation phase.}
In order to calculate the proxy objective function value $\profit(\mathcal{C})$, we need to combine $\calC$ with a cell $\hat\calC$ corresponding to a DP cell for the remaining problem. 
To this end, let us define the parameters of this cell $\hat\calC$. 
Clearly, we only need to define $\bags_1, \ldots, \bags_{k-1}$. 
Hence, the first parameter of $\hat\calC$ is $k-1$. 
Further, the total number of previously defined bags is given by $\hat{M}_{k,\ldots,K} = M_k + \noBags$. 
As we do not ignore scenarios, we choose $m_{\min}^{(k-1)}:= m_{\max}^{(k)}+1$. 
Since we have already guessed the assignment of jobs in $J_{k-1}$ to bags in $\optbags[k]$, we can simply calculate the values $\hat{a}_\ell$ for $\ell \in L_{k-1}$ that indicate the number of jobs $j$ with $p_j=\lceil(1+\epsilon)^{\ell}\rceil$ to be assigned to bags in $\optbags[k]$. 

It remains to calculate the value $\bar S \in \N$, the total size of jobs in $\bigcup_{k'=1}^{k-1} J_{k'}$ assigned as very small jobs to bags in $\bigcup_{k'=k}^{\kmax} \optbags[k']$. 
To this end, we calculate the total size of jobs from $\bigcup_{k'=1}^{k-2} J_{k'}$ that need to be packed in $\optbags[k]$.
For each $B\in\optbags[k]$, let $p^{+}(B)$ be the total size of the jobs from $J_{k-1} \cup J_{k}$ that~$B$ already packs. 
We define $S_{k}:=\sum_{B\in \optbags[k]}\max\left\{\left\lceil (1+\epsilon)^{\ell(B)}\right\rceil -p^{+}(B),0\right\}$.
Denote by $J_k(\optbags[k]\cup\optbags[k+1])$ the set of jobs from $J_k$ assigned to bags $\optbags[k]$ and $\optbags[k+1]$.  
Then, $\bar S$ is defined as
$\bar S:= S - \sum_{j\in J_k \setminus J_k(\optbags[k]\cup\optbags[k+1])}p_j + S_k,$
where $S$ is defined by the current DP cell~$\calC$. 
(Note that $\bar S$ does not contain jobs from $J_{k-1}$ to be assigned to $\optbags[k]$ as they are accounted for by $\hat{a}_\ell$.)

Hence, the remaining problem corresponds to some DP cell~$\hat\calC$ satisfying
\begin{equation}
    \hat{\mathcal{C}} = \Big(k-1,\noBags+ M_k ,\hat M_{k-1}, m_{\max}^{(k)}+1, \hat S \geq \bar S,(\hat{s}_\ell)_{\ell \in L_{k-1}}, (\hat{a}_\ell)_{\ell \in L_{k-1}}\Big), \label{eq:sc_dp_next}
\end{equation}
where $\hat{s}_\ell$ is a possible number of bags with size-estimate $\ell \in L_{k-1}$, i.e., with size $(1+\epsilon)^\ell$, the number of bags $|\optbags[k-1]|$ is given by $\hat{M}_{k-1} = \sum_{\ell \in L_{k-1}}\hat{s}_\ell$, and we require that $\hat S$ is at least $\bar S$.  

Given $\profit(\hat\calC)$ for some $\hat\calC$, we can now calculate $\profit(\calC)$ as follows: 
For each value $m \in \{m_{\min}^{(k)},...,m_{\max}^{(k)}\}$, we compute an estimate $\ALG(m)$ of the objective value of an optimal bag-to-machines assignment of $\bigcup_{k'=1}^k \hat{\bags}_{k'}$ and $\noBags$ large bags to $m$ machines. 
To this end, we use a variant of \Cref{lem:sc_root_main}, which is explained in detail in the appendix. 
Then, the profit of the candidate combination of $\calC$ with $\hat\calC$ is given by  
$\sum_{m=m_{\min}^{(k)}}^{m_{\max}^{(k)}}q_m \ALG(m) + \mathrm{profit}(\hat\calC)$.
Among all these candidate combinations, we choose the one with the largest profit and set $\profit(\calC) = \sum_{m=m_{\min}^{(k)}}^{m_{\max}^{(k)}}q_m \ALG(m) + \mathrm{profit}(\hat\calC)$.

\subsubsection{Analysis}

Observe that there are at most $\bigO_{\epsilon}(1)$ many distinct processing times of jobs $J_{k-1}  \cup J_{k}$ and each bag $B \in \optbags[k]$ contains at most $\bigO_{\epsilon}(1)$ many jobs from each of these processing times because of the definition of $\optbags[k]$, $J_{k-1}$, and~$J_k$. 

We guess all possible assignments of jobs to a single bag of size at most $\big(\frac1\eps\big)^{3k+3}$; typically such an assignment is called a \emph{configuration}. 
There are at most $\bigO_{\epsilon}(1)$ such configurations. 
Then, for each configuration and each $\ell$ with $\big\lceil(1+\epsilon)^{\ell}\big\rceil \in I_k$, we guess how often the configuration is assigned to a bag $B$ with $\ell(B) = \ell$. 
Following this sketch, the next lemma proves that we can in fact guess the job-to-bag assignment in polynomial time.

\begin{restatable}{lemma}{lemscassignDP}
    \label{lem:lmscassginDP}
    In time $n^{\bigO_{\epsilon}(1)}$ we can guess the assignment of jobs from $J_{k-1}$ and $J_{k}$ to the bags $\optbags[k]$ up to permuting jobs and bags. 
    %, as well as a size-estimate $\ell(B)$ for each bag $B\in\optbags[k-1]$.
\end{restatable}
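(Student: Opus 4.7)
The plan is to enumerate all relevant job-to-bag assignments in a compact, permutation-invariant form and bound the total count by $n^{\bigO_{\epsilon}(1)}$. First, I would observe that the combined interval $I_{k-1}\cup I_k$ has multiplicative length $(1/\epsilon)^{6}$, so by \Cref{lem:rounding} the distinct processing times $\lceil(1+\epsilon)^{\ell}\rceil$ appearing in $J_{k-1}\cup J_k$ number at most $\log_{1+\epsilon}\bigl((1/\epsilon)^{6}\bigr)=\bigO_{\epsilon}(1)$. Second, since every $B\in\optbags[k]$ satisfies $p(B)<(1/\epsilon)^{3k+3}$ and every job in $J_{k-1}\cup J_k$ has size at least $(1/\epsilon)^{3(k-1)}$, the number of jobs from $J_{k-1}\cup J_k$ packed into a single bag is at most $(1/\epsilon)^{6}=\bigO_{\epsilon}(1)$.

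Next, I would define a \emph{configuration} $c$ to be a multiset of at most $\bigO_{\epsilon}(1)$ of the aforementioned distinct processing times. By the two bounds above, the total number of configurations is $\bigO_{\epsilon}(1)^{\bigO_{\epsilon}(1)}=\bigO_{\epsilon}(1)$. Since jobs of identical processing time are interchangeable, and since bags in $\optbags[k]$ with identical size-estimate $\ell$ are interchangeable, any assignment is fully described \emph{up to permuting jobs and bags} by specifying, for each $\ell\in L_k$ and each configuration $c$, the number $n_{\ell,c}\in\N$ of bags with size-estimate $\ell$ that receive configuration~$c$, subject to $\sum_{c}n_{\ell,c}=s_{\ell}$.

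For a fixed $\ell\in L_k$, the number of such distributions is bounded by $\binom{s_\ell+\bigO_{\epsilon}(1)-1}{\bigO_{\epsilon}(1)-1}\le n^{\bigO_{\epsilon}(1)}$ since $s_\ell\le n$. Moreover, $|L_k|\le\log_{1+\epsilon}\bigl((1/\epsilon)^{3}\bigr)=\bigO_{\epsilon}(1)$, so multiplying over $\ell$ yields a total of $n^{\bigO_{\epsilon}(1)}$ guesses. For each guess we can, in polynomial time, verify that it is consistent with the DP cell by checking that the induced counts of used jobs respect the availability of jobs in $J_{k-1}$ and $J_k$; guesses violating the constraints imposed by $a_\ell$ (jobs reserved for $\bags_{k+1}$) or by $S$ (total size set aside for higher-level bags) are discarded at the evaluation stage.

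The main obstacle I foresee is purely notational: making precise that the permutation invariance is the correct notion here, i.e., that two assignments inducing the same vector $(n_{\ell,c})_{\ell,c}$ are equivalent for all subsequent DP decisions because they produce identical multisets of filled bags and identical residual job multisets. Once this is set up carefully, the counting argument above gives the claimed $n^{\bigO_{\epsilon}(1)}$ bound.
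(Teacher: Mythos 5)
Your proof is correct and follows essentially the same approach as the paper's: bound the number of distinct processing times in $J_{k-1}\cup J_k$ by $\bigO_\epsilon(1)$ using the multiplicative length $(1/\epsilon)^6$ of $I_{k-1}\cup I_k$, bound the number of jobs per bag in $\optbags[k]$ by $\bigO_\epsilon(1)$ via the size ratio, enumerate the $\bigO_\epsilon(1)$ configurations, and then for each $\ell\in L_k$ guess the multiplicity of each configuration among the $s_\ell$ bags. The only cosmetic difference is that you phrase the per-$\ell$ count as a binomial coefficient and explicitly note the consistency check against $a_\ell$ and $S$, whereas the paper directly bounds the enumeration by $M^{\bigO_\epsilon(1)}$ and relies on $M<n$; these are equivalent.
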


During the evaluation phase, we try all possible combinations of the current DP cell~$\calC$ with $\hat\calC$ satisfying \eqref{eq:sc_dp_next}, i.e., DP cells corresponding to the remaining subproblems matching the parameters of $\calC$. 
We now give a proof sketch of why our guesses combined $\calC$ indeed give a feasible solution to the subproblem for $k$. 
Let $\hat{\bags}_1,\dots,\hat{\bags}_{k-1}$ be the bags given by the solution to $\hat\calC$ and $\guessbags[k]$ be the bags corresponding to our guess. (We do not change $\bigcup_{k'=1}^{k-1} \hat{\bags}_{k'}$.)

We need to assign jobs from $\bigcup_{k'=1}^{k-1} J_{k'}$ to  $\guessbags[k]$ satisfying
\begin{enumerate}[(1)]
    \item for every bag $B \in \guessbags[k]$, $p(B) \geq (1+\epsilon)^{\ell(B)-1}$ and
    \item the total processing time of 
    \begin{itemize}
    \item all jobs in $\bigcup_{k'=1}^{k-1}J_{k'}$ not assigned to bags in $\bigcup_{k'=1}^k \hat{\bags}_{k'}$ and
    \item all jobs in $J_{k}$ neither assigned to $\guessbags[k]$ nor reserved by the values $a_\ell$ for the bags with size in $I_{k+1}$
    \end{itemize}
    is at least $S$.
\end{enumerate}

Each $B \in \guessbags[k]$ has already jobs from $J_k$ and $J_{k-1}$ of total size $p^{+}(B)$ assigned to it. 
Let $p^-(B) := \max\big\{\left\lceil (1+\epsilon)^{\ell(B)}\right\rceil - p^{+}(B), 0\big\}$ be the missing volume in $B$ to cover $B$ to the desired level of $\left\lceil (1+\epsilon)^{\ell(B)}\right\rceil$. 
If $p^-(B) = 0$, no additional job from $\bigcup_{k'=1}^{k-2} J_{k'}$ needs to be assigned to $B$. 
Otherwise, we greedily add jobs from $\bigcup_{k'=1}^{k-2} J_{k'}$ not packed in $\bigcup_{k'=1}^{k-1} \hat\bags_{k'}$ until assigning the next job would exceed $p^-(B)$. 
Hence, the total size of jobs $\bigcup_{k'=1}^{k-2} J_{k'}$ assigned to $B$ by this routine is at least $p^-(B) - \big(\frac{1}{\epsilon}\big)^{3k-6}$. 
Thus, 
$$
    p(B) \geq p^{+}(B) + p^-(B) - \bigg(\frac{1}{\epsilon}\bigg)^{3k-6} = (1+\epsilon)^{\ell(B)} - \bigg(\frac{1}{\epsilon}\bigg) ^{3k-6} \geq (1+\epsilon)^{-1}(1+\epsilon)^{\ell(B)}
$$
since by definition $(1+\epsilon)^\ell \geq \frac{1}{\epsilon} \cdot \big(\frac{1}{\epsilon}\big)^{3k-6}$ for all $\ell \in L_k$.

By choice of $\hat\calC$, the total size $\hat S$ of jobs in $\bigcup_{k'=1}^{k-1} J_{k'}$ neither assigned to bags in $\bigcup_{k'=1}^{k-1} \hat \bags_{k'}$ nor reserved for $\bags_{k}$ via the values $(\hat a_\ell)_{\ell \in L_{k-1}}$ is at least $\bar S$. 
With the definition of $\bar S$, we get 
\begin{equation*}
    \hat S \geq \bar S = S - \sum_{j\in J_k \setminus J_k(\guessbags[k]\cup\guessbags[k+1])}p_j + S_k = S - \sum_{j\in J_k \setminus J_k(\guessbags[k]\cup\guessbags[k+1])} p_j + \sum_{B\in \guessbags[k]} p^{-}(B) \, . 
\end{equation*}
We remark that the second term indeed corresponds to the contribution of jobs from $J_k$ to filling bags with sizes in $\bigcup_{k' = k+1}^{\kmax} I_{k'}$ since such jobs cannot be packed into $\bigcup_{k'=1}^{k-1} \guessbags[k]$ by definition of the corresponding sizes. 
Observe that the greedy procedure described above assigns jobs from $\bigcup_{k'=1}^{k-2} J_{k'}$ with total volume \emph{at most} $p^{-}(B)$ to $B \in \guessbags[k]$. 
Hence, the combination of our guess $\guessbags[k]$ (and the guessed partial assignment of $J_{k-1}\cup J_k$ to $\guessbags[k]$) and the solution for $\hat\calC$ is indeed a feasible solution for $\calC$.

Similar to the proof of \Cref{lem:sc_root_main}, we show that for any candidate solution (consisting of a guess $\guessbags[k]$, a partial assignment of $J_k \cup J_{k-1}$, and any solution for $\hat \calC$ as defined in \eqref{eq:sc_dp_next}), we can calculate the values $\ALG(m)$ for $m \in \{ m_{\min}^{(k)}, \ldots, m_{\max}^{(k)}\}$ in polynomial time such that $\ALG(m)$ is within a factor $(1 + \bigO(\eps))$ of the optimal assignment given the same set of bags. 
This is formalized in the next lemma. 

\begin{restatable}{lemma}{lemscDPmain} \label{lem:sc_DP_main}
    Let $\guessbags[k]$, $m_{\max}^{(k)}$ and the job-to-bag assignment of~$J_k\cup J_{k-1}$ to $\guessbags[k]$ be guesses as defined. 
    Further, let $\hat\calC$ satisfy \eqref{eq:sc_dp_next} and suppose that the bag sizes in $\guessbags[k-1]$ are given by $(\hat s_\ell)_{\ell \in L_{k-1}}$. 
    Let $\bags'\cup\{B_{\hat S}\}$ be a partition of the jobs $\bigcup_{k'=1}^{k - 2} J_{k'}$ into $M-\noBags-|\hat{\bags}_{k}|-|\hat{\bags}_{k-1}|+1$ bags such that 
    \begin{itemize}
        \item for each bag $B\in\bags'$ we have that $p(B)\le\epsilon\cdot\big(\frac{1}{\epsilon}\big)^{3k}$, 
        \item $p(B_{\hat S})\ge \hat S$, 
        \item $p(\bags'):=\sum_{B\in\bags'}p(B)\ge(1+\epsilon)^{-1}T$. 
    \end{itemize}
    Suppose that each bag $B\in\hat{\bags}_{k}\cup\hat{\bags}_{k-1}$ satisfies $p(B)\in[(1+\epsilon)^{\ell(B)},(1+\epsilon)^{\ell(B)+1})$ and let $\guessbags[L]$ contain $\noBags$ many large bags of size at least $\big(\frac1\eps\big)^{3k + 3}$.
    There is a polynomial-time algorithm that either asserts that our guess is incorrect and cannot be combined with $\hat\calC$ or that computes a vector $\left(\ALG(m)\right)_{m=m_{\min}^{(k)}}^{m_{\max}^{(k)}}$ such that $\opt(\bags_{L}\cup\hat{\bags}_{k}\cup\hat{\bags}_{k-1}\cup\bags',m)\in[(1+\epsilon)^{-5}\ALG(m),(1+\epsilon)\ALG(m))$
    holds for each $m\in\{m_{\min}^{(k)},...,m_{\max}^{(k)}\}$ for which
    $\opt(\bags_{L}\cup\hat{\bags}_{k}\cup\hat{\bags}_{k-1}\cup\bags',m)\ge(1+\epsilon)^{-1}(\frac{1}{\epsilon})^{3k}$.
    
    Further, we can find the best $\hat\calC$ that satisfies \eqref{eq:sc_dp_next} and can be combined with our guess in polynomial time.
\end{restatable}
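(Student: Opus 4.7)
The plan is to follow the template of \Cref{lem:sc_root_main}, adapted to the presence of the $\noBags$ previously-defined large bags in $\bags_{L}$ and to the DP transition. Fix $m\in\{m_{\min}^{(k)},\ldots,m_{\max}^{(k)}\}$. Since $m\le m_{\max}^{(k)}$ and hence the scenarios to be approximated satisfy $\opt(\optbags,m)\in I_{k}$, while every bag in $\bags_{L}$ has size at least $\big(\frac{1}{\epsilon}\big)^{3k+3}$, a machine carrying a bag of $\bags_{L}$ already exceeds the target minimum load; therefore we may assume without loss of generality that each bag of $\bags_{L}$ is placed alone on a private machine, leaving $m-\noBags$ machines for $\hat\bags_{k}\cup\hat\bags_{k-1}\cup\bags'$. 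If $m<\noBags$ we report that the guess cannot be combined with $\hat{\calC}$.

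Next I enumerate the assignment of $\hat\bags_{k}\cup\hat\bags_{k-1}$ to the remaining $m-\noBags$ machines. The crucial observation is that every bag in $\hat\bags_{k-1}$ has size at least $\big(\frac{1}{\epsilon}\big)^{3k-3}$, so any single machine whose load stays below $\big(\frac{1}{\epsilon}\big)^{3k+3}$ receives only $\bigO_{\epsilon}(1)$ bags from $\hat\bags_{k}\cup\hat\bags_{k-1}$; combined with the fact that $L_{k-1}\cup L_{k}$ contains only $\bigO_{\epsilon}(1)$ distinct size-estimates, the number of possible \emph{machine configurations}, i.e., multisets of size-estimates, is $\bigO_{\epsilon}(1)$. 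Hence in time $n^{\bigO_{\epsilon}(1)}$ I can enumerate all vectors specifying how many of the $m-\noBags$ machines use each configuration and discard those whose aggregate consumption of each size-estimate in $L_{k-1}\cup L_{k}$ fails to match the number of bags with that estimate in $\hat\bags_{k}\cup\hat\bags_{k-1}$. If no such vector exists, I again declare incompatibility with $\hat{\calC}$.

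For each surviving configuration vector I distribute the sand in $\bags'$ by the same greedy levelling used in \Cref{lem:sc_root_main}: sort the machines non-decreasingly by their current load and iteratively raise the lowest-loaded prefix one sand bag at a time until $\bags'$ is exhausted. I set $\ALG(m)$ to the largest resulting minimum load over all configuration vectors. Because every bag in $\bags'$ has size at most $\epsilon\cdot\big(\frac{1}{\epsilon}\big)^{3k}$, the greedy distribution differs from the best fractional sand distribution onto the same base loads by at most an additive $\epsilon\cdot\big(\frac{1}{\epsilon}\big)^{3k}$; together with $p(\bags')\ge(1+\epsilon)^{-1}T$ and the $(1+\epsilon)$-rounding of every bag size in $\hat\bags_{k}\cup\hat\bags_{k-1}$ from $p(B)$ down to $(1+\epsilon)^{\ell(B)}$, the composed error factors collapse to give $\opt(\bags_{L}\cup\hat\bags_{k}\cup\hat\bags_{k-1}\cup\bags',m)\in\big[(1+\epsilon)^{-5}\ALG(m),(1+\epsilon)\ALG(m)\big)$ in the regime $\opt(\bags_{L}\cup\hat\bags_{k}\cup\hat\bags_{k-1}\cup\bags',m)\ge(1+\epsilon)^{-1}\big(\frac{1}{\epsilon}\big)^{3k}$, where the additive slack is absorbed by the multiplicative factors.

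Finally, for the best compatible $\hat{\calC}$: once $\calC$, $\hat\bags_{k}$, the partial assignment of $J_{k-1}\cup J_{k}$, and $m_{\max}^{(k)}$ are fixed, condition \eqref{eq:sc_dp_next} pins down five of the seven coordinates of $\hat{\calC}$ and leaves only $(\hat s_{\ell})_{\ell\in L_{k-1}}$, $(\hat a_{\ell})_{\ell\in L_{k-1}}$ and $\hat S\ge\bar S$, each having $n^{\bigO_{\epsilon}(1)}$ choices, so I enumerate them, read off the tabulated $\profit(\hat{\calC})$, and keep the combination maximising $\sum_{m=m_{\min}^{(k)}}^{m_{\max}^{(k)}} q_{m}\ALG(m)+\profit(\hat{\calC})$. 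The main obstacle I expect is verifying that the composed additive and multiplicative errors really telescope into the claimed $(1+\epsilon)^{-5}$ and $(1+\epsilon)$ respectively; this is routine but must be checked carefully since the additive sand slack only disappears under the stated lower bound on $\opt$, and one must also confirm that restricting attention to guesses whose configuration vectors fit into exactly $m-\noBags$ machines does not exclude the (approximately) optimal assignment.
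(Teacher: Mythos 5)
Your high-level decomposition matches the paper's: place each bag of $\bags_{L}$ on a private machine (using $p(B)\ge(1/\epsilon)^{3k+3}$ so those machines are already covered at the relevant scale, and declaring incompatibility if $m<\noBags$), reduce to $m-\noBags$ machines, enumerate $\bigO_{\epsilon}(1)$ machine-configurations over the $\bigO_{\epsilon}(1)$ size-estimates in $L_{k}\cup L_{k-1}$, greedily level the fine mass, and finally search over $\hat{\calC}$ satisfying \eqref{eq:sc_dp_next} by enumerating $(\hat s_{\ell}),(\hat a_{\ell}),\hat S$. That is the paper's algorithm in essence.

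There is, however, a genuine confusion in your sand-distribution step. The algorithm does not have access to $\bags'$: that partition is the \emph{unknown} object the lemma quantifies over, and the same $\ALG(m)$ must certify the approximation for \emph{every} admissible $\bags'\cup\{B_{\hat S}\}$. You cannot ``distribute the sand in $\bags'$ \dots one sand bag at a time until $\bags'$ is exhausted,'' nor use $p(\bags')\ge(1+\epsilon)^{-1}T$ inside the algorithm. What the algorithm actually levels is a set $J_{T}$ of unit-size dummy jobs of total volume $T$, where $T$ is computable from the DP parameters; $\bags'$ enters only in the analysis. This distinction is exactly what makes the two-sided bound nontrivial: one direction uses $T\ge p(\bags')$ so the dummy levelling is a relaxation (giving $\opt(m)\le(1+\epsilon)\ALG'$), and the other converts a near-optimal dummy levelling into an integral packing of $\bags'$, incurring an additive loss of $\epsilon(1/\epsilon)^{3k}$ that is absorbed by the stated lower bound $\opt(m)\ge(1+\epsilon)^{-1}(1/\epsilon)^{3k}$ (giving $\opt(m)\ge(1+\epsilon)^{-2}\ALG'$), before relating $\ALG'$ to $\ALG(m)$ with two more $(1+\epsilon)$ factors. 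You flag that this telescoping ``must be checked carefully'' but do not carry it out; the paper discharges it by reducing to \Cref{lem:sc_root_main} after isolating the $m-\noBags$ machines. You also omit the explicit rejection rule when the best achievable minimum load on the restricted machines drops below $(1+\epsilon)^{-1}(1/\epsilon)^{3k}$, which is precisely what licenses absorbing the additive error into the multiplicative guarantee.
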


To compute the final solution, we combine the correct initial guesses with the solution stored in the DP cell corresponding to the remaining subproblem. This yields a global solution to the original problem.
In order to prove the correctness of our DP, we observe that for each $k \in [\kmax-1]$ there is a \emph{special} DP cell for which $k$ is the first parameter and whose other parameters correspond to the optimal solution (e.g., the assignment of jobs in $J_k$ to bags in $\bags^*_{k+1}$). We then prove by induction that, for each $k \in [\kmax-1]$, the solution stored in the corresponding special DP cell yields a profit that is similar to the optimal profit restricted to scenarios with $m\in \{1,...,m_{\max}^{(k)}\}$ machines, using \Cref{lem:sc_DP_main}.

\section{Conclusion}
In this paper, we continue the recent line of research on scheduling with uncertainty in the machine environment~\cite{BalkanskiOSW22,EberleHMNSS23,SteinZ20} by considering a stochastic machine environment in which the number of identical parallel machines is only known in terms of a distribution and the actual number is revealed once the jobs are assigned to bags which cannot be split anymore. 
Interestingly, we present polynomial time approximation schemes for minimizing the makespan as well as maximizing the minimum machine load, which matches their respective deterministic counterparts from the perspective of approximation algorithms.
We believe that our insights open up many interesting questions for future research such as extending the current model to the setting with uniformly related machines in which the uncertainty is modeled in terms of machine speeds as done in~\cite{EberleHMNSS23} from a robustness point-of-view or to the setting with different (job-based) objectives such as sum of weighted completion times.

\bibliographystyle{alpha}
\newpage
\bibliography{ref}

\newpage
\appendix
\section{Proofs for \Cref{sec:makespan}}\label{app:makespan}
\subsection{Proof of \Cref{lem:largest-bag-size}} 
\lemmakespanlargestbagsize*

\begin{proof}
    Consider a solution $\bags$ with $p(B) > 4 C$ for some  $B \in \bags$. We can lower bound the makespan in each scenario by $p(B)$. Hence, the overall objective function is at least 
    \[
        \sum_{m=1}^M q_m p(B) = p(B) > 4 C = 4 \sum_{m=1}^M q_m \max \Big\{ \max_{j \in J} p_j, \, \frac1m \sum_{j \in J} p_j \Big\} \, .
    \]
    The analysis of Graham's list scheduling algorithm~\cite{Graham1966} tells us that 
    \begin{equation}\label{eq:T-is-lb-on-opt}
        \max \Big\{ \max_{j \in J} p_j, \, \frac1m \sum_{j \in J} p_j \Big\} \leq \opt(J,m)  \, .        
    \end{equation}
    By using the set of bags $\bags'$ as designed by~\cite{SteinZ20}, we know that 
    \[
        \opt(\bags',m) \leq \Big(\frac53 + \bar \eps \Big) \opt(J,m) \leq 4 \max \Big\{ \max_{j \in J} p_j, \, \frac1m \sum_{j \in J} p_j \Big\} 
    \]
    holds for each scenario~$m$, where the first inequality follows from Theorem~3.10 in~\cite{SteinZ20} for $\bar \eps \in \big(0, \frac13 \big) $.  
    Thus, $\opt \leq 4  \sum_{m=1}^M q_m \max \big\{ \max_{j \in J} p_j, \, \frac1m \sum_{j \in J} p_j \big\}$. Therefore, $\bags$ cannot be an optimal solution as required. 
\end{proof}

\subsection{Proof of \Cref{lem:feasible-guess_packed-well}}
\lemmakespanfeasibleguesspackswell*

\begin{proof}
    We first argue that the total number of bags is at most~$M$. To this end, we observe that $p(B) < \eps C$ for $B \in \optbags\setminus\optbags[R]$. Hence, $|\optbags \setminus \optbags[R]| \geq \Big\lceil \frac{\sum_{B \in \optbags\setminus\optbags[R]} p(B) }{\eps T } \Big\rceil = \hat M_{\sand}$. Any regular bag~$B \in \optbags[R]$ is counted exactly once when defining the numbers~$\hat M_\ell$. Thus, $\sum_{\ell \in \calL} \hat  M_\ell +\hat M_{\sand} = |\optbags| \leq M$. 

    For a regular bag~$B \in \optbags[R]$, we observe that $p(B) \in [\eps^2 C, 4C)$ by our assumption on the maximal size of the bags in $\optbags$ and the definition of regular bags. Hence, $\hat \guess$ is indeed a possible guess of the algorithm. 

    It remains to argue that a set of bags $\bags$ with $\hat M_\ell$ bags of size $(1+\eps)^{\ell + 1}$ for each~$\ell \in \calL$ and $\hat M_{\sand}$ bags of size~$(1+\eps)\eps C$ can indeed pack all jobs. 
    To this end, we observe that any job assigned to a regular bag $B \in \optbags[R]$ with $p(B) \in \big[ (1+\eps)^{\ell}, (1+\eps)^{\ell+1}\big)$ can be assigned to a bag of size $(1+\eps)^{\ell+1}$, and, by construction, there are enough such bags. For the remaining jobs, we observe that 
    \[
        \hat M_{\sand} \eps C \geq \frac{\sum_{B \notin \optbags[R]} p(B_i) }{\eps C } \eps C= \sum_{B \notin \optbags[R]} p(B),
    \]
    that is, the total volume of the sand bags scaled down by a factor $(1+\eps)$ is at least the total volume of $\optbags\setminus\optbags[R]$. This observation allows us to define a fractional packing of the not yet packed jobs in the sand bags as follows: We fix an arbitrary order of those jobs and assign jobs in this order to the first sand bag until packing the next job would increase the total packed size beyond $\eps C$. We assign a fraction of this job to the first bag such that the total packed size is exactly $\eps C$ and treat the remaining fraction as first job for the second bag. Continuing inductively in this fashion, we can pack all jobs and pack at most $\hat M_{\sand}-1$ jobs fractionally.
    
    Of course, a feasible packing of jobs to bags needs to be integral. To this end, we completely pack each job in the bag where its first part was packed by the above fractional packing procedure. By definition, each job packed in $B \notin \optbags[R]$ has size at most $\eps^2 C$. Thus, the total size assigned to a sand bag is at most $\eps C + \eps^2 C = (1+ \eps) \eps C$, which is exactly the size of a sand bag created by the algorithm. Therefore, all jobs can feasibly be packed in a set of bags whose sizes are given by $\hat \guess$. 
\end{proof}

\subsection{Proof of \Cref{lem:good-guess}}

\lemmakespangoodguess*

\begin{proof} For simplicity, let $\opt(\hat \guess,m)$ denote $\opt((\hat M_\ell)_{\ell \in \calL\cup\{\sand\}}, m)$.
    We show the following statement: For each scenario $m$, 
    \[
        \opt(\hat \guess, m) \leq (1+\eps) \opt(\optbags,m) + 4 \eps C.
    \]     
    To this end, fix a scenario $m$. We assign the $\hat M_\ell$ bags of size $(1+\eps)^{\ell+1}$ to the same machines as the $\hat M_\ell$ regular bags with sizes in~$\big[ (1+\eps)^\ell, (1+\eps)^{\ell+1}\big)$. Next, we schedule the sand bags using list scheduling, i.e., we consider them in some fixed order and assign the next bag in this order to a machine that currently has the smallest total load. Denote the makespan of this assignment by $z_m$. Clearly, $\opt(\hat \guess, m) \leq z_m$. 

    Next, we argue that $z_m \leq (1+\eps)^2(1 + 2 \eps)\opt(\optbags,m)$ distinguishing two cases based on whether a regular or a sand bag determines the makespan. Consider the case where a regular bag of size $(1+\eps)^{\ell+1}$ determines the makespan~$z_m$ on some machine~$i$. This implies that no sand bag is scheduled on machine~$i$ as they are assigned last. For each~$\ell \in \calL$, the number of bags of size $(1+\eps)^{\ell+1}$ on machine~$i$ equals the number of regular bags with sizes in $\big[ (1+\eps)^\ell, (1+\eps)^{\ell+1}\big)$. Since the former bags are at most a factor~$(1+\eps)$ larger than the latter bags, we obtain $z_m \leq (1+\eps)\opt(\optbags,m)$ in this case. 
    
    Suppose now that some sand bag of size $(1+\eps)\eps C$ determines the makespan. Similar to the classical argumentation on list scheduling by \cite{Graham1966}, we can upper bound the starting time of this bag by 
    $\frac 1m \Big( \sum_{\ell \in \calL} \hat M_\ell (1+\eps)^{\ell+1} + \hat M_{\sand} (1+\eps)\eps C  \Big)$, which implies 
    \begin{align*}
        z_m & \leq \frac 1m \sum_{\ell \in \calL} \hat M_\ell (1+\eps)^{\ell+1} + \frac1m \hat M_{\sand} (1+\eps)\eps C + (1+ \eps)\eps C     \\
        & = \frac{1+\eps} m \sum_{B \in \optbags[R]}  p(B) + \frac1m \hat M_{\sand} \cdot (1+\eps) \eps C + (1+ \eps)\eps C, \shortintertext{where we used that we increase the total volume packed in regular bags at most by a factor~$(1+\eps)$. With the definition of $\hat M_{\sand}$, we get}
        z_m & \leq \frac{1+\eps} m \sum_{B \in \optbags[R]}  p(B) +  \frac1m \Bigg\lceil \frac{\sum_{B \in \optbags\setminus \optbags[R]} p(B) }{\eps C } \Bigg \rceil (1+\eps) \eps C    + (1+ \eps)\eps C \\
        & \leq \frac{1+\eps} m \sum_{B \in \optbags}  p(B) + 2 (1+\eps) \eps C \leq (1+\eps) \opt(\optbags,m) + 4\eps C.
    \end{align*}
    In the last line we used that $\frac1m \sum_{B \in \optbags} p(B) $ is a lower bound on $\opt(\optbags,m)$ and~$\eps \leq 1$. 

    Using $\sum_{m=1}^M q_m = 1$ and that $C$ is a lower bound on $\opt$ by \eqref{eq:T-is-lb-on-opt}, we conclude 
    \[
        \sum_{m=1}^M q_m \opt(\hat \guess,m) \leq \sum_{m=1}^M q_m  z_m \leq  (1+5\eps)\opt.
    \] 
    
\end{proof}

\subsection{Proof of \Cref{lem:approx-ratio}}

\lemmakespanapproxratio*

\begin{proof}
    As the algorithm returns the best guess, it is sufficient to give a guess that achieves the bound stated in the lemma. 
    The PTAS by \cite{HochbaumS87} guarantees that the value $z((\hat M_\ell)_{\ell \in \calL\cup\{\sand\}}, m)$ used for evaluating the guessed vector $(\hat M_\ell)_{\ell \in \calL\cup\{\sand\}}$ as defined above is upper-bounded by $(1+\eps) \opt(\hat \guess, m)$. With \Cref{lem:good-guess},  we get 
    \[
        \sum_{m =1}^M q_m z(\hat \guess, m) \leq (1+\eps) (1+5\eps)\opt. 
    \]    
    \Cref{cor:HS_bin-packing-PTAS} guarantees that the PTAS by \cite{HochbaumS88} packs all jobs exceeding the bag sizes in~$(\hat M_\ell)_{\ell \in \calL\cup\{\sand\}}$ by at  most a multiplicative factor $(1 + \eps)$. This in turn increases the actually achievable makespan in each scenario by at most a factor~$(1+ \eps)$, which  concludes the proof. 
    
\end{proof}

\subsection{Proof of \Cref{lem:running-time}}

\lemmakespanrunningtime*

\begin{proof}
    The algorithm guesses the sizes of regular bags up to a factor $(1+\eps)$, while all sand bags have a fixed size $\eps T$. There are at most $\lceil \log_{1+\eps} \frac4 \eps \rceil = \bigO\big(\frac1{\eps^2}\big)$ distinct bag sizes in $(\eps C, 4 C]$. As there are at most $M$ bags of each type, the number of possible guesses is bounded by $M^{\bigO(1/\eps^2)}$. For each guess, we check for feasibility using the PTAS by Hochbaum and Shmoys~\cite{HochbaumS88}, which runs in time $\bigO\big( M \cdot n^{10/\eps^2 + 3}\big)$.     
    For each guess and each scenario, we calculate an upper bound on the makespan using the PTAS by Hochbaum and Shmoys~\cite{HochbaumS87}, which runs in time $\bigO\big( \big( \frac n {\eps}\big)^{1/\eps^2} \big)$. Combining, we obtain a running time of 
    \[ 
        \bigO\Big(M^{\bigO(1/\eps^2)} \Big(M \cdot n^{10/\eps^2 + 3} + M  \cdot \Big( \frac n {\eps}\Big)^{1/\eps^2} \Big) \Big), 
    \]
    which concludes the proof as~$M < n$. 
    
\end{proof}

\section{Proofs for \Cref{sec:santaclaus}} \label{app:santa}

\subsection{Proof of \Cref{lem:rounding}}

In this section, we prove \Cref{lem:rounding}. We start by making some assumpions on the instance.
By decreasing~$\eps$ if necessary and by rescaling the instance, we can assume that $\frac1\eps \in \mathbb Z_{\geq 2}$ and $p_j \in \mathbb N$ for each~$j \in J$.
Let $d$ be the smallest integer such that $\left( \frac{n}{\epsilon} \right)^d > \sum_{j \in J} p_j$. We observe that the optimal value $\opt(\optbags,m)$ for each scenario~$m$ is in $[1,\left( \frac{n}{\epsilon} \right)^d]$.

In the following, we partition the interval $[1,\left( \frac{n}{\epsilon} \right)^d]$ into a polynomial number of intervals such that any two values in the same interval are within a factor of $\left( \frac{n}{\epsilon} \right)^{\frac{1}{\epsilon}}$ and such that two values in any two consecutive intervals have a multiplicative gap of at least $\left( \frac{n}{\epsilon} \right)^3$.
For each interval we will construct an instance with the desired structure as described in \Cref{lem:rounding}.
Then, we will show how to solve the original problem by combining the solutions to the smaller instances with a dynamic program while losing at most a factor of $1+\bigO(\epsilon)$ in the approximation factor.

Let $K$ be the smallest integer such that $\left( \frac{n}{\epsilon} \right)^{3K + \frac{K-1}{\epsilon}} \geq \left( \frac{n}{\epsilon} \right)^d$.
For each index $k \in \{0,1,\dots,K\}$ and each offset $a \in \big\{0,1,\dots,\frac{1}{\epsilon} + 3\big\}$, let $\Tilde{I}^a_k : = \Big[ \left( \frac{n}{\epsilon} \right)^{3k + \frac{k-1}{\epsilon} + a}, \left( \frac{n}{\epsilon} \right)^{3k + \frac{k}{\epsilon} + a}\Big)$.
For each $a$, define a collection of intervals $\Tilde{\mathcal{I}}_a = \{ \Tilde{I}^a_k : k \in\{0,1,\dots,K\}\}$, and let $\mathcal{M}_a = \{m \in [M] : \opt(\optbags,m) \in \bigcup_{\Tilde{I} \in  \Tilde{\mathcal{I}}_a} \Tilde{I} \}$ denote the scenarios with $\opt(\optbags,m) \in \bigcup_{\Tilde{I} \in  \Tilde{\mathcal{I}}_a} \Tilde{I}$.

The next lemma shows that there is a choice of an offset $a$ that allows us to ignore scenarios not in $\mathcal{M}_a$ and still get good approximations because the expected contribution to $\opt$ from those scenarios is small.

\begin{lemma}
    \label{lem:ingnore_eps_scenarios}
    There is an $a \in \{0,1,\dots,\frac{1}{\epsilon} +3\}$ such that
    \[
        \sum_{m \in \mathcal{M}_a} q_m \opt(\optbags,m)
            \geq \frac{1}{1+4 \epsilon} \opt.
    \]
\end{lemma}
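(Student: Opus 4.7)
The plan is to apply a probabilistic argument over the choice of offset~$a$. For any fixed scenario $m$, I will show that a uniformly random $a \in \{0,1,\dots,1/\epsilon+3\}$ places $\opt(\optbags,m)$ into some $\tilde I^a_k$ with probability at least $1/(1+4\epsilon)$; the lemma then follows by linearity of expectation.

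I would fix $m$ and set $\ell_m := \log_{n/\epsilon}\opt(\optbags,m)$. Unpacking the definition of~$\tilde I^a_k$, the event $\opt(\optbags,m)\in \tilde I^a_k$ is equivalent to $a\in(\ell_m - 3k - k/\epsilon,\,\ell_m - 3k - (k-1)/\epsilon]$. Let $S_m \subseteq \mathbb R$ be the union of these half-open intervals over $k\in\mathbb Z$. Then $S_m$ is periodic with period $P := 3 + 1/\epsilon$ and consists of intervals of length $1/\epsilon$ separated by half-open gaps of length exactly~$3$. The key observation is that, being half-open of length~$3$, each such gap contains \emph{exactly} three integers, independently of its horizontal position (using the assumption $1/\epsilon\in\mathbb Z$).

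Next I would bound the number of integer offsets in $\{0,\dots,1/\epsilon+3\}$ that lie outside $S_m$. This window spans a continuous range of length exactly one period~$P$, so at most two distinct gaps of $S_m$ can intersect it (either one full gap, or two partial gaps straddling the two boundaries of the window). A short case distinction on how the periodic pattern shifts relative to the integer grid, using the ``exactly three integers per gap'' property, shows that at most~$4$ of the $1/\epsilon+4$ candidate offsets simultaneously lie in the complement of $S_m$. Hence at least $(1/\epsilon+4)-4 = 1/\epsilon$ of them satisfy $\opt(\optbags,m)\in\bigcup_k \tilde I^a_k$, yielding the desired probability $\tfrac{1/\epsilon}{1/\epsilon+4}=\tfrac{1}{1+4\epsilon}$.

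Weighting by $q_m\opt(\optbags,m)$ and summing over $m$, linearity of expectation gives
\[
    \frac{1}{1/\epsilon + 4}\sum_{a=0}^{1/\epsilon + 3}\sum_{m\in\mathcal M_a}q_m\opt(\optbags,m) \;\ge\; \frac{1}{1+4\epsilon}\,\opt,
\]
so some offset $a$ in the sum attains the bound claimed by the lemma. The main obstacle is the combinatorial counting step that limits the number of window integers falling in gaps to $4$; this tight bound crucially exploits both the half-open structure of the intervals~$\tilde I^a_k$ (so that each gap contributes exactly three integers rather than possibly four) and the specific choice of range $\{0,\dots,1/\epsilon+3\}$, whose length matches one full period~$P$ so that no two full gaps can ever fit inside it.
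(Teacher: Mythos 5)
Your proof is correct and follows essentially the same probabilistic argument as the paper: choose $a$ uniformly at random, bound from below the number of good offsets per scenario $m$, and apply linearity of expectation. The paper simply asserts the key counting fact (that there are $\frac{1}{\epsilon}$ values of $a$ with $m\in\mathcal M_a$) without justification, whereas you supply a careful derivation of the slightly sharper (and more correct) statement that there are \emph{at least} $\frac1\epsilon$ such offsets, correctly handling the boundary case where both $a=0$ and $a=\frac1\epsilon+3$ land in $S_m$.
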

\begin{proof}
    Choose $a$ uniformly at random from $\{0,1,\dots,\frac{1}{\epsilon}+3\}$, then the probability that a certain value for~$a$ is chosen is given by $p_a =\frac{1}{4+\frac{1}{\epsilon}} = \frac{\epsilon}{1+4\epsilon}$. 
    Also, notice that for any $m$ there are $\frac{1}{\epsilon}$ values of $a \in \{0,1,\dots,\frac{1}{\epsilon}+3\}$ such that $m \in \mathcal{M}_a$.
    Hence,
    \begin{align*}
        E \left[ \sum_{m \in \mathcal{M}_a} q_m \opt(\optbags,m) \right]
            & = \sum_{a} p_a \sum_{m \in \mathcal{M}_a} q_m \opt(\optbags,m) \\
            & = \frac{\epsilon}{1+4\epsilon} \sum_{a} \sum_{m \in \mathcal{M}_a} q_m \opt(\optbags,m) \\
            & = \frac{\epsilon}{1+4\epsilon} \sum_{m \in [M]} \frac{1}{\epsilon} \cdot q_m \opt(\optbags,m) \\
            & = \frac{1}{1+4\epsilon} \sum_{m \in [M]} q_m \opt(\optbags,m) \\
            & = \frac{1}{1+4\epsilon} \opt. 
    \end{align*}
    Thus, there is an $a$ such that $\sum_{m \in \mathcal{M}_a} q_m \opt(\optbags,m) \geq \frac{1}{1+4\epsilon} \opt$.  
\end{proof}

Because there is a polynomial number of possible values for the offset $a$, by enumeration, we can assume that the chosen value~$a$ satisfies \Cref{lem:ingnore_eps_scenarios}.
Thus, to simplify notation, we drop the subscript from the collection of intervals $\Tilde{\mathcal{I}}_a$ and the superscript from $\Tilde{I}^a_k$.
As a consequence of \Cref{lem:ingnore_eps_scenarios} we also assume that the scenarios with $m \notin \mathcal{M}_a$ do not occur for the remainder of this section.
This is formalized in \Cref{cor:ingnore_eps_scenarios}.

\begin{corollary}
    \label{cor:ingnore_eps_scenarios}
    By losing a factor of at most $1 + 4\epsilon$ in the approximation ratio, we can assume that $q_m = 0$ for every $m \notin \mathcal{M}_a$.
\end{corollary}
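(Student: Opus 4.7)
\subparagraph*{Proof proposal for \Cref{cor:ingnore_eps_scenarios}.}
The plan is to reduce the original problem to the restricted problem in which scenarios outside $\mathcal{M}_a$ are effectively discarded, and argue that any approximation algorithm for the restricted problem yields an approximation algorithm for the original problem with a multiplicative loss of at most $1+4\epsilon$. Concretely, given the original instance with probabilities $(q_m)_{m=1}^M$, I form a modified instance with the same jobs and machine support but with new probabilities $\tilde{q}_m = q_m$ for $m \in \mathcal{M}_a$ and $\tilde q_m = 0$ for $m \notin \mathcal{M}_a$ (equivalently, condition on the event $m \in \mathcal{M}_a$, up to normalization which does not affect the maximizer of a linear objective).

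The key inequality is that, for any bag set $\bags$, the modified objective lower bounds the original one: $\sum_{m=1}^M q_m \opt(\bags,m) \geq \sum_{m \in \mathcal{M}_a} q_m \opt(\bags,m)$, since all terms are nonnegative. Now suppose the algorithm designed in the remainder of \Cref{sec:santaclaus} returns, for the modified instance, a bag set $\bags$ satisfying $\sum_{m \in \mathcal{M}_a} q_m \opt(\bags,m) \geq \frac{1}{1+\epsilon} \max_{\bags'} \sum_{m \in \mathcal{M}_a} q_m \opt(\bags',m)$. Then in particular this maximum is at least $\sum_{m \in \mathcal{M}_a} q_m \opt(\optbags,m)$, which by \Cref{lem:ingnore_eps_scenarios} is at least $\frac{1}{1+4\epsilon}\opt$. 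Chaining the three inequalities gives $\sum_{m=1}^M q_m \opt(\bags,m) \geq \frac{1}{(1+\epsilon)(1+4\epsilon)} \opt$, so the extra loss incurred by pretending $q_m = 0$ for $m \notin \mathcal{M}_a$ is exactly the claimed factor $1+4\epsilon$.

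The step that requires a small amount of care is making precise that the rest of the analysis in \Cref{sec:santaclaus} is entirely carried out with respect to the restricted objective $\sum_{m \in \mathcal{M}_a} q_m \opt(\bags,m)$; after this point, one may freely assume $q_m = 0$ for $m \notin \mathcal{M}_a$ in definitions (such as $m_{\max}^{(k)}$ and the DP's profit) without additional loss. I do not foresee a genuine obstacle, because the corollary is essentially a restatement of \Cref{lem:ingnore_eps_scenarios} combined with monotonicity of the objective in the $q_m$'s; the only content is bookkeeping the compounded approximation factor, which together with the $(1+\epsilon)$ from the main algorithm yields an overall loss of $1+\bigO(\epsilon)$ as required by \Cref{lem:rounding}.
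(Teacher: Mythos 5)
Your proof is correct and takes essentially the same approach as the paper, which (after Lemma~\ref{lem:ingnore_eps_scenarios}) treats the corollary as immediate: nonnegativity of the terms gives $\sum_{m} q_m \opt(\bags,m) \geq \sum_{m \in \mathcal{M}_a} q_m \opt(\bags,m)$ for any $\bags$, and Lemma~\ref{lem:ingnore_eps_scenarios} guarantees the restricted benchmark $\sum_{m \in \mathcal{M}_a} q_m \opt(\optbags,m)$ is within a factor $1+4\eps$ of $\opt$, so any $(1+\eps)$-approximation against the restricted objective chains to a $(1+\eps)(1+4\eps)$-approximation overall. Your explicit bookkeeping of the reduction is a faithful, slightly more detailed rendering of the same argument.
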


For each interval $\Tilde{I}_k \in \Tilde{\mathcal{I}}$, it is useful to consider the super-interval that also contains the gap $\Big[\left( \frac{n}{\epsilon} \right)^{3k+\frac{k-1}{\epsilon}+a-3},\left( \frac{n}{\epsilon} \right)^{3k+\frac{k-1}{\epsilon}+a}\Big)$.
We denote by $\Tilde{I}^+_k = \Big[\left( \frac{n}{\epsilon} \right)^{3k + \frac{k-1}{\epsilon} + a - 3}, \left( \frac{n}{\epsilon} \right)^{3k + \frac{k}{\epsilon} + a}\Big)$ the \emph{extended} interval of $\Tilde{I}_k$, and we denote by $J_k = \{j \in J : p_j \in \Tilde{I}^+_k\}$ the set of jobs whose processing time fall in the extended interval $\Tilde{I}^+_k$.
Let $\Tilde{\mathcal{I}}^+$ denote the collection of extended intervals and notice that, similar to the intervals in $\Tilde{\mathcal{I}}$, the extended intervals are pair-wise disjoint.

In the following \Cref{lem:bags_restricted_to_intervals} we further develop the assumption on the instance.
This lemma shows that that we can assume that bags only contain jobs from the same extended interval.

\begin{lemma}
    \label{lem:bags_restricted_to_intervals}
    By losing a factor of at most $(1 + \frac{\epsilon}{1-\epsilon})^2$ in the approximation ratio, we can assume that no bag contains jobs $j,j'$ whose processing times $p_j, p_{j'}$ are in distinct extended intervals.
\end{lemma}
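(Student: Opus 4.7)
The plan is to transform an optimal partition $\optbags$ into a new partition $\bags^{**}$ of $J$ into $M$ sets in which every bag contains jobs from only one extended interval, losing at most a factor of $(1+\epsilon/(1-\epsilon))^{2}$ in the expected minimum-load objective. The two loss factors correspond to two successive modifications: first, discarding each bag's \emph{residual} jobs (those outside its dominant extended interval); second, reintegrating these residuals into homogeneous bags while respecting the $M$-bag constraint.

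For each bag $B\in\optbags$, I define its \emph{level} $k(B):=\max\{k:B\cap\tilde{I}_{k}^{+}\neq\emptyset\}$ and split $B$ into its \emph{core} $B^{c}:=B\cap\tilde{I}_{k(B)}^{+}$ and \emph{residual} $B^{r}:=B\setminus B^{c}$. Let $\bags'$ be the partition obtained by replacing each $B$ with $B^{c}$. Consider any scenario $m$ with $\opt(\optbags,m)\in\tilde{I}_{k}$ and its optimal schedule for $\optbags$. The min-load machine has load in $\tilde{I}_{k}$, and no bag on it has level above $k$, since such a bag would have size at least the bottom of $\tilde{I}_{k+1}^{+}$, which equals the top of $\tilde{I}_{k}$. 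Hence all residuals on this machine have processing time at most the top of $\tilde{I}_{k-1}^{+}$, and summing over the at most $n$ such residual jobs gives a total of at most $n\cdot(n/\epsilon)^{-3}=\epsilon^{3}/n^{2}\leq\epsilon$ times the bottom of $\tilde{I}_{k}$. Retaining the same bag-to-machine assignment in $\bags'$, the min load is at least $(1-\epsilon)\opt(\optbags,m)$, yielding the first factor of $(1+\epsilon/(1-\epsilon))$.

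For the second step, I reinject the residuals: for each level $k$, I pool all residual jobs whose processing time lies in $\tilde{I}_{k}^{+}$ and append them to existing level-$k$ cores in $\bags'$, producing $\bags^{**}$. Appending only increases bag sizes, so the min load cannot decrease relative to $\bags'$. For levels that receive residuals but have no existing core, I first repurpose an empty bag of $\optbags$ (if any); otherwise I split one of the cores of an over-served level into two slots at the cost of potentially reducing that level's min-load contribution. Because the gap between consecutive extended intervals is $(n/\epsilon)^{3}$, only a bounded number of such splits can affect the min-load machine of any fixed scenario, and each split can reduce that machine's load by at most an additional $(1+\epsilon/(1-\epsilon))$ factor, giving the second loss factor.

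The main obstacle I anticipate is the accounting of step 2 when several levels receiving residuals lack cores in $\bags'$ and $\optbags$ uses all $M$ slots. My approach will be a scenario-wise charging argument: for each scenario $m$ with $\opt(\optbags,m)\in\tilde{I}_{k}$, the only consolidation that can affect its min load is between cores of levels close to $k$, and the geometric spacing of extended intervals bounds the cumulative impact on that scenario's min load within a single further factor of $(1+\epsilon/(1-\epsilon))$, closing the argument.
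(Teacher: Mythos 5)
Your step 1 (discarding residuals, keeping the same bag--machine assignment, and bounding the load decrease on the min-load machine in each scenario by $\epsilon\cdot\opt(\optbags,m)$) is sound and matches the first loss factor. The gap is in step 2. You want to place each level's pooled residual jobs into \emph{same-level} homogeneous cores, and if no core exists at that level you propose to ``repurpose an empty bag'' or ``split a core of an over-served level into two slots.'' Splitting strictly increases the number of bags, which must stay exactly $M$; you never say which bags get merged to compensate, and any merge of two cores at distinct levels re-creates precisely the heterogeneity you are trying to eliminate. Beyond the bookkeeping, the charging argument is not there: the claim that ``only a bounded number of such splits can affect the min-load machine of any fixed scenario'' has no bound attached, and ``each split can reduce that machine's load by at most an additional $(1+\epsilon/(1-\epsilon))$ factor'' is both unjustified and, if the number of splits were (say) two, would already overshoot the target $(1+\epsilon/(1-\epsilon))^2$. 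Note also that a split of a level-$k$ core itself can halve a bag on the min-load machine of some scenario whose optimum sits in $\tilde{I}_k$, which is a constant-factor loss, not a $(1+O(\epsilon))$-factor loss.

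The paper's proof avoids all of this by never changing the number of bags. Starting from $\optbags$, it processes levels $k=K,\dots,0$: at level $k$, it collects the residual jobs currently sitting in bags of $\bags_k$ and moves \emph{all} of them into a single existing bag whose dominant level is the largest $k''<k$ with $\bags_{k''}\neq\emptyset$; if no such $k''$ exists it excludes them. Because the gap structure guarantees no residual in $J'$ lies strictly between level $k''$ and level $k$, the receiving bag's dominant level does not go up, and each bag always retains its largest job, so the final bag sizes never drop below the original bag's maximum job size. Jobs cascade down through bags until they land in a bag whose dominant level equals their own; this is the ``reinjection'' you were after, but implemented in a way that needs no new bags and no splits. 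The two $(1+\frac{\epsilon}{1-\epsilon})$ factors in the paper then come from (i) the residual jobs removed from the min-load machine of each relevant scenario, and (ii) the jobs excluded at the bottom of the cascade, which are argued to be negligible in every scenario with positive probability. If you want to repair your proof, replacing the per-level reinjection of step 2 with this single-receiving-bag cascade is the missing idea.
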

\begin{proof}
    For a set of bags $\bags$, let $\bags_k = \{ B \in \bags : B \cap J_k \neq \emptyset \}$ denote the set of bags containing a job from $J_k$.
    For convenience, define $\bags_k = \emptyset$ if $k \notin \{0,\dots,K\}$.

    We modify $\optbags$ with the following iterative process.
    Start by defining $\bags := \optbags$, then start the iterative process.
    For $k = K,\dots,0$ do: let $J'$ be the set of jobs contained in bags in $\bags_k$ that are also contained in $\bigcup_{k'<k} J_{k'}$.
    If there is a $k'' < k$ such that $\bags_{k''} \neq \emptyset$, then move every job in $J'$ into a bag $B \in \bags_{k''}$; otherwise exclude every job in $J'$ from the bags in $\bags_k$ and end the iterative process.
    Return~$\bags$.

    Observe the procedure might not place some jobs from $J$ to any bags, as they were excluded in the last iteration.
    We argue why we may assume that no jobs were excluded at the end without loss of generality.
    Let $j'$ be the job with greatest processing time among the excluded jobs and let $J_{\bar k}$ be such that $j' \in J_{\bar k}$.
    Notice that every bag in $\bags$ contains at least one job from $\bigcup_{k > \bar k} J_{k}$ and that, together with \Cref{cor:ingnore_eps_scenarios}, we have that for every scenario $m$ with $q_m > 0$
    \begin{align*}
        \sum_{j \in \bigcup_{k'\leq \bar k} J_k} j
            & \leq n \left( \frac{n}{\epsilon} \right)^{3 \bar k +\frac{\bar k}{\epsilon} + a}
                \leq \left( \frac{n}{\epsilon} \right)^{3 \bar k +\frac{\bar k}{\epsilon} + a + 1}
                \leq \epsilon \cdot \opt(\optbags,m),
    \end{align*}
    which implies $\opt(\bags,m) \geq (1-\epsilon) \opt(\optbags,m)$ for scenarios $m$ with $q_m > 0$.
    Therefore, by enumerating the collections $\cup_{k>\bar k} J_k$, we can assume that there are no jobs from $\bigcup_{k \leq \bar k} J_{k}$, obtain a solution, and then arbitrarily assign the jobs in $\bigcup_{k \leq \bar k} J_{k}$ to bags. We can show that this procedure costs us at most a factor $1+\frac{\epsilon}{1-\epsilon}$ in the approximation ratio.
    
    For a bag $B \in \optbags$, let $\bags(B)$ denote its corresponding modified bag in $\bags$.
    Notice at any iteration of the procedure, a bag never have its jobs with greatest processing time removed:
    Consider an iteration $k$ and a bag $B$, and let $j$ be the job with greatest processing time that is currently in $B$.
    Suppose, for a contradiction that $j$ is removed from $B$ in this iteration.
    Then $B \in \mathcal{B}_k$, and we have that $B$ contains a job $j' \in J_k$, i.e., $p_{j'} \in \Tilde{I}^+_k$.
    But since $j$ was removed, it means that $j \in J_{k'}$ for some $k' < k$, and this implies that $p_{j} < p_{j'}$, which is a contradiction.
    
    Thus, for any bag $B$, we have that the largest processing time among initial jobs of $B$ is a lower bound for the total processing time of jobs in $B$ after the procedure, i.e., $\max_{j \in B} p_j \leq \sum_{j \in \bags(B)} p_j$.
    Also, because the iterative process maintains the invariant that at iteration $k = \bar k$, bags in $\bags_{k}$ contains only jobs from $J_{k}$ for every $k>\bar k$, it follows that the procedure returns a collection of bags $\bags$ of which no bag contains jobs $j$ and $j'$ whose processing times $p_j$ and $p_{j'}$ are in distinct extended~intervals.
    
    Let $\opt(\optbags,m)$ be such that $q_m >0$ and consider the schedule for $\optbags$ that achieves $\opt(\optbags,m)$.
    By \Cref{cor:ingnore_eps_scenarios}, there is a $k$ such that $\opt(\optbags,m) \in \Tilde{I}_{k}$.
    Suppose, for the sake of contradiction, that one of the $m$ machines contains only bags with jobs of processing time strictly smaller than $\left( \frac{n}{\epsilon} \right)^{3{k}+\frac{{k}-1}{\epsilon}+a-1}$, then the load on that machine is at most $n \cdot \left( \frac{n}{\epsilon} \right)^{3{k}+\frac{{k}-1}{\epsilon}+a-1} < \left( \frac{n}{\epsilon} \right)^{3{k}+\frac{{k}-1}{\epsilon}+a}$, which is a contradiction as this implies that $\opt(\optbags,m) \notin \Tilde{I}_{k}$.
    We~conclude that every machine is scheduled at least a bag containing a job whose processing time is at least~$\left( \frac{n}{\epsilon} \right)^{3{k}+\frac{{k}-1}{\epsilon}+a-1}$.

    Let us consider by how much the scheduled processing time of a machine~$i$ can decrease when we replace the bags scheduled on $i$ by $\opt(\optbags,m)$ by the corresponding bags in $\bags$.
    Let $\optbags(i) = \{B \in \optbags : B \rightarrow i\}$ denote the collection of bags scheduled on $i$ in the schedule achieving $\opt(\optbags,m)$, and let $\Tilde{\optbags}(i) = \{B \in \optbags(i) : \sum_{j \in B} p_j > \sum_{j \in \bags(B)} p_j \}$ denote the bags scheduled on $i$ whose size was decreased by the procedure.
    Then the decrease can be at most
    \begin{align*}
        \sum_{B \in \optbags(i)} \left( \sum_{j \in B} p_j - \sum_{j \in \bags(B)} p_j \right)
            & \leq \sum_{B \in \Tilde{\optbags}(i)} \left( \sum_{j \in B} p_j - \sum_{j \in \bags(B)} p_j \right) \\
            & \leq \sum_{j \in J_{k'} : k' < k} p_j \\
            & < n \left( \frac{n}{\epsilon} \right)^{3k+\frac{k-1}{\epsilon}+a-3} \\
            & \leq n \left( \frac{\epsilon}{n} \right) \opt(\optbags,m) \\
            & = \epsilon \cdot \opt(\optbags,m).
    \end{align*}
    Since the bound for the decrease is valid for every machine in every scenario $m$ with $q_m > 0$, it follows that
    \begin{align*}
        \sum_{m } q_m \opt(\bags,m)
            & = \sum_{m : q_m > 0} q_m \opt(\bags,m) \\
            & \geq \sum_{m : q_m > 0} q_m \opt(\optbags,m) (1-\epsilon) \\
            & = \sum_{m } q_m \opt(\optbags,m) (1-\epsilon) \\
            & = (1-\epsilon) \opt \\
            & = \frac{1}{1+\frac{\epsilon}{1-\epsilon}} \opt.
    \end{align*}
This completes the proof.  
\end{proof}

Through the remainder of this section, we assume that no bag contains jogs $j,j'$ whose processing times are in distinct extended interval.

In \Cref{lem:bags_restricted_to_intervals} we showed a result that restrict the kind of jobs assigned to a bag.
Then, in \Cref{lem:bags_in_interval_dominates_opt_in_interval} we prove a similar result to that restricts the kind of bags assigned to machines with minimum load.

\begin{lemma}
    \label{lem:bags_in_interval_dominates_opt_in_interval}
    By losing a factor of at most $1+\frac{\epsilon}{1-\epsilon}$, we may assume that for each scenario $m$, each machine with minimum load is assigned only bags with jobs from $J_k$, where $k$ is the index of the interval $\Tilde{I}_k$ with $\opt(\optbags,m) \in \Tilde{I}_k$.
\end{lemma}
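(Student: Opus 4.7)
The plan is to modify the optimal schedule scenario by scenario while leaving the bag set $\optbags$ untouched, paying a $(1-\eps)$ factor in each scenario's minimum load. The preceding lemma lets me assume every bag $B \in \optbags$ lies entirely in one extended interval, so I may refer to $B$ as a $J_{k(B)}$-bag. Fix a scenario $m$ with $q_m > 0$, let $k$ satisfy $\opt(\optbags,m) \in \tilde I_k$, and denote $L := \opt(\optbags,m)$.

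My first observation is that \emph{no machine with load $L$ in any optimal schedule contains a $J_{k'}$-bag with $k' > k$}: a single job in $J_{k+1}$ has processing time at least $(n/\eps)^{3(k+1)+k/\eps+a-3} = (n/\eps)^{3k+k/\eps+a}$, which equals the upper endpoint of $\tilde I_k$ and therefore strictly exceeds $L$. Building on this, I modify the schedule as follows: on every min-load machine I remove all $J_{k'}$-bags with $k' < k$ and reassign them to a non-min-load machine (say, a current max-load one). Arguing exactly as in the proof of \Cref{lem:bags_restricted_to_intervals}, the total volume removed from any one machine is bounded by $\sum_{k' \le k-1}\sum_{j\in J_{k'}}p_j \le 2n\cdot(n/\eps)^{3(k-1)+(k-1)/\eps+a}$; dividing by $L \ge (n/\eps)^{3k+(k-1)/\eps+a}$ gives a ratio of at most $2n/(n/\eps)^3 \le \eps$. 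Hence each cleaned machine ends up with load in $[(1-\eps)L, L]$ containing only $J_k$-bags, while every other machine retains load at least $L$ (its load only increases when we dump bags onto it). The new minimum load is therefore at least $(1-\eps)L$ and is attained only on cleaned machines. Summing with weights $q_m$ yields an expected minimum load at least $(1-\eps)\opt = \opt/(1+\eps/(1-\eps))$.

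The main obstacle is the degenerate case $m=1$, since there is no other machine to absorb the removed bags. I would address it by a direct argument: $\sum_{k'>k}\sum_{j\in J_{k'}}p_j=0$ (otherwise a single such job would already exceed $L=\sum_j p_j \in \tilde I_k$), so $\sum_{j\in J_k}p_j\ge(1-\eps)L$; hence treating the sole machine as effectively loaded only by its $J_k$-bags gives the same $(1-\eps)$ guarantee and integrates into the scenario-averaged bound, preserving the overall factor of $1+\eps/(1-\eps)$ claimed in the lemma.
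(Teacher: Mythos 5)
Your proof is essentially identical to the paper's: both use the observation that a single job from $\bigcup_{k'>k} J_{k'}$ already exceeds $\opt(\optbags,m)$, so min-load machines carry no such bags, and both then bound the total volume of jobs from $\bigcup_{k'<k}J_{k'}$ by $\eps\cdot\opt(\optbags,m)$ so that stripping small-interval bags off the min-load machines loses at most a $(1-\eps)$ factor. You are somewhat more explicit than the paper (which simply says ``removing every bag ... reduces their total processing time by at most $\eps\cdot\opt(\optbags,m)$'' without specifying a destination), but note that your fix for $m=1$ does not cover the whole degenerate family: for $m\ge 2$ it can also happen that \emph{every} machine attains load $L$, so there is no ``non-min-load machine'' to dump onto. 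The repair is immediate---designate one machine as the receiver and clean the rest; since it strictly gains load whenever anything is moved, it cannot be a minimizer in the modified schedule---but your wording presupposes a receiver that need not exist, and your $m=1$ remark suggests you viewed that as the only obstacle.
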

\begin{proof}
    As bags only contain jobs whose processing times are in the same extended interval and since $\opt(\optbags,m) \in \Tilde{I}_k$, it follows that no machine with minimum load processes a bag with a job from $\bigcup_{k'>k} J_{k'}$.
    
    Recall that the total processing time of the jobs in $J_{k'<k}$ is at most
    \begin{align*}
        n \left( \frac{n}{\epsilon} \right)^{3k + \frac{k-1}{\epsilon}+a-3}
            \leq n \left(\frac{\epsilon}{n}\right) \opt(\optbags,m)
            = \epsilon \cdot \opt(\optbags,m).
    \end{align*}
    Therefore, removing every bag containing jobs not in $J_k$ from the machines with minimum load reduces their total processing time by at most $\epsilon$. 
\end{proof}

Through the remainder of this section, we assume that, for each scenario $m$, the machines with minimum load is assigned only bags with jobs from $J_k$, where $k$ is the index of the interval $\Tilde{I}_k$ with $\opt(\optbags,m) \in \Tilde{I}_k$.

For each $k$, we define a small interval within the gaps between the intervals in $\Tilde{\mathcal{I}}$.
Let the \emph{head gap} of the extended interval $\Tilde{I}^+_{k}$ be defined as $\Tilde{G}_k = \Big[\left( \frac{n}{\epsilon}\right)^{3k+\frac{k-1}{\epsilon}+a-3}, \left( \frac{n}{\epsilon}\right)^{3k+\frac{k-1}{\epsilon}+a-2}\Big)$.
Notice that any two values in~$\Tilde{G}_k$ are within a factor of $\left( \frac{n}{\epsilon} \right)$ of each other, and that $\Tilde{G}_k \subsetneq \Tilde{I}^+_{k} \setminus \Tilde{I}_{k}$ is a subinterval of the gap between the intervals $\Tilde{I}_{k-1}$ and $\Tilde{I}_{k}$.
In \Cref{lem:headgap_empty} we show that we may ignore jobs whose processing times fall in the head gaps.

\begin{lemma}
    \label{lem:headgap_empty}
    By losing a factor of at most $1+\frac{\epsilon}{1-\epsilon}$, we may assume that there are no jobs whose processing time are in $\bigcup_{k} \Tilde{G}_k$.
\end{lemma}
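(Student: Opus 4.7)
The plan is to preprocess the instance by deleting every head-gap job and to show that the optimum of the resulting instance is at least $(1-\epsilon)$ times the optimum of the original. After the algorithm has solved the residual instance, the deleted head-gap jobs can be reinserted arbitrarily into existing bags, which can only increase the minimum machine load in every scenario; the resulting loss on the expected objective is exactly the claimed factor $1+\frac{\epsilon}{1-\epsilon}$.

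Formally, I would fix an optimal $\optbags$ satisfying the structural properties granted by \Cref{cor:ingnore_eps_scenarios}, \Cref{lem:bags_restricted_to_intervals}, and \Cref{lem:bags_in_interval_dominates_opt_in_interval}, define $\bags'$ by removing every head-gap job from its bag in $\optbags$, and argue per scenario that $\opt(\bags',m)\ge(1-\epsilon)\opt(\optbags,m)$ whenever $q_m>0$. Let $\hat k$ be such that $\opt(\optbags,m)\in\Tilde I_{\hat k}$ and take the optimal schedule for $\optbags$ in scenario $m$. By \Cref{lem:bags_in_interval_dominates_opt_in_interval}, every minimum-load machine is assigned only bags with jobs from $J_{\hat k}$; since the extended intervals $\Tilde I_k^+$ are pairwise disjoint and $\Tilde G_{k'}\subset \Tilde I_{k'}^+$, the head-gap jobs that can sit on such a machine come exclusively from $\Tilde G_{\hat k}$. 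Using $n\le\frac{n}{\epsilon}$, their total processing time is bounded by
\[
n\cdot\left(\frac{n}{\epsilon}\right)^{3\hat k+\frac{\hat k-1}{\epsilon}+a-2}\le\left(\frac{n}{\epsilon}\right)^{3\hat k+\frac{\hat k-1}{\epsilon}+a-1}\le\epsilon\cdot\left(\frac{n}{\epsilon}\right)^{3\hat k+\frac{\hat k-1}{\epsilon}+a}\le\epsilon\cdot\opt(\optbags,m),
\]
so every min-load machine loses at most $\epsilon\opt(\optbags,m)$ of its load when we pass from $\optbags$ to $\bags'$.

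For non-min-load machines I would argue in the style of the proof of \Cref{lem:bags_in_interval_dominates_opt_in_interval}: by \Cref{lem:bags_restricted_to_intervals} any bag containing a head-gap job of some level $k'$ is a pure $J_{k'}$-bag, and if $k'\neq \hat k$ then any remaining non-head-gap job it contained has size at least $\left(\frac{n}{\epsilon}\right)^{3k'+\frac{k'-1}{\epsilon}+a-2}>\opt(\optbags,m)$, so the bag alone still covers a whole non-min-load machine; otherwise the bag becomes empty after deletion and we swap the resulting empty bag with a bag from a machine whose load has surplus above $(1-\epsilon)\opt(\optbags,m)$ so that every machine retains load at least $(1-\epsilon)\opt(\optbags,m)$. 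The main obstacle is executing this swap cleanly and verifying that enough surplus exists on non-min-load machines; once the per-scenario bound $\opt(\bags',m)\ge(1-\epsilon)\opt(\optbags,m)$ is in place, taking the $q_m$-weighted sum and combining with the (harmless) final reinsertion of head-gap jobs yields the claimed approximation loss of $1+\frac{\epsilon}{1-\epsilon}$.
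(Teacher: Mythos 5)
Your core per-scenario bound is exactly the paper's: fix $\hat k$ with $\opt(\optbags,m)\in\tilde I_{\hat k}$, observe via \Cref{lem:bags_in_interval_dominates_opt_in_interval} that any minimum-load machine carries only $J_{\hat k}$-bags, so the only head-gap jobs it can lose are from $\tilde G_{\hat k}$, and bound their total volume by $n\cdot(n/\epsilon)^{3\hat k+(\hat k-1)/\epsilon+a-2}\le\epsilon\cdot\opt(\optbags,m)$. That chain of inequalities matches the paper verbatim. The ``reinsert deleted jobs at the end, which can only help'' framing is also consistent with the paper's reduction style.

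Where you diverge is in going on to worry about non-minimum-load machines. The paper's proof stops after the min-load-machine bound and does not discuss them at all. Within your extra argument there are two issues. First, the claim that for $k'\neq\hat k$ ``any remaining non-head-gap job has size at least $(n/\epsilon)^{3k'+(k'-1)/\epsilon+a-2}>\opt(\optbags,m)$'' is only valid for $k'>\hat k$; for $k'<\hat k$ this lower bound is far \emph{below} $\opt(\optbags,m)$. That case is still harmless, but for a different reason: the total processing time of \emph{all} jobs in $\bigcup_{k'<\hat k}J_{k'}$ is already at most $\epsilon\cdot\opt(\optbags,m)$ (this is the same volume bound used in the proofs of \Cref{lem:bags_restricted_to_intervals} and \Cref{lem:bags_in_interval_dominates_opt_in_interval}), so removing any subset of them from any machine costs at most $\epsilon\cdot\opt(\optbags,m)$. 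Second, and more importantly, the swap argument for the case where a level-$k'>\hat k$ bag is emptied is genuinely unresolved, as you say yourself: there is no guarantee that some machine carries surplus above $(1-\epsilon)\opt(\optbags,m)$ to donate, and one can set up configurations in which every non-min machine is exactly at $\opt(\optbags,m)$ and carries such a bag. So as written this branch of your proof does not close. It is worth noting that the paper's own proof is silent on this point (it asserts the conclusion after bounding only the min-load machine's drop), so you have in effect surfaced a concern the paper glosses over rather than failed to reproduce a step the paper carries out; but the swap as you propose it is not a valid patch, and if you want to pursue it you would need a different mechanism (e.g., appealing to the later structural assumption in \Cref{lemma:m_m1_machines_only_jobs_from_J_i}, or arguing that such all-head-gap, level-$k'>\hat k$ bags can be preprocessed away).
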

\begin{proof}
    Consider a scenario $m$ and let $\Tilde{I}_k$ be such that $\opt(\optbags,m) \in \Tilde{I}_k$.
    By \Cref{lem:bags_in_interval_dominates_opt_in_interval}, a machine with minimum load in the schedule achieving $\opt(\optbags,m)$ only processes bags with jobs from $J_k$.
    Thus, if this machine processes a bag with a job whose processing time is from a head gap, it must be the head gap $\Tilde{G}_k$.
    But the total processing time of jobs with processing times in $\Tilde{G}_{k}$ is at most
    \begin{align*}
        n \left( \frac{n}{\epsilon} \right)^{3k + \frac{k-1}{\epsilon}+a-2}
            < n \left(\frac{\epsilon}{n}\right) \opt(\optbags,m)
            = \epsilon \cdot \opt(\optbags,m).
    \end{align*}
    Thus, removing the jobs with processing times in the gap $\bigcup_{k'\leq k} \Tilde{G}_{k'}$ decreases the load on a machine with minimum load in each scenario~$m$ by at most $\epsilon$. 
\end{proof}

Through the remainder of this section, we assume that there are no jobs whose processing times are in $\bigcup_{k} \Tilde{G}_k$.

\begin{corollary}
    \label{cor:headgap_empty}
    The sum of the processing times of jobs in $J_{k-1}$ is strictly less than $\epsilon$ times the processing time of any job in $\bigcup_{k'\geq k} J_{k'}$, for every $k$ such that $\bigcup_{k'\geq k} J_{k'} \neq \emptyset$.
\end{corollary}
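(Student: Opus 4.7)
The plan is to prove this corollary by a direct exponent comparison, leveraging the assumption from \Cref{lem:headgap_empty} that no job has processing time in any head gap $\tilde G_{k'}$. Since the gap $\tilde G_k$ is exactly the bottom sub-band of the extended interval $\tilde I_k^+$, removing jobs in head gaps effectively tightens the lower bound of the range of processing times of jobs in $J_{k'}$ for each $k'$.

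First I would unfold the relevant exponents. Substituting $k-1$ into the definition of $\tilde I^+_k$ shows that the upper boundary of $\tilde I^+_{k-1}$ is $(n/\epsilon)^{3k + (k-1)/\epsilon + a - 3}$. Since every $j \in J_{k-1}$ has $p_j < (n/\epsilon)^{3k + (k-1)/\epsilon + a - 3}$ by definition of $J_{k-1}$ (the interval is half-open), and $|J_{k-1}| \le n$, we get the strict bound
\[
    \sum_{j \in J_{k-1}} p_j \;<\; n \cdot \left( \tfrac{n}{\epsilon}\right)^{3k + (k-1)/\epsilon + a - 3}.
\]
Next, I would lower bound processing times in $\bigcup_{k' \geq k} J_{k'}$. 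By \Cref{lem:headgap_empty}, no job has processing time in $\tilde G_k$, so every $j \in J_k$ has $p_j \geq (n/\epsilon)^{3k + (k-1)/\epsilon + a - 2}$; the same bound trivially holds for jobs in $J_{k'}$ with $k' > k$ since their processing times lie in strictly higher intervals. Thus any $j \in \bigcup_{k' \geq k} J_{k'}$ satisfies $p_j \ge (n/\epsilon)^{3k + (k-1)/\epsilon + a - 2}$.

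Finally, I would combine the two bounds. Using $n \cdot (n/\epsilon)^{-1} = \epsilon$, we have
\[
    n \cdot \left(\tfrac{n}{\epsilon}\right)^{3k + (k-1)/\epsilon + a - 3} \;=\; \epsilon \cdot \left(\tfrac{n}{\epsilon}\right)^{3k + (k-1)/\epsilon + a - 2} \;\le\; \epsilon \cdot p_j
\]
for any $j \in \bigcup_{k' \ge k} J_{k'}$, and combining with the previous strict inequality gives $\sum_{j' \in J_{k-1}} p_{j'} < \epsilon \cdot p_j$, as required. The case $J_{k-1} = \emptyset$ is trivial since the sum is then zero and the right-hand side is positive by the hypothesis that $\bigcup_{k' \ge k} J_{k'} \neq \emptyset$.

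There is no real obstacle here; the statement is essentially a bookkeeping consequence of how the extended intervals and head gaps were defined. The only point requiring mild care is tracking why the inequality is \emph{strict}: it comes from the half-openness of $\tilde I^+_{k-1}$ at its upper endpoint, which ensures the bound on $\max_{j \in J_{k-1}} p_j$ is strict even before summing over the (at most $n$) jobs.
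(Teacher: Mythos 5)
Your proof is correct and takes essentially the same approach as the paper's: bound $\sum_{j\in J_{k-1}} p_j$ by $n\cdot(n/\epsilon)^{3k+(k-1)/\epsilon+a-3}$, observe this equals $\epsilon\cdot(n/\epsilon)^{3k+(k-1)/\epsilon+a-2}$, and then use \Cref{lem:headgap_empty} to lower-bound every $p_j$ with $j\in\bigcup_{k'\ge k}J_{k'}$ by that same quantity. The only (cosmetic) difference is that you place the strictness in the first inequality and handle $J_{k-1}=\emptyset$ explicitly, whereas the paper writes the strict inequality at the final step; your bookkeeping is the cleaner of the two.
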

\begin{proof}
    As there are no jobs whose processing times lie in some head gap, it follows that
    \begin{align*}
        \sum_{j \in J_{k-1}} p_j
            & \leq n \left( \frac{n}{\epsilon} \right)^{3k + \frac{k-1}{\epsilon} + a - 3}
                = \epsilon \left( \frac{n}{\epsilon} \right)^{3k + \frac{k-1}{\epsilon} + a - 2}
                < \epsilon \cdot \min_{j \in \bigcup_{k'\geq k} J_{k'}} p_j.
    \end{align*}
\end{proof}

\begin{lemma}
    \label{lemma:m_m1_machines_only_jobs_from_J_i}
    Let $m_1$ and $m_2$ be the smallest and greatest number of machines such that $\opt(\optbags,m_1), \opt(\optbags,m_2) \in \Tilde{I}_k$ for some $k$ such that $\bigcup_{k'\geq k} J_{k'} \neq \emptyset$.
    By losing at most $1 + \frac{\epsilon}{1-\epsilon}$ we may assume that for each scenario $m \in [m_1,m_2]$, there are at least $m - m_1 + 1$ machines containing only bags with jobs from $J_k$.
\end{lemma}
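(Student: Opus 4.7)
The plan is to argue by contradiction that in any optimal schedule $\sigma_m$ for a scenario $m \in [m_1, m_2]$, the number of machines that host at least one bag containing a job from $\bigcup_{k'>k} J_{k'}$ is at most $m_1 - 1$; once this is established, the remaining at least $m - m_1 + 1$ machines carry only $J_k$-bags and sand, and the same sand-removal trick used in \Cref{lem:bags_in_interval_dominates_opt_in_interval} finishes the argument.

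Fix $m \in [m_1, m_2]$ and let $\sigma_m$ be an optimal schedule achieving $\opt(\optbags,m) \in \tilde I_k$. I would call a machine of $\sigma_m$ a \emph{large-bag machine} if it carries a bag containing a job from $\bigcup_{k'>k} J_{k'}$. Every such job has processing time at least $(n/\epsilon)^{3(k+1)+k/\epsilon+a-3} = (n/\epsilon)^{3k+k/\epsilon+a}$, which is precisely the right endpoint of $\tilde I_k$. Let $L_m$ denote the number of large-bag machines in $\sigma_m$.

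The heart of the proof is to show $L_m \leq m_1 - 1$. Suppose for contradiction that $L_m \geq m_1$. Pick any $m_1$ of the large-bag machines of $\sigma_m$ and construct a schedule for scenario $m_1$ by relocating all bags from the remaining $m - m_1$ machines onto the chosen ones (any assignment works). Each chosen machine still carries its original large bag, so its load is at least $(n/\epsilon)^{3k+k/\epsilon+a}$, and the relocated bags can only increase this. Hence the new $m_1$-machine schedule has minimum load at least $(n/\epsilon)^{3k+k/\epsilon+a}$, forcing $\opt(\optbags,m_1) \geq (n/\epsilon)^{3k+k/\epsilon+a}$. This contradicts $\opt(\optbags,m_1) \in \tilde I_k$, because $\tilde I_k$ is right-open at $(n/\epsilon)^{3k+k/\epsilon+a}$. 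Therefore $L_m \leq m_1 - 1$, and at least $m - m_1 + 1$ machines of $\sigma_m$ carry no bag with a job from $\bigcup_{k'>k} J_{k'}$.

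By \Cref{lem:bags_restricted_to_intervals}, each bag has jobs from a single extended interval, so these $m - m_1 + 1$ machines host only $J_k$-bags and bags from the sand set $\bigcup_{k'<k} J_{k'}$. To upgrade the conclusion to ``only bags with jobs from $J_k$'' I would then remove the sand bags from these machines, exactly as in \Cref{lem:bags_in_interval_dominates_opt_in_interval}. By \Cref{cor:headgap_empty} the total sand volume is at most $\epsilon\cdot\opt(\optbags,m)$, so the minimum load in each scenario drops by at most an $\epsilon$-fraction, giving a global loss of at most $1/(1-\epsilon) = 1 + \epsilon/(1-\epsilon)$. The main obstacle is the contradiction step, where one has to check that the relocation yields a legitimate $m_1$-machine schedule and that every retained machine still carries a bag whose size strictly exceeds the right endpoint of $\tilde I_k$; the half-open nature of $\tilde I_k$ is essential here, converting the weak bound ``min load $\geq$ right endpoint'' into a strict contradiction with $\opt(\optbags,m_1)<$ right endpoint.
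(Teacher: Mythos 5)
Your proof is correct and follows essentially the same strategy as the paper: use the fact that $\opt(\optbags,m_1)$ lies strictly below the right endpoint of $\tilde I_k$ (which equals the minimum processing time of any job in $\bigcup_{k'>k}J_{k'}$) to bound the number of machines carrying such bags by $m_1-1$, and then remove the remaining small bags at a $1/(1-\eps)$ loss via \Cref{cor:headgap_empty}. The only cosmetic difference is that the paper bounds the global number of bags containing a job from $\bigcup_{k'>k}J_{k'}$ by $m_1-1$ rather than running your contradiction per scenario $m$; your explicit $m_1$-machine relocation argument makes concrete a counting step that the paper justifies quite tersely.
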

\begin{proof}
    Because $m_1$ is the smallest number of machines such that $\opt(\optbags,m_1) \in \Tilde{I}_k$ and every machine with the smallest load in this scenario contain only bags with jobs from $J_k$, it follows that the number of bags with jobs from $\bigcup_{k' > k} J_{k'}$ is strictly less than $m_1$.
    Therefore for each scenario $m \in [m_1,m_2]$, there are at least $m - m_1 + 1$ machines without bags with jobs from $\bigcup_{k' > k} J_{k'}$.
    Let $O(m)$ denote the value obtained from the schedule achieving $\opt(\optbags,m)$ by removing the bags with jobs from $\bigcup_{k'<k} J_{k'}$.
    From \Cref{cor:headgap_empty}, we have that the sum of the processing times of the jobs in $J_{k-1}$ is strictly less than $\epsilon$ times the processing time of any job in $\bigcup_{k'\geq k} J_{k'}$, therefore,
    \begin{align*}
        O(m) &\geq \frac{1}{1+\frac{\epsilon}{1-\epsilon}} \opt(\optbags,m).
    \end{align*}
    
\end{proof}

Through the remainder of this section, we assume that, for each scenario $m \in [m_1,m_2]$, there are at least $m - m_1 + 1$ machines containing only bags with jobs from $J_k$, where $m_1$ and $m_2$ are the smallest and largest number of machines such that $\opt(\optbags,m_1), \opt(\optbags,m_2) \in \Tilde{I}_k$.

Recall that \Cref{lem:rounding} arises by showing that the general problem can be solved by solving subproblems restricted on each extended interval with a dynamic programming (DP) approach.
In \Cref{lem:rounding_aux1} we show that we may assume certain properties on each subproblem by losing an arbitrarily small factor in the approximation ratio.

\begin{lemma}
    \label{lem:rounding_aux1}
    Let $p_{\min}$ and $p_{\max}$ respectively be the smallest and the largest processing times among jobs in $J$.
    If $\frac{p_{\max}}{p_{\min}} \leq n^{c(\epsilon)}$ for some constant $c(\epsilon)$, then by losing a factor of at most $1+\epsilon$ in the approximation ratio, we can assume for each job $j \in J$ that $p_j = \lceil (1+\epsilon)^{\ell_j} \rceil$ for some $\ell_j \in \mathbb{N}_0$ and $p_j \in [1,n^{c(\epsilon)}]$, where $c(\epsilon)$ is some global constant.
\end{lemma}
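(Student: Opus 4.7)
The plan is a two-step reduction: first scale the instance so that every processing time is large compared to $1/\epsilon$, then round each processing time up to the next value of the form $\lceil (1+\epsilon)^{\ell_j} \rceil$. The Santa Claus objective $\sum_m q_m \opt(\bags,m)$ is positively homogeneous in the processing times, so multiplying every $p_j$ by a common positive integer $\alpha$ preserves the approximation ratio exactly while preserving integrality. I take $\alpha := \lceil C/(\epsilon p_{\min}) \rceil$ for a sufficiently large constant $C$, so that after scaling every $p_j$ is an integer of value at least $C/\epsilon$, while the ratio $p_{\max}/p_{\min} \leq n^{c(\epsilon)}$ is unchanged and the largest processing time is at most $O(n^{c(\epsilon)}/\epsilon)$. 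For $n$ at least a constant depending on $\epsilon$ this bound is absorbed into $n^{c'(\epsilon)}$ for a new constant $c'(\epsilon)$; if $n$ is smaller than that constant, the instance has a constant number of jobs and can be solved exactly by enumeration.

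For the rounding step, I set $\ell_j := \lceil \log_{1+\epsilon} p_j \rceil$ and $p_j' := \lceil (1+\epsilon)^{\ell_j} \rceil$, giving $p_j \leq p_j' \leq (1+\epsilon) p_j + 1$. Because $p_j \geq C/\epsilon$, the additive $+1$ is dominated by $\epsilon \cdot p_j$ once $C$ is chosen sufficiently large, so $p_j' \leq (1+\bigO(\epsilon))\, p_j$. Consequently, for any fixed bag-to-machine assignment and any machine, its total load in the rounded instance lies within a $(1+\bigO(\epsilon))$-factor of its load in the original; taking minima over machines and then maxima over bag-to-machine assignments yields $\opt(\bags,m) \leq \opt'(\bags,m) \leq (1+\bigO(\epsilon))\, \opt(\bags,m)$ for every bag set $\bags$ and every scenario $m$, where $\opt'$ refers to the rounded instance.

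Summing over $m$ weighted by $q_m$ and applying these bounds to $\optbags$ and to an optimal bag set for the rounded instance gives $\opt \leq \opt' \leq (1+\bigO(\epsilon))\opt$. Hence any $(1+\epsilon)$-approximation for the rounded instance, reinterpreted on the original $p_j$'s, remains a $(1+\bigO(\epsilon))$-approximation, and rescaling the internal rounding parameter by a suitable constant makes the total loss at most $1+\epsilon$ as required. Finally, since every $p_j' \in [1,\, O(n^{c(\epsilon)}/\epsilon)] \subseteq [1, n^{c''(\epsilon)}]$ for a suitable constant $c''(\epsilon)$, the second conclusion of the lemma also holds.

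The main obstacle is the additive $+1$ produced by the ceiling in $\lceil (1+\epsilon)^{\ell_j} \rceil$: for very small jobs this destroys the multiplicative guarantee, which is exactly why the initial scaling step is needed to push all processing times into a regime where the $+1$ is negligible relative to $\epsilon \cdot p_j$.
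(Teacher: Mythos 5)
Your plan---scale so that the additive rounding error is dominated, round to the nearest power of $1+\epsilon$, and argue that per-machine loads change by only a $(1+\bigO(\epsilon))$-factor---is the same reduction as in the paper, but with the rounding direction flipped: you round up (taking $\ell_j=\lceil\log_{1+\epsilon}p_j\rceil$), whereas the paper normalizes $p'_j=p_j/p_{\min}$, picks the largest $\ell_j$ with $(1+\epsilon)^{\ell_j}\leq p'_j$, and sets $p''_j=\lceil(1+\epsilon)^{\ell_j}\rceil$, i.e.\ it rounds down. Your two-sided inequality $\opt(\bags,m)\leq\opt'(\bags,m)\leq(1+\bigO(\epsilon))\opt(\bags,m)$ is in fact spelled out more completely than the paper's proof, which only exhibits the single bound $\opt''\geq\opt/(1+\epsilon)$; and you are right that controlling the additive $+1$ coming from the ceiling is the crux of the argument.

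There is, however, a concrete gap. Your scaling factor $\alpha=\lceil C/(\epsilon p_{\min})\rceil$ only enlarges the instance; it never shrinks it. If $p_{\min}$ is already at least $C/\epsilon$, then $\alpha=1$ and $\alpha p_{\max}=p_{\max}$. The hypothesis of the lemma bounds only the \emph{ratio} $p_{\max}/p_{\min}\leq n^{c(\epsilon)}$, not $p_{\max}$ itself, so $p_{\max}$ can be arbitrarily large, and your final claim that every rounded processing time lies in $[1,n^{c''(\epsilon)}]$ is false. The way to obtain the polynomial range is to use scale-invariance in the other direction as well: first divide every $p_j$ by $p_{\min}$ (as the paper does; the intermediate values need not be integers, because the subsequent ceiling $\lceil(1+\epsilon)^{\ell_j}\rceil$ re-integerizes them), which puts all values in $[1,n^{c(\epsilon)}]$, and only then multiply by a fixed integer such as $\lceil C/\epsilon\rceil$ so that everything is at least $C/\epsilon$ before rounding. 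With that normalization step prepended, your argument goes through and yields the claimed conclusion.
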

\begin{proof}
    Since $\frac{p_{\max}}{p_{\min}} \leq n^{c(\epsilon)}$, by normalizing to new processing times $p'_j = \frac{p_j}{p_{\min}}$, every processing time now falls into the interval $[1,n^{c(\epsilon)}]$.
    Observe that the original and the normalized problem are equivalent.

    Suppose from now on that the processing times are normalized, and for each job $j \in J$ let $\ell_j$ be the largest integer such that $p'_j \geq (1+\epsilon)^{\ell_j}$, and define a new processing time $p''_j = \lceil (1+\epsilon)^{\ell_j} \rceil$.
    Observe that $p''_j \geq \frac{1}{1+\epsilon} p'_j$ for every job $j \in J$.
    Let $\opt''$ denote the optimum value with the new processing times.
    Consider the packing and schedule of the optimal solution $\optbags$ for the processing times~$p_j'$ for a scenario with $m$ machines, and let $\textsc{O}(m)$ denote the resulting objective function value that is obtained by replacing the processing times~$p_j'$ with~$p_j''$.
    Then,
    \begin{align*}
        \opt''
            & \geq \sum_{m=1}^M q_m \textsc{O}(m)
                \geq \sum_{m=1}^M q_m \frac{1}{1+\epsilon} \opt(\optbags,m)
                = \frac{1}{1+\epsilon} \opt.
    \end{align*}
    
\end{proof}

We are now ready to prove \Cref{lem:rounding}.

\lemrounding*
\begin{proof}
    Assume that there is a $\alpha$-approximation algorithm for instances where $p_j = \lceil (1+\epsilon)^{\ell_j} \rceil$ for some $\ell_j \in \mathbb{N}_0$ and $p_j \in [1,n^{c(\epsilon)}]$ for all jobs $j \in J$.
    Then by \Cref{lem:rounding_aux1}, this is a $(1 + \epsilon)\alpha$-approximation for the subproblems where the job set is restricted to $J_k$ for each $k$.

    Let $U_k = \bigcup_{k' \geq k} \Tilde{I}^+_{k'}$ denote the union of the last $K-k+1$ extended intervals of $\Tilde{\mathcal{I}}^+$, and let $H_k = \bigcup_{k' \geq k} J_{k'} = \{j \in J : p_j \in U_k\}$ denote the set of jobs whose processing times are in $U_k$.
    We define a DP where each cell $\mathcal{C}[k, m_{\max}, b]$ is specified by
    \begin{itemize}
        \item $k \in\{1,\dots,K\}$, specifying the interval $U_k$ and the subset of the jobs $H_k$;
        \item $b \in \{0,\dots, M\}$, counting the number of bags of $\optbags$ that contain jobs in $H_k$;
        \item $m_{\max} \in \{0,\dots, b\}$, denoting the maximal number of machines such that $\opt(\optbags,m) \in U_k$ for all $m\leq m_{\max}$.
        % counting the number of scenarios $m$ for which the value $\opt(\optbags,m)$ is in $U_k$.
    \end{itemize}
    For each $k$, as it is not clear how to compute the correct values $b$ and $m_{\max}$ a priori, we compute the cells $\mathcal{C}[k, m_{\max}, b]$ for every combination of $m_{\max}$ and $b$ with $0 \leq m_{\max} \leq b \leq M$.

    For a set of bags $\bags$ and a corresponding schedule on $m$ machines, let $l(\bags,m)$ denote the minimum machine load.
    Then we define the \emph{partial expected value} of $\bags$ and its corresponding schedule for up to $m_{\max}$ machines as
    \begin{align*}
        v_{m_{\max}}(\bags) := \sum_{m =1}^{m_{\max}} q_{m} \cdot l(\bags,m).
    \end{align*}
    
    For each cell $\mathcal{C}[k, m_{\max}, b]$, we compute and store a set of~$b$ bags $\bags[k, m_{\max}, b]$ together with a corresponding schedule for each scenario with $m \leq m_{\max}$ machines.
    Each such cell is computed by executing the $(1 + \epsilon)\alpha$-approximation to an instance restricted to the jobs in $J_k$ and merge the obtained solution with a suitable solution stored in some cell $\mathcal{C}[k+1, m_{\max}', b']$, for some $m_{\max}'$ and $b'$, that maximizes the partial expected value (we describe the details about the restricted instances and the merging later).
    As we compute the DP table, we will also ensure that each cell $\mathcal{C}[k, m_{\max}, b]$ has the following properties:
    \begin{itemize}
        \item If a bag in $\bags[k, m_{\max}, b]$ contains a job from $J_{k'}$, then every job in that bag is also from $J_{k'}$, and
        \item If $m_{\max}$ is the largest number of machines such that $\opt(\optbags,m_{\max}) \in U_k$ and $\optbags$ contains $b$ bags with jobs from $H_k$, then
        \begin{align*}
            v_{m_{\max}}(\bags[k, m_{\max}, b])
                & = \sum_{m =1}^{m_{\max}} q_m \cdot l(\bags[k, m_{\max}, b],m)
                    \geq \frac{1}{(1 + \epsilon)\alpha} \sum_{m =1}^{m_{\max}} q_m \opt(\optbags,m).
        \end{align*}
    \end{itemize}
    Therefore, by completing the DP table and computing $\mathcal{C}[1, M, M]$, we obtain a $(1 + \epsilon)\alpha$-approximate solution.

    We now describe how to compute the DP table.
    First, define
    \begin{align*}
        Q^t_s & :=
            \begin{cases}
                \sum_{m=s}^t q_m    & \text{ if there is an $m \in \{s,s+1,\cdots,t\}$ with $q_m > 0$,} \\
                1                   & \text{ otherwise.} \\
            \end{cases}
    \end{align*}
    We start by computing $\mathcal{C}[K, m_{\max}, b]$ for every $0 \leq m_{\max} \leq b \leq M$.
    If $J_{K} = \emptyset$, then $\bags[K, m_{\max}, b]$ is a collection of $b$ empty bags.
    Otherwise, let $p_{\max}$ and $p_{\min}$ be the largest and smallest processing times in $J_{K}$, and recall that by the definition of $\Tilde{I}^+_1$ we have that $\frac{p_{\max}}{p_{\min}} \leq n^{c(\epsilon)}$.
    Then, we can obtain, in polynomial time, a $(1 + \epsilon)\alpha$-approximate solution $\Tilde{\bags}(K, m_{\max}, b)$ and a corresponding schedule for the instance where the set of jobs $\Tilde{J} = J_{K}$, the processing times $\Tilde{p}_j = p_j$ for each job $j \in \Tilde{J}$, the maximal number of machines is $\Tilde{M} = b$ with probabilities $\Tilde{q}_m = \frac{q_m}{Q_{1}^{m_{\max}}}$ if $m \in [1,m_{\max}]$ and $\Tilde{q}_m = 0$ if $m \in (m_{\max},b]$.
    Notice that in this new instance, the number of bags and scenarios is~$b$, but the number of machines is actually at most $m_{\max}$ as the probability of having more than $m_{\max}$ machines is zero.
    
    Let $\Tilde{\optbags}(K, m_{\max}, b)$ denote an optimal set of bags for the constructed instance, let $\Tilde{O}(\Tilde{\optbags}(K, m_{\max}, b), m)$ denote the value of its corresponding optimal schedule for the scenario with $m$ machines, and let $\Tilde{O} = \sum_{m \leq b} \Tilde{q}_m \Tilde{O}(\Tilde{\optbags}(K, m_{\max}, b), m)$ denote the expected value of this optimal solution.
    Let $m_{\max}'$ be the greatest number of machines such that $\opt(\optbags, m_{\max}') \in \Tilde{I}^+_{K}$.
    If there is no such $m_{\max}'$, let $m_{\max}' = 0$.
    Let $b'$ be the number of bags in $\optbags$ with jobs from $J_{K}$ and recall that, by \Cref{lem:bags_restricted_to_intervals}, we assume that those bags contain only jobs from $J_{K}$.
    Further, \Cref{lem:bags_in_interval_dominates_opt_in_interval}, we assume that the machines with minimum load only contain bags with jobs from $J_{K}$, hence the expected value of the schedule that achieves $\opt(\optbags, m)$ on the scenarios where $m \in [1,m_{\max}']$ with probabilities $\Tilde{q}_m$ cannot be higher than the value of $\Tilde{O}$.
    This is,
    \begin{align*}
        \sum_{m = 1}^{m_{\max}'} \Tilde{q}_m \cdot \Tilde{O}(\Tilde{\optbags}(K, m_{\max}', b'), m)
            & \geq \sum_{m = 1}^{m_{\max}'} \Tilde{q}_m \cdot \opt(\optbags, m ).
    \end{align*}
    Therefore, for $m_{\max}'$ and $b'$ we have that
    \begin{align*}
        v_{m_{\max}'}(\Tilde{\bags}(K, m_{\max}, b))
            & = \sum_{m = 1}^{m_{\max}'} q_{m} \cdot l(\Tilde{\bags}(K, m_{\max}', b'), m) \\
            & = Q^{m_{\max}'}_{1} \sum_{m = 1}^{m_{\max}'} \Tilde{q}_{m} \cdot l(\Tilde{\bags}(K, m_{\max}', b'), m) \\
            & \geq Q^{m_{\max}'}_{1} \frac{1}{(1+\epsilon) \alpha} \sum_{m = 1}^{m_{\max}'} \Tilde{q}_{m} \cdot \Tilde{O}(\Tilde{\optbags}(K, m_{\max}', b'), m) \\
            & \geq Q^{m_{\max}'}_{1} \frac{1}{(1+\epsilon) \alpha} \sum_{m = 1}^{m_{\max}'} \Tilde{q}_m \cdot \opt(\optbags, m ) \\
            & = \frac{1}{(1+\epsilon) \alpha} \sum_{m = 1}^{m_{\max}'} q_m \cdot \opt(\optbags, m ),
    \end{align*}
    and that for every $m_{\max}$ and $b$ such that $0 \leq m_{\max} \leq b \leq M$ we have that the bags in $\bags[K, m_{\max}, b]$ are either empty or contain only jobs in $J_{K}$.
    Thus, for each $m_{\max}$ and $b$ such that $0 \leq m_{\max} \leq b \leq M$, we compute and store in~$\mathcal{C}[K,m_{\max},b]$ the bag set $\bags[K, m_{\max}, b] := \Tilde{\bags}(K, m_{\max}, b)$ together with a corresponding schedule for each scenario $m \leq m_{\max}$, and these cells satisfy the two properties of the DP table in polynomial time.

    Assume that for some $k$, every cell $\mathcal{C}[k',m_{\max}',b']$ with $k' \geq k$ and $0 \leq m_{\max}' \leq b' \leq M$ has been computed.
    We now describe how to compute the cell $\mathcal{C}[k-1,m_{\max},b]$ for $0 \leq m_{\max} \leq b \leq M$.
    If $J_{k-1} = \emptyset$ there are no new jobs to be considered, i.e. $H_{k-1} = H_k$, thus we copy the cell $\mathcal{C}[k,m_{\max},b]$ into $\mathcal{C}[k-1,m_{\max},b] $.
    Otherwise, we construct suitable candidate sets of bags $\mathcal{A}(m_{\max,1},b_1,m_{\max,2},b_2)$ with a corresponding schedule for each $m_{\max,1} \leq b_1 \leq M$ and $m_{\max,2} \leq b_2 \leq M$ among which we choose one with maximal partial expected value to be stored.
    The parameters $m_{\max,1}$ and $b_1$ indicate the cell $\mathcal{C}[k,m_{\max,1},b_1]$ from which we take the bags with jobs in $H_k$.
    For bags with jobs in $J_{k-1} = H_{k-1} \setminus H_{k}$, we obtain an $(1+\epsilon)\alpha$-approximate solution for an instance with $m_{\max,2} = m_{\max} - m_{\max,1}$ machines and $b_2 = b - b_1$ bags (similar to the ones for the cases $\mathcal{C}[K,m_{\max},b]$).
    
    As $J_{k-1} \neq \emptyset$, let $p_{\max}$ and $p_{\min}$ be the largest and smallest processing times in $J_{k-1}$, and recall that by the definition of $\Tilde{I}^+_{k-1}$ we have that $\frac{p_{\max}}{p_{\min}} \leq n^{c(\epsilon)}$.
    Then, for each $1 \leq m_{\max,2} \leq b_2 \leq M$, we can obtain, in polynomial time, a $(1 + \epsilon)\alpha$-approximate solution $\Tilde{\bags}(k-1, m_{\max,2}, b_2)$ and a corresponding schedule for the instance with job set $\Tilde{J} = J_{k-1}$, processing times $\Tilde{p}_j = p_j$ for $j \in \Tilde{J}$, and maximal number of machines $\Tilde{M} = b_2$ with probabilities $\Tilde{q}_{m} = \frac{q_{m_{\max}-m_{\max,2}+m}}{Q_{m_{\max}-m_{\max,2}+1}^{m_{\max}}}$ if $m \in [1,m_{\max,2}]$ and $\Tilde{q}_m = 0$ if $m \in (m_{\max,2}, b_2]$.
    Notice that in this new instance, the number of bags and scenarios is~$b_2$, but the number of machines is actually at most $m_{\max,2}$, as the probability of having more than $m_{\max,2}$ machines is zero.

    Let $m_{\max,1} = m_{\max} - m_{\max,2}$ and $b_1 = b - b_2$ and define the candidate set of bags $\mathcal{A}(m_{\max,1},b_1,m_{\max,2},b_2) = \bags[k,m_{\max,1},b_1] \cup \Tilde{\bags}(k-1, m_{\max,2}, b_2)$, and define its corresponding schedule as follows.
    \begin{itemize}
        \item If $m \in [1,m_{\max,1}]$, then schedule the bags in $\mathcal{A}(m_{\max,1},b_1,m_{\max,2},b_2) \cap \bags[k,m_{\max,1},b_1]$ identical to the schedule corresponding to $\bags[k,m_{\max,1},b_1]$ and schedule the bags in $\mathcal{A}(m_{\max,1},b_1,m_{\max,2},b_2) \cap \Tilde{\bags}(k-1, m_{\max,2}, b_2)$ arbitrarily.
        \item If $m \in (m_{\max,1},m_{\max}]$, then schedule the bags $\mathcal{A}(m_{\max,1},b_1,m_{\max,2},b_2) \cap \Tilde{\bags}(k-1, m_{\max,2}, b_2)$ on $m_{\max,2}$ machines identical to the schedule corresponding to $\Tilde{\bags}(k-1, m_{\max,2}, b_2)$ and schedule the bags in $\mathcal{A}(m_{\max,1},b_1,m_{\max,2},b_2) \cap \bags[k,m_{\max,1},b_1]$ on the remaining $m_{\max,1}$ machines identical to the schedule corresponding to $\bags[k,m_{\max,1},b_1]$ on $m_{\max,1}$ machines.
    \end{itemize}
    Notice that in the scenarios where $m \in [1,m_{\max,1}]$
    \begin{align}
        \label{lem:rounding_ineq3}
        l(\mathcal{A}(m_{\max,1},b_1,m_{\max,2},b_2),m)
            & \geq l(\bags[k,m_{\max,1},b_1],m),
    \end{align}
    and that, by \Cref{cor:headgap_empty}, in the scenarios where $m \in (m_{\max,1},k]$
    \begin{align}
        \label{lem:rounding_ineq4}
        l(\mathcal{A}(m_{\max,1},b_1,m_{\max,2},b_2),m)
            & \geq l(\bags(k-1,m_{\max,2},b_2),m).
    \end{align}

    Among the constructed candidate set of bags $\mathcal{A}(m_{\max,1},b_1,m_{\max,2},b_2)$, we store in $\mathcal{C}[k-1, m_{\max}, b]$ the one that maximizes $v_{m_{\max}}(\mathcal{A}(m_{\max,1},b_1,m_{\max,2},b_2))$ together with its corresponding schedule.
    Observe that because each bag in $\bags[k,m_{\max,1},b_1]$ and each bag in $\bags(k-1,m_{\max,2},b_2)$ is either empty or contains only jobs whose processing time is in the same extended interval, it follows that each bag in $\bags[k-1, m_{\max}, b]$ is either empty or contains only jobs whose processing time is in the same extended interval, satisfying the first property of the DP table.

    Let $m_{\max,1}$ be the largest number of machines such that $\opt(\optbags,m_{\max,1}) \in U_{k}$ and let $m_{\max,2}$ be the number of scenarios $m$ for which $\opt(\optbags,m) \in \Tilde{I}^+_{k-1}$, and let $b_1$ be the number of bags in $\optbags$ with jobs from $H_{k}$ and $b_2$ be the number of bags in $\optbags$ with jobs from $J_{k-1}$.
    Thus, by the second property of our DP table, we have that 
    \begin{align}
        \label{lem:rounding_ineq2}
        \sum_{m=1}^{m_{\max,1}} q_m \cdot l(\bags[k, m_{\max,1}, b_1],m)
                & \geq \frac{1}{(1 + \epsilon)\alpha} \sum_{m=1}^{m_{\max,1}} q_m \opt(\optbags,m),
    \end{align}
    Let $m_{\max}' = m_{\max,1} + m_{\max,2}$ and $b' = b_1 + b_2$, and observe that $m_{\max}'$ is the largest number of machines such that $\opt(\optbags,m_{\max}') \in U_{k-1}$ and that $b'$ is the number of bags in $\optbags$ with jobs from $H_{k-1}$.
    We will show that $\mathcal{C}[k-1, m_{\max}', b']$ satisfies the second property of the DP table.
    
    Consider the constructed subproblem from which we obtained the $(1+\epsilon)\alpha$-approximate solution $\Tilde{\bags}(k-1,m_{\max,2},b_2)$.
    Let $\Tilde{\optbags}(k-1, m_{\max,2}, b_2)$ denote an optimal set of bags for this subproblem, let $\Tilde{O}(\Tilde{\optbags}(k-1, m_{\max,2}, b_2), m)$ denote the value of its corresponding optimal schedule for the scenario with $m$ machines.
    Recall that, as a consequence of \Cref{lemma:m_m1_machines_only_jobs_from_J_i}, we assumed that for each scenario $m \in [m_{\max,1},m_{\max,2}]$, there are at least $m - m_{\max,1} + 1$ machines containing only bags with jobs from $J_{k-1}$.
    Therefore, the expected value of an optimal solution to the original instance cannot be strictly greater than the expected value of the optimal solution to the subproblem, i.e., $\sum_{m=1}^{m_{\max,2}} \Tilde{q}_{m} \cdot \Tilde{O}(\Tilde{\optbags}(k-1, m_{\max,2}, b_2), m) \geq \sum_{m=1}^{m_{\max,2}} \Tilde{q}_{m} \cdot \opt(\optbags,m_{\max}'-m_{\max,2}+m)$.
    Hence,
    \begin{align}
        \label{lem:rounding_ineq1}
        \;& \sum_{\mathclap{\substack{m=m_{\max}-m_{\max,2}+1}}}^{m_{\max}'} q_{m} \cdot l(\bags(k-1, m_{\max,2}, b_2),m)\\
             = \;& \sum_{m=1}^{m_{\max,2}} q_{k'-m_{\max,2}+m} \cdot l(\bags(k-1, m_{\max,2}, b_2),m) \nonumber \\
             = \;& Q^{m_{\max}'}_{m_{\max}'-m_{\max,2}+1} \sum_{m=1}^{m_{\max,2}} \Tilde{q}_{m} \cdot l(\bags(k-1, m_{\max,2}, b_2),m) \nonumber \\
             \geq \;& \frac{Q^{m_{\max}'}_{m_{\max}'-m_{\max,2}+1}}{(1+\epsilon) \alpha} \sum_{m=1}^{m_{\max,2}} \Tilde{q}_{m} \cdot \Tilde{O}(\Tilde{\optbags}(k-1, m_{\max,2}, b_2), m) \nonumber \\
            \geq \;& \frac{Q^{m_{\max}'}_{m_{\max}'-m_{\max,2}+1}}{(1+\epsilon) \alpha} \sum_{m=1}^{m_{\max,2}} \Tilde{q}_{m} \cdot \opt(\optbags,m_{\max}'-m_{\max,2}+m) \nonumber \\
              =  \;& \frac{1}{(1+\epsilon) \alpha} \sum_{m=k'-m_{\max,2}+1}^{m_{\max}'} q_{m} \cdot \opt(\optbags,m).
    \end{align}
    Therefore, for $m_{\max}'$ and $b'$ we have that
    \allowdisplaybreaks
    \begin{align*}
        v_{m_{\max}}(\bags[k-1, m_{\max}', b'])
            & \;\;= \max_{m_{\max,1}',b'_1,m_{\max,2},b'_2} \{v_{k}(\mathcal{A}(m_{\max,1}',b'_1,m_{\max,2}',b'_2))\} \\
            & \;\;\geq v_{m_{\max}}(\mathcal{A}(m_{\max,1},b_1,m_{\max,2},b_2)) \\
            % & = \sum_{m = 1}^m_{\max} q_{m} \cdot l(\mathcal{A}(m_{\max,1},b_1,m_{\max,2},b_2),m) \\
            & \;\;= \sum_{m = 1}^{m_{\max,1}} q_{m} \cdot l(\mathcal{A}(m_{\max,1},b_1,m_{\max,2},b_2),m) \\&\hspace{30pt} + \sum_{m = m_{\max}'-m_{\max,2}+1}^{m_{\max}'} q_{m} \cdot l(\mathcal{A}(m_{\max,1},b_1,m_{\max,2},b_2),m) \\
            & \stackrel{(\ref{lem:rounding_ineq2}),(\ref{lem:rounding_ineq1})}{\geq} \frac{1}{(1+\epsilon) \alpha} \sum_{m=1}^{m_{\max,1}} q_{m} \cdot \opt(\optbags,m) \\&\hspace{30pt} + \frac{1}{(1+\epsilon) \alpha} \sum_{m=m_{\max}'-m_{\max,2}+1}^{m_{\max}'} q_{m} \cdot \opt(\optbags,m) \\
            & \;\;= \frac{1}{(1+\epsilon)\alpha} \sum_{m=1}^{m_{\max}'} q_{m} \opt(\optbags,m),
    \end{align*}
    and thus $\mathcal{C}[k-1, m_{\max}', b']$ satisfies the second property as required.

    We conclude that we can compute the DP table and obtain an $(1+\epsilon)\alpha$-approximate solution $\mathcal{C}[1, M, M]$ in polynomial time, while losing a factor of at most $1+\mathcal{O}(\epsilon)$ due to our assumptions by \Cref{cor:ingnore_eps_scenarios} and \Cref{lem:bags_restricted_to_intervals}, \Cref{lem:bags_in_interval_dominates_opt_in_interval}, \Cref{lem:headgap_empty}, and~\Cref{lemma:m_m1_machines_only_jobs_from_J_i}. 
\end{proof}

\subsection{Proof of \Cref{lem:sc_root_size}}
\lemscrootsizeassign*
\begin{proof}
Recall that there are at most $3\log_{1+\epsilon}(1/\epsilon)$ many possible values $\ell \in \mathbb{N}$ such that $(1+\epsilon)^\ell \in I_{\kmax}$ or $(1+\epsilon)^\ell \in I_{\kmax-1}$. Therefore, there are $6\log_{1+\epsilon}(1/\epsilon)$ many distinct size estimates $\ell(B)$ for a bag $B\in \optbags[\kmax]\cup\optbags[\kmax-1]$. For each of these we guess the number of bags in time $\bigO (M^{6\log_{1+\epsilon}(1/\epsilon)})$. Furthermore, we have guessed the bags in $\optbags[\kmax]$ and $\optbags[\kmax-1]$ and by definition of $I_1,\dots,I_{\kmax}$ there are at most $1/\epsilon^6$ such bags. Due to our rounding, the jobs in $I_{\kmax}$ and $I_{\kmax -1}$ have at most $6\log_{1+\epsilon}(1/\epsilon)$ many different processing times. Thus, for each bag $B \in \optbags[\kmax]$  we can guess the number of jobs of each processing time assigned to $B$ in time $\bigO (n^{6\log_{1+\epsilon}(1/\epsilon)})$. Doing this for all bags can be done in time $\bigO (n^{\frac{1}{\epsilon^6}6\log_{1+\epsilon}(1/\epsilon)})$. A similar argument holds for guessing the assignment of jobs in $J_{\kmax-1}\cup J_{\kmax-2}$ to bags in $\optbags[\kmax-1]$. The lemma then follows from the fact that $M < n$. 
\end{proof}

\subsection{Proof of \Cref{lem:sc_root_main}}
\lemscrootmain*
\begin{proof}
Let $m \leq m_{\max}^{(K)}$. We start by computing an assignment of bags $\guessbags[\kmax] \cup \hat{\bags}_{\kmax-1}$ and a set of dummy jobs $J_T$ containing $T$ jobs each with processing time $1$ to the $m$ machines. 
This assignment will either give us a value $\ALG(m)$ or assert that our guess was incorrect. 
The assignment looks as follows. We consider every possibility of assigning the bags $\guessbags[\kmax] \cup \hat{\bags}_{\kmax-1}$ to $m$ machines and assign the dummy jobs $J_T$ by simultaneously filling up the least loaded machines to the second lowest machine load until all dummy jobs are assigned.  
 
Observe that each machine can be assigned at most $\bigO_\epsilon(1)$ many jobs from $\guessbags[\kmax] \cup \hat{\bags}_{\kmax-1}$.
Since there are at most $\bigO_\epsilon(1)$ many different size estimates for these bags, the number of different assignments of a subset of these bags to a single machine is at most $\bigO_\epsilon(1)$. 
Therefore, for each of these possible assignments we can guess how many machines receive this assignment in time $n^{\bigO_\epsilon(1)}$. 
The remaining procedure to distribute the dummy jobs $J_T$ can also be done in time $n^{\bigO(1)}$. 
At the end, among all possibilities to assign the bags $\optbags[\kmax] \cup \optbags[\kmax-1]$, we choose the one which maximizes the minimum machine load after also assigning $J_T$. 
If the minimum machine load of the best candidate solution is less than $(1+\epsilon)^{-1}(\frac{1}{\epsilon})^{3K}$, we assert that our guess was incorrect. If the minimum machine load of the best candidate solution is at least $(1+\epsilon)^{-1}(\frac{1}{\epsilon})^{3K}$, we return this value as $\ALG(m)$.

Next, consider a partition $\bags'\cup\{B_{S}, B_{\bar{S}}\}$ of the jobs $\bigcup_{k=1}^{\kmax-2} J_k$ into $M-|\guessbags[\kmax]|-|\guessbags[\kmax-1]|+2$ bags with the properties stated in the lemma and some fixed $m \leq m^{(\kmax)}_{\max}$. 
We will now show that $\opt(\bags'\cup\guessbags[\kmax]\cup\guessbags[\kmax-1],m)\in [(1+\epsilon)^{-5}\ALG(m),(1+\epsilon)\ALG(m))$. 
To this end, fix an optimal assignment of $\bags'\cup\guessbags[\kmax]\cup\guessbags[\kmax-1]$ to $m$ machines; for simplicity, we drop the dependency on the bags and denote this assignment as well as its objective function value by $\opt(m)$, i.e., \[
    \opt(m) := \opt(\bags'\cup\guessbags[\kmax]\cup\guessbags[\kmax-1],m) \, .
\] 

Let $\ALG'$ be the value (and, by overloading notation, also the assignment) obtained by the procedure to compute $\ALG(m)$ when fixing the assignment of the bags $\bags'\cup\guessbags[\kmax]\cup\guessbags[\kmax-1]$ to be the same as in $\opt(m)$.
We will first show that $\opt(m) \in [(1+\epsilon)^{-2}\ALG',(1+\epsilon)\ALG']$.

We first prove that $\opt(m) \leq (1+\epsilon)\ALG'$. 
To this end, we distribute the total processing time $p(\bags')$ of bags $\bags'$ as we distributed the set of dummy jobs $J_T$. 
Since $T$ is an upper bound on $p(\bags')$, each machine receives no more dummy jobs than in $\ALG'$. 
As $\opt(m)$ must assign the processing volume $p(\bags')$ according to the bags $\bags'$, the minimum machine load in $\ALG'$ is at least the minimum machine load in $\opt(m)$; in fact, the former can be viewed as a relaxation of the latter. 
Thus, it follows that \[
    \opt(m) \leq (1+\epsilon)\ALG'\, , 
\]
where the factor $(1+\epsilon)$ is only due to the size estimates of the bags $\bags'\cup\guessbags[\kmax]\cup\guessbags[\kmax-1]$.

Next, we prove that $\opt(m) \geq (1+\epsilon)^{-2}\ALG'$. 
Suppose for the sake of contradiction that $(1+\epsilon)^2\opt(m) < \ALG'$. 
Observe that the total processing time of the dummy jobs available to $\ALG'$ is at most $(1+\epsilon)p(\bags')$. 
Therefore, by reducing the amount of dummy jobs equally on every machine such that we use exactly $|p(\bags')|$ dummy jobs, we get a solution $\ALG'' \geq (1+\epsilon)^{-1}\ALG'$ (conflating again the assignment with its objective function value). 
We now consider the bags $\bags'$ and assign them greedily to the machines such that each machine receives almost the same amount of dummy jobs as in the assignment underlying $\ALG''$. 
Formally, we start with a machine with the smallest number of dummy jobs and greedily add bags from $\bags'$ until the next bag would exceed the number of dummy jobs assigned to this machine in the assignment $\ALG''$. 
Since, for every bag $B \in \bags'$, we have $p(B) \leq \epsilon \opt(m)$ by assumption, this way we find an integer assignment of the bags in $\bags'$ with value at least 
\begin{align*}
    \ALG'' - \eps \opt(m)     & \geq (1+\epsilon)^{-1}\ALG' - \eps \opt(m) > (1+\eps) \opt(m) - \eps \opt(m) \\
    & = \opt(m) \, , 
\end{align*}
which contradicts the optimality of the original assignment. Therefore, we must have that
\[
    \opt(m) \geq (1+\epsilon)^{-2}\ALG' \, . 
\]

It remains to show $\opt(m)\in [(1+\epsilon)^{-5}\ALG(m),(1+\epsilon)\ALG(m)]$. 
Observe that our procedure to compute $\ALG(m)$ ensures $\ALG(m) \geq \ALG'$. 
Thus, by the statements shown above, it follows that 
\[
    \opt(m) \leq (1+\eps) \ALG(m) \,.
\] 

We will now argue that $\ALG(m) \leq (1+\epsilon)^3\ALG'$ which implies that $\opt(m) \geq (1+\epsilon)^{-5}\ALG(m)$ as desired. 
Suppose for the sake of contradiction that $\ALG(m) > (1+\epsilon)^{3}\ALG'$. 
Thus, $\ALG(m) > (1+\epsilon)^{2}\opt(m)$.
We know that $p(\bags') \geq (1+\epsilon)^{-1} T$. Hence, by increasing every bag in $\bags'$ equally such that $p(\bags') = T$, we increase each bag by at most a factor of $(1+\epsilon)$. 
Let $\bags''$ denote these modified bags. 
We can now use $\bags''$ to find an assignment of the bags $\guessbags[\kmax]\cup\guessbags[\kmax-1]$ and $\bags'$: We first assign $\guessbags[\kmax]\cup\guessbags[\kmax-1]$ as in $\ALG(m)$ and then greedily assign the bags $\bags''$ to machines until the next bag would increase the load of the machine beyond its load in $\ALG(m)$. 
We then assign any remaining bags from $\bags''$ arbitrarily. 
Observe that, since each bag has size at most $(1+\epsilon) \epsilon \left(\frac{1}{\epsilon}\right)^{3\kmax} \leq \eps \opt(m)$, this assignment yields an objective function value of at least $\ALG(m) - \eps\opt(m)$. 
Now, replacing bags $\bags''$ with $\bags'$ yields an assignment with objective function value at least 
$(1+\epsilon)^{-1}\ALG(m) - \eps \opt(m) > \opt(m)$, which contradicts the optimality of $\opt(m)$. 
Therefore, \[
\ALG(m) \leq (1+\epsilon)^3\ALG' \leq (1+\eps)^5 \opt(m)      \, ,
\]
using the above shown relationship between $\ALG'$ and $\opt(m)$.
Combining with the above, it follows that $\opt(m)\in [(1+\epsilon)^{-5}\ALG(m),(1+\epsilon)\ALG(m))$ as desired. 
Since the arguments above hold for an arbitrary $m\leq m^{(\kmax)}_{\max}$, this concludes the proof of the lemma.
\end{proof}

%%%%%%%%%%%%%%%%%%%%%%%%%%%%%%%%%%%%%%%%%%%%%%%%%%%%%%%%%%%%%%%%%%%%%%

\subsection{Proof of \Cref{lem:lmscassginDP}}
\lemscassignDP*
\begin{proof}
Observe that by definition of the currently considered DP cell, we are given values $s_\ell$ for every $\ell \in \mathbb{N}$ such that $(1+\epsilon)^\ell \in I_k$ indicating the number of bags with a size estimate corresponding to $\ell$ and that $s_\ell \leq M$. Furthermore, there are at most $6\log_{1+\epsilon}(1/\epsilon)$ many distinct processing times in the intervals $J_{k}$ and $J_{k-1}$ and a bag in $B\in \optbags[k]$ contains at most $1/\epsilon^6$ jobs from these intervals as otherwise the total size of the bag is larger than $\frac{1}{\epsilon}^{3k + 3}$. Therefore, the number of different assignments of jobs from $J_{k}$ and $J_{k-1}$ to such a bag $B$ is at most
\[
(1/\epsilon^6)^{6\log_{1+\epsilon}(1/\epsilon)}.
\]
We now guess for every $\ell \in \mathbb{N}$ such that $(1+\epsilon)^\ell \in I_k$ how many of the $s_\ell$ bags contain each of the possible different assignments. As there are at most $6\log_{1+\epsilon}(1/\epsilon)$ many values of $\ell$ this can be done in time
\[
M^{(1/\epsilon^6)^{6\log_{1+\epsilon}(1/\epsilon)}\cdot 6\log_{1+\epsilon}(1/\epsilon)}.
\]
The lemma then follows from the fact that $M < n$. 
\end{proof}

\subsection{Proof of \Cref{lem:sc_DP_main}}
\lemscDPmain*
\begin{proof}
We prove this lemma using a similar line of arguments as in the proof of \Cref{lem:sc_root_main}.
We start by describing how to compute the vector $\left(\ALG(m)\right)_{m=m_{\min}^{(k)}}^{m_{\max}^{(k)}}$ or assert that our guess was incorrect. 
To this end, consider some $m\in \{m^{(k)}_{\min}, \ldots, m^{(k)}_{\max}\}$. 
We now compute an assignment of bags $\bags_L \cup \guessbags[k] \cup \hat{\bags}_{k-1}$ and a set of dummy jobs $J_T$ containing $T$ jobs each with processing time $1$ to the $m$ machines. 
This assignment will either give us a value $\ALG(m)$ or assert that our guess was incorrect. 

We first assign the bags $\bags_L$ such that each machine receives at most one such bag. 
Since $p(B) \geq \left(\frac{1}{\epsilon}\right)^{3k+3}$ for each $B \in \bags_L$, the load on each of these machines is sufficiently large. 
Therefore, we can continue with $m-\noBags$ machines and focus on the bags $\guessbags[k] \cup \hat{\bags}_{k-1}$ and the set of dummy jobs $J_T$. 
For the sake of a simpler exposition, we set $m' := m-\noBags$ for the remainder of the proof as this is the ``relevant'' number of machines. 
Now we use the same procedure as in the proof of \Cref{lem:sc_root_main}.
We consider every possibility of assigning the bags $\guessbags[k] \cup \hat{\bags}_{k-1}$ to $m'$ machines and assign the dummy jobs $J_T$ as in \Cref{lem:sc_root_main}, i.e., simultaneously filling up the least loaded machines to the second lowest machine load until all dummy jobs are assigned.  
Observe that each machine can be assigned at most $\bigO_\epsilon(1)$ many jobs from $\guessbags[k] \cup \hat{\bags}_{k-1}$.
Since there are at most $\bigO_\epsilon(1)$ many different size estimates for these bags, the number of different assignments of a subset of these bags to a single machine is at most $\bigO_\epsilon(1)$. 
Therefore, for each of these possible assignments we can guess how many machines receive this assignment in time $n^{\bigO_\epsilon(1)}$. 
The remaining procedure to distribute the dummy jobs $J_T$ can also be done in time $n^{\bigO(1)}$. 
At the end, among all possibilities to assign the bags $\optbags[\kmax] \cup \optbags[\kmax-1]$, we choose the one which maximizes the minimum machine load after also assigning $J_T$. 
If the minimum machine load of the best candidate solution is less than $(1+\epsilon)^{-1}(\frac{1}{\epsilon})^{3k}$, we assert that our guess was incorrect. If the minimum machine load of the best candidate solution is at least $(1+\epsilon)^{-1}(\frac{1}{\epsilon})^{3k}$, we return this value as $\ALG(m)$.

Next, we fix $m$ and prove the desired bound on $\opt(\bags_{L}\cup\hat{\bags}_{k}\cup\hat{\bags}_{k-1}\cup\bags',m)$ assuming that our procedure above computed the vector $\left(\ALG(m)\right)_{m=m_{\min}^{(k)}}^{m_{\max}^{(k)}}$. 
To this end, we drop the dependency on the set of bags and let 
\[
\opt(m) :=  \opt(\bags_{L}\cup\hat{\bags}_{k}\cup\hat{\bags}_{k-1}\cup\bags',m) \,  ;
\]
we also use $\opt(m)$ to denote a fixed assignment of $\bags_{L}\cup\hat{\bags}_{k}\cup\hat{\bags}_{k-1}\cup\bags'$ to $m$ machines attaining this objective function value. 
Similarly, we let $\ALG(m)$ also denote one assignment computed by our procedure attaining an objective function value of $\ALG(m)$. 

Observe that the bags $\bags_L$ are assigned in $\ALG(m)$ as in $\opt(m)$ up to a permutation of machines for any partition of the jobs $J_{1}\cup...\cup J_{\kmax-2}$ into $M-\noBags-|\guessbags[k]|-|\guessbags[k-1]|+1$ denoted by $\bags'\cup\{B_{S'}\}$. 
Therefore, for any such $\bags'\cup\{B_{S'}\}$ satisfying the conditions of the lemma, it suffices to compare the minimum machine load in $\opt(m)$ restricted to the machines without a bag from $\bags_L$ to the minimum machine load in $\ALG(m)$ under the same restriction. 
This is exactly the statement we have already proven in \Cref{lem:sc_root_main}. 
Hence, we do not repeat the proof here. 

It remains to argue that we can indeed identify the best DP cell $\hat\calC$ fitting our guess in polynomial time. 
To this end, recall that there are at most $3\log_{1+\epsilon}(1/\epsilon)$ many possible values $\ell \in \mathbb{N}$ such that $(1+\epsilon)^\ell \in I_{k-1}$. 
Therefore, there are $3\log_{1+\epsilon}(1/\epsilon)$ many distinct size estimates $\ell(B)$ for a bag $B\in \optbags[k-1]$. 
Similar as in the proof of \Cref{lem:lmscassginDP}, the number of different bag sizes in $I_{k-1}$ is bounded by $n^{\bigO_\eps(1)}$ and for a given set of bag sizes, there are at most $n^{\bigO_\eps(1)}$ many assignments of jobs in $J_{k-1}\cup J_{k-2}$ to those bags. 
Hence, there are at most $n^{\bigO_\eps(1)}$ cells $\hat\calC$ that we need to evaluate using the above described algorithm. 
Therefore, in polynomial time, we can select the best DP cell.
\end{proof}

\subsection{Proof of \Cref{thm:santa}}

In order to prove \Cref{thm:santa}, we first prove the correctness before analyzing the running time of the DP described in \Cref{sec:santaclaus}.

To this end, we fix a particular optimal solution $\optbags$. 
We define a special DP cell with respect to $k \in [\kmax-1]$ as the DP cell with first entry $k$ such that all other parameters match the optimal solution. Formally, a DP cell $\mathcal C_k$ is \emph{special with respect to} $k \in [\kmax-1]$ if the following holds
\begin{itemize}
    \item $k$
    \item $ \noBags      = |\optbags[k+1]\cup \dots \cup \optbags[\kmax]|$
    \item $M_{k} = |\optbags[k]|$
    \item For each $m \leq m_{\min}^k - 1$, it holds that $\opt(\optbags,m) \in I_{k+1}\cup \ldots \cup I_{K}$. 
    \item For each $\ell \in \mathbb{N}$ with $(1+\epsilon)^{\ell} \in I_k$,  $s_\ell = |\{B\in \optbags[k]: p(B) \in [(1+\epsilon)^\ell,(1+\epsilon)^{\ell +1}) \}|$.
    \item For each $\ell \in \mathbb{N}$ with $\lceil(1+\epsilon)^{\ell}\rceil \in I_k$, $a_\ell$ counts the jobs with processing time $\lceil(1+\epsilon)^{\ell}\rceil$ assigned to bags in $\optbags[k+1]$.
    \item $S = \sum_{B \in \optbags[k+1]\cup \dots \cup \optbags[\kmax]}\max\{\lceil (1+\epsilon)^\ell\rceil-p^{+}(B),0\}$, where $p^{+}(B)$ is the total processing time of jobs from $J_{k+1} \cup \dots \cup J_{\kmax}$ assigned to $B \in \optbags[k+1] \cup  \ldots \cup \optbags[\kmax]$.    
\end{itemize}

Since the parameters above are well-defined for any choice of $k\in [\kmax-1]$, there exists at least one DP cell $\mathcal{C}$ which is special w.r.t. to $k$. 
Combining this with \Cref{lem:sc_DP_main}, we prove the following lemma.

\begin{lemma}\label{lem:sc_DP_profits}
For any $k\in [\kmax-1]$, the following two statements are true:
\begin{enumerate}[(i)]
    \item The value stored in the corresponding special cell $\mathcal{C}_k$ satisfies
$$
    \mathrm{profit}(\mathcal{C}_k) \geq (1+\epsilon)^{-1}\sum_{m=m^{(k)}_{\min}(\mathcal{C}_k)}^M\opt(\optbags,m) \, .
$$ 
\item There is a partial solution $\bigcup_{k'=1}^{k} \guessbags[k']$ which 
\begin{itemize}
    \item uses a subset of $\bigcup_{k'=1}^{k} J_{k'}$ to fill each bag in $\bigcup_{k'=1}^{k} \guessbags[k']$ to a level of at least $(1+\eps)^{\ell(B) - 1}$,
    \item leaves enough total volume of jobs from $\bigcup_{k'=1}^{k} J_{k'}$ (given by $S$ and $(a_\ell)_{\ell \in L_k}$) necessary to fill larger bags, 
    \item and in combination with the correct number of larger bags (determined by parameter $\noBags$ of the cell) achieves an objective function value of at least $(1+\eps)^{-5} \mathrm{profit}(\calC_{k-1})$.
\end{itemize}
\end{enumerate}

\end{lemma}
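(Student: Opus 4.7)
The proof is by induction on $k$ from $k=1$ up to $k=\kmax-1$, establishing statements~(i) and~(ii) simultaneously. The overarching observation is that the special cell $\calC_k$ is by definition a valid DP cell, and the special cell $\calC_{k-1}$ satisfies the transition constraint~\eqref{eq:sc_dp_next} when substituted for $\hat{\calC}$ during the evaluation phase of $\calC_k$: its parameters $\hat{M}_{k-1}$, $m_{\min}^{(k-1)}=m_{\max}^{(k)}+1$, $(\hat{s}_\ell)_{\ell\in L_{k-1}}$, $(\hat{a}_\ell)_{\ell\in L_{k-1}}$, and $\hat{S}$ match those prescribed by~\eqref{eq:sc_dp_next}, with $\hat{S}\geq\bar{S}$ following directly from the definitions of $S$ and $\bar{S}$ in terms of $\optbags$. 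Hence, when the DP evaluates $\calC_k$, the particular candidate ``guess matching $\optbags[k]$ together with $\calC_{k-1}$'' is among the combinations considered, so $\profit(\calC_k)$ is at least the value attained by that combination.

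For statement~(i), the inductive step then yields
\[
    \profit(\calC_k) \geq \sum_{m=m_{\min}^{(k)}}^{m_{\max}^{(k)}} q_m \ALG(m) + \profit(\calC_{k-1}).
\]
To lower-bound $\ALG(m)$, we apply \Cref{lem:sc_DP_main} with $\guessbags[L]\cup\guessbags[k]\cup\guessbags[k-1]$ matching $\bigcup_{k'\geq k-1}\optbags[k']$ up to $(1+\eps)$-rounding, and with $\bags'$ furnished by statement~(ii) at level $k-1$ via the inductive hypothesis. The two bullets of~(ii) at level $k-1$ guarantee exactly the small-size and total-volume hypotheses of \Cref{lem:sc_DP_main}, which then yields $\ALG(m)\geq (1+\eps)^{-1}\opt(\optbags,m)$ for every $m\in\{m_{\min}^{(k)},\ldots,m_{\max}^{(k)}\}$. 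Combining with the inductive bound on $\profit(\calC_{k-1})$ and using $m_{\min}^{(k-1)}=m_{\max}^{(k)}+1$ gives~(i). The base case $k=1$ applies the same reasoning with $\profit(\calC_0):=0$ and $\bags'$ empty.

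For statement~(ii) at level $k$, we take $\bigcup_{k'=1}^{k-1}\guessbags[k']$ to be the partial solution supplied by the inductive hypothesis and define $\guessbags[k]$ from the guessed job-to-bag assignment on the size-estimates prescribed by $\calC_k$. The first bullet of~(ii) is obtained via the greedy filling argument given in the text preceding \Cref{lem:sc_DP_main}: jobs from $\bigcup_{k'\leq k-2}J_{k'}$ not already consumed by $\bigcup_{k'=1}^{k-1}\guessbags[k']$ are added to each bag $B\in\guessbags[k]$ until the missing volume $p^{-}(B)$ is covered up to an additive slack of at most $(1/\eps)^{3k-6}$, which by definition of $I_k$ is absorbed into the factor $(1+\eps)^{-1}$ on the achieved bag size. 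The second bullet follows from the definition of $\bar{S}$ combined with the inequality $\hat{S}\geq\bar{S}$ of~\eqref{eq:sc_dp_next}. The third bullet is the lower-bound direction of \Cref{lem:sc_DP_main}, namely $\opt(\cdot,m)\geq (1+\eps)^{-5}\ALG(m)$; composing this per-scenario bound with the inductive guarantee yields the claimed $(1+\eps)^{-5}$ factor.

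The main obstacle is the interface between (i) and (ii) across levels of the induction: to invoke \Cref{lem:sc_DP_main} during the inductive step for~(i) at level~$k$, we need the partial solution from~(ii) at level~$k-1$ to satisfy the size bound $p(B)\leq \eps\cdot(1/\eps)^{3k}$ for every $B\in\bags'$ and the volume bound $p(\bags')\geq (1+\eps)^{-1}T$ for the correct value of $T$. Both follow from the $(1/\eps)^{3}$-gap between consecutive intervals, but require careful bookkeeping of the reserved volume $S$ and the per-size counts $(a_\ell)$ as these parameters change between levels; this coupling is precisely the reason that statements~(i) and~(ii) must be proved together rather than separately.
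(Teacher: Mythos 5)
Your proof takes essentially the same route as the paper's: an induction on $k$ from the base case $k=1$ upward, establishing statements~(i) and~(ii) simultaneously, with each inductive step realized by exhibiting the candidate combination ``the guess matching $\optbags[k]$ paired with $\hat\calC = \calC_{k-1}$'' inside the evaluation of $\calC_k$ and invoking \Cref{lem:sc_DP_main} on it, then stitching the profit ranges together via $m_{\min}^{(k-1)} = m_{\max}^{(k)}+1$. Your last paragraph makes explicit a point the paper leaves implicit—that the smallness and total-volume conditions required of $\bags'$ in \Cref{lem:sc_DP_main} are what tie (i) and (ii) together across levels—but this is commentary on the same argument, not a different one.
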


\begin{proof}
    Recall that the algorithm underlying \Cref{lem:sc_DP_main} evaluates DP cells by testing all feasible guesses matching the respective cell's parameters and storing the one attaining the best objective function value. Hence, for proving the lemma, it is sufficient to show that for each $k \in [\kmax-1]$ there is a particular guess considered during the evaluation of $\calC_k$ that matches the fixed optimal solution $\optbags$ and that corresponds to an actual achievable solution. 

    We prove the existence of this guess by induction. 
    First, consider $k=1$ and the corresponding special cell $\mathcal{C}_1$.
    Since there is no remaining subproblem, we know that for $m^{(1)}_{\max}(\calC_1)$ the only feasible guess is $M$ and guess the correct value for $m^{(1)}_{\min}(\calC_1)$.
    Further, we set $\guessbags[1] = \optbags[1]$ and assign $J_1$ to $\guessbags[1]$ as to $\optbags[1]$. The remaining parameters are set as defined above for the $\calC_{1}$ implied by $\optbags$. Clearly, this is a valid guess matching the parameters of $\calC_1$, and hence, it is considered when evaluating $\calC_1$. 
    Let $\ALG^*(m)$ denote the objective function value achieved by the procedure of \Cref{lem:sc_DP_main} when considering the above described guess. 
    By \Cref{lem:sc_DP_main}, we have 
    \[
        \mathrm{profit}(\mathcal{C}_1) \geq \sum_{m=m^{(1)}_{\min(\calC_1)}}^M q_m \ALG^*(m) \geq (1+\epsilon)^{-1}\sum_{m=m^{(1)}_{\min(\calC_1)}}^M q_m\opt(\optbags,m) \, .
    \]
    Furthermore, to the correct assignment of jobs from $J_1$ to $\guessbags[1]$, the jobs from $J_1$ not packed in $\guessbags[1]$ or reserved for $\optbags[2]$ contribute enough volume to cover each bag $B \in \bigcup_{k=3}^{\kmax} \optbags[k]$ up to at least $(1+\eps)^{-1} p(B)$ when the jobs from $\bigcup_{k=2}^{\kmax} J_k$ are assigned as in the optimal solution and the jobs from $J_1$ are assigned greedily. By~\Cref{lem:sc_DP_main}, combining this with $M_{2,\ldots,\kmax}$ many large bags achieves a partial objective function value of at least 
    \[
        (1+\eps)^{-5}\sum_{m=m^{(1)}_{\min}(\calC_1)}^M q_m \ALG^*(m) \,.
    \]

    Thus, we can assume that both statements hold for all $k' \leq k-1$ and consider $k \leq \kmax -1$ with special cell $\calC_k$. 
    As for $k=1$, we now define a feasible guess considered during the evaluation of $\calC_k$. 
    To this end, let $\guessbags[k] = \optbags[k]$ and assign $J_k \cup J_{k-1}$ to $\guessbags[k]$ as to $\optbags[k]$. 
    The remaining subproblem is given by $\hat \calC := \calC_{k-1}$, which satisfies \eqref{eq:sc_dp_next} by definition of special cells. 
    By our induction hypothesis and the fact that we combine $\calC_{k-1}$ with $\calC_k$, the jobs from $J_{k-1}$ we have just assigned to $\guessbags[k]$ were not packed in $\guessbags[k-1]$ in the previous step. 
    Further, using again our induction hypothesis, we have enough volume from $\bigcup_{k'=1}^{k-2} J_{k'}$ left to cover the bags in $\guessbags$ up to at least a $(1+\eps)$ fraction of their size. 
    By greedily assigning these jobs, we do not use more volume than the optimal solution did, which implies that those jobs combined with the not yet packed jobs from $J_{k-1}$ have enough volume to cover each bag $B \in \bigcup_{k'=k+1}^{\kmax} \optbags[k']$ up to at least $(1+\eps)^{-1} p(B)$ when the jobs from $\bigcup_{k'=k+1}^{\kmax} J_{k'}$ are assigned as in the optimal solution and the jobs from $J_1$ are assigned greedily.
    For $m \in \{m_{\min}^{(k)}, \dots, m_{\min}^{(k-1)}-1\}$, let $\ALG^*(m)$ denote again the objective function value attained when the procedure from \Cref{lem:sc_DP_main} evaluates the just defined guess, where $m_{\min}^{(k-1)}-1$ is the machine parameter from $\hat \calC = \calC_{k-1}$.  
    As discussed above, we have 
    \[
        \mathrm{profit}(C_k) \geq \sum_{m=m_{\min}^{(k)}(\calC_k)}^{m_{\min}^{(k-1)}(\calC_{k-1})-1}q_m \ALG^{*}(m) + \mathrm{profit}(\mathcal{C}_{k-1}) \, ,
    \]
    Combining with our induction hypothesis 
    \[
        \mathrm{profit}(\mathcal{C}_{k-1}) \geq (1+\epsilon)^{-1}\sum_{m=m^{(k-1)}_{\min}(\calC_{k-1})}^M q_{m}\opt(\optbags,m) 
    \]
    and \Cref{lem:sc_DP_main} for $\calC_k$ this implies 
    \[
        \mathrm{profit}(C_k) \geq (1+\epsilon)^{-1}\sum_{m=m_{\min}^{(k)}(\calC_{k})}^{m^{(k-1)}_{\min}(\calC_{k-1})-1} q_m \opt(\optbags,m) + (1+\epsilon)^{-1} \sum_{m=m^{(k-1)}_{\min}(\calC_{k-1})}^{M}q_m\opt(\optbags,m) \, . 
    \]

    Further, by~\Cref{lem:sc_DP_main}, combining this guess with $M_{k+1,\ldots,\kmax}$ large bags and the already defined bags $\bigcup_{k'= 1}^{k-1} \guessbags[k']$ achieves a partial objective function value of at least 
    \[
        (1+\eps)^{-5}\sum_{m=m^{(k)}_{\min}(\calC_{k})}^{m^{(k-1)}_{\min}(\calC_{k-1})-1} q_m \ALG^*(m) \, ,
    \] 
    Combining this with our induction hypothesis implies that adding $M_{k,\ldots,\kmax}$ large bags to $\bigcup_{k'=1}^k \guessbags[k']$ guarantees a partial objective function value of at least 
    \[
        (1+\eps)^{-5}\sum_{m=m^{(k)}_{\min}\calC_{k}}^{M} q_m \ALG^*(m) \, ,
    \]     
    which concludes the proof of the lemma. 
\end{proof}

To prove the correctness of the DP it now remains to be shown that there is a DP cell $\mathcal{C}_{\kmax-1}$ representing the remaining subproblem with respect to the correct initial guess. 
Let 
$$
    \calC_{\kmax}:=\{m^{(\kmax)}_{\max},\left(s_\ell\right)_{\ell:(1+\epsilon)^\ell \in I_{\kmax-1} \cup I_{\kmax}},\left(a_\ell\right)_{\ell:\lceil(1+\epsilon)^\ell\rceil \in I_{\kmax-1} \cup I_{\kmax}},S\}
$$
be the correct initial guess. Consider the DP cell defined by the following parameters (which by the correctness of the guess is the special DP cell $\mathcal{C}_{k-1}$):
\begin{itemize}
\item $\kmax-1$
\item $M_{\kmax}:= \sum_{\ell:(1+\epsilon)^\ell \in I_{\kmax}} s_\ell$
\item $M_{k} := \sum_{\ell:(1+\epsilon)^\ell \in I_{\kmax}-1} s_\ell$
\item $m^{\kmax-1}_{\min}= m^{(\kmax)}_{\max}+1$
\item $\left(s_\ell\right)_{\ell:(1+\epsilon)^\ell \in I_{\kmax-1}}$
\item $\left(a_\ell\right)_{\ell:\lceil(1+\epsilon)^\ell\rceil \in I_{\kmax-1} \cup I_{\kmax}}$
\item $S$
\end{itemize}

To complete the formal definition of the DP and prove its correctness, it remains to combine the solution stored in cell $\mathcal{C}_{k-1}$ and the partial solution obtained by $\calC_\kmax$ to a global solution for the whole problem.
We will now describe a procedure to combine the two. To this end, let $\mathrm{sol}(\mathcal{C}_{k-1})$ be the solution stored in cell $\mathcal{C}_{k-1}$ Denote by $\hat{\bags}_1,\dots,\hat{\bags}_{\kmax-1}$ the bags of $\mathrm{sol}(\mathcal{C}_{k-1})$. We know that the job-to-bag assignment for bags $\hat{\bags}_1,\dots,\hat{\bags}_{\kmax-1}$ is already complete and the only thing that remains is to assign the remaining jobs from $J_1\cup \dots \cup J_{\kmax-2}$ to the bags $\optbags[\kmax]$. By definition of our DP, we know that the total processing time of these jobs is at least $S$ which is the total processing time necessary to fill the $\optbags[\kmax]$ up to the lower bound of their size estimates, i.e., $S:=\sum_{B\in\optbags[\kmax]}\max\{\left\lceil (1+\epsilon)^{\ell(B)}\right\rceil -p^{+}(B),0\}$.
Let $J^{\mathrm{rest}}$ be the set of remaining jobs in $J_1\cup \dots \cup J_{\kmax-2}$. We assign these jobs greedily to $\optbags[\kmax]$ starting with an arbitrary bag $B \in \optbags[\kmax]$ and adding jobs from $J^{\mathrm{rest}}$ to this bag until 
there is a job $j \in J^{\mathrm{rest}}$ such that adding $j$ would make the total size of the bag
exceed $\lceil(1+\epsilon)^{\ell(B)}\rceil$. 
Then, we consider the next bag instead.
We repeat this procedure for every bag and assign possibly remaining jobs arbitrarily. This procedure runs in polynomial time and guarantees that every bag is filled up to a factor of $(1+\epsilon)^{-2}$ of its required size since for each bag $B \in \optbags[\kmax]$ we have
$$p(B) \geq \lceil(1+\epsilon)^{\ell(B)}\rceil - \max_{j \in J^{\mathrm{rest}}}p_j \geq \lceil(1+\epsilon)^{\ell(B)}\rceil - \epsilon\left( \frac{1}{\epsilon}\right)^{3\kmax} 
\ge  (1+\epsilon)^{-2}\lceil(1+\epsilon)^{\ell(B)}\rceil.$$
We conclude this section by proving why this combined solution yields a $(1+\epsilon)$-approximation to the main problem and why it can be computed in polynomial time. First of all, observe that since $G^*_{\kmax}$ is the correct initial guess and consider the values $\ALG^{G^*_{\kmax}}$, computed by \Cref{lem:sc_root_main} for this guess. Then, by \Cref{lem:sc_root_main} we have that independent of the job-to-bag assignment for bags $\hat{\bags}_1,\dots,\hat{\bags}_{\kmax-1}$, the objective function obtained for any $m \in [1,m^{(\kmax)}_{\max}(G^*_{\kmax})]$ is at least 
$$(1+\epsilon)^{-1}\opt(\optbags,m).$$ 
Furthermore, by \Cref{lem:sc_DP_profits} we know that the solution stored in DP cell $\mathcal{C}_{\kmax-1}$ guarantees that the weighted objective values over the range $[m^{\kmax-1}_{\min}(\mathcal{C}_{\kmax-1},M]$ is at least
$$(1+\epsilon)^{-1}\sum_{m=m^{(\kmax-1)}_{\min}(\mathcal{C}_{\kmax-1})}^M\opt(\optbags,m).$$
Thus, the combined solution yields an expected objective function value of at least
$$(1+\epsilon)^{-1}\sum_{m=1}^M\opt(\optbags,m).$$

Finally, observe that the number of different DP cells is bounded by $n^{O_\epsilon(1)}$ and the procedure used to compute the solution for each DP cell runs in time $n^{O_\epsilon(1)}$. This concludes the proof of \Cref{thm:santa}.

\end{document}